\title{Projective Presentations for Lex Modalities}
\author{Mark Damuni Williams}
\begin{document}

\maketitle

\begin{abstract}
  Modalities in homotopy type theory are used to create and access subuniverses of a given type universe. 
  These have significant applications throughout mathematics and computer science, and in particular can be used to create universes in which certain logical principles are true.
  In this work we define presentations of topological modalities, which act as an internalisation of the notion of a Grothendieck topology.
  A specific presentation of a modality gives access to a surprising amount of computational information, such as explicit methods of determining membership of the subuniverse via internal sheaf conditions.
  Furthermore, assuming all terms of the presentation satisfy the axiom of choice, we are able to describe generic and powerful computational tools for modalities.
  This assumption is validated for presentations given by representables in presheaf categories.
  We deduce a local choice principle, and an internal reconstruction of Kripke-Joyal style reasoning.
  We use the local choice principle to show how to relate cohomology between universes, showing that a certain class of abelian groups has cohomolgoy stable between universes.
  We apply the methods to a prominent example, a type theory axiomatising the classifying topos of an algebraic theory, which specialises to give type theories 
  for synthetic algebraic geometry and synthetic higher category theory. 
  We apply the sheaf conditions to show that several presentations of interest are subcanonical, and apply the cohomology methods to show that 
  quasi-coherent modules have cohomology stable between the Zariski, \'etale and fppf toposes.
\end{abstract}

\section{Introduction}

Homotopy type theory (HoTT) is the internal language of $\infty$-topoi.
This means it can effectively be used to describe the synthetic, internal language of different mathematical theories.
For instance Homotopy type theory can be extended via axioms to represent a category for synthetic algebraic geometry~\cite{cherubini2023foundation}, synthetic differential geometry~\cite{myers2022orbifolds} 
or even higher category theory~\cite{gratzer2024directedunivalencesimplicialhomotopy}.
To verify that these axiomatisations are correct, models must be established in certain higher topoi.
This process can be technically difficult, even when it is clear a model does exist.
In recent work a partially internal method of modelling was used~\cite{moeneclaey2024sheaves,cherubini2023foundation}, the idea being to axiomatise a type theory for a presheaf topos, and then to carve out the desired category of sheaves internally.

Working this way has several benefits.
First it can simplify much of the external work, since checking axioms modelled in a presheaf category is often much less technically challenging than checking axioms in a sheaf category.
In our application section we are able to apply strong generic facts about the internal logic of presheaf categories to gain several axioms ``for free''.
Proceeding to carry out the proofs in the sheaf category often becomes easy, and far closer to ordinary logic with modal operators, once we use the internal logic.
Secondly, it can be useful if there are features of a larger universe which it would be helpful to have around.
For instance in the work of~\citeauthor{gratzer2024directedunivalencesimplicialhomotopy}~\cite{gratzer2024directedunivalencesimplicialhomotopy}, it is necessary to work in the category of presheaves of bounded distributive lattices, rather than that of simplicial sets, in order for a certain adjunction to exist.
Carving out of the intended objects, the simplicial sets, can be described via the method here.

In our general theory, we introduce presentations of modalities, to describe presentations of Grothendieck topologies externally.
From these we can describe covers and (higher) sheaf conditions internally.
Further, modifying the presentations to more closely resemble Grothendieck topologies lets us explicitly describe certain
sheafifications, recovering a Kripke–Joyal type semantics internally.

With this theory developed, we apply the method to synthetic algebraic geometry, and synthetic higher category theory.
In the realm of synthetic higher category theory we are able to remove dependencies of two axioms on a result from~\cite{gratzer2024directedunivalencesimplicialhomotopy}, showing that the interval $\mathbb{I}$ is simplicial.
We do this by applying our sheaf conditions to the simplicial modality, and in doing so, reduce the question to pure algebra.

\section*{Acknowledgements}
I want to thank Hugo Moeneclaey for providing much of the initial idea for this work in the special case of synthetic algebraic geometry. 
In particular his internal notion of topology makes the basis for the notion of presentation, and many initial results are his.
I also want to thank my supervisor Ulrik Buchholtz for providing insightful help and guidance throughout this research.

\section{Background}

We begin with a brief recounting of some external topos theory, to justify the definitions in the upcoming section.

In ordinary (1-)topos theory, a Grothendieck topology on a category $\mathcal{C}$ is given by, for each $c \in \mathcal{C}$, a collection of \textbf{sieves} on $c$, which are 
collections of arrows $\{d_i \to c\}_{i : I}$ which are described as ``covering'' $c$.
We then enforce this covering idea by giving the sheaf condition. 
Writing $\yo : \cC \to \mathrm{Psh}(\cC)$ for the Yoneda embedding, the sheaf condition is written as follows:

\begin{definition}
\label{def:sheaf}
    A presheaf $F$ on $\mathcal{C}$ is a sheaf if for all covers $\{d_i \to c\}_{i : I}$ the following diagram is an equaliser:
    \[ F(c) \to \prod_{i : I} F(d_i) \rightrightarrows \prod_{i, j :I}  \hom(\yo_{d_i} \times_{\yo_c} \yo_{d_i}, F) \]
\end{definition}
This intuitively says given a collection $x_i : F(d_i)$ which agree on the intersection of $d_i$ and $d_j$, there is a unique amalgam $x : F(c)$ built up of the $x_i$.

This condition can be described internally to the presheaf category by saying that sheaves $F$ believe the map $\bigsqcup_{i : I} \yo_{d_i} \to \yo_{c}$ to be a surjective map.
In the subcategory of sheaves, the (sheafification of) the previous map \emph{is} an epimorphism.
Furthermore, internally this condition can be given entirely fibrewise, since a map is surjective if and only if all its fibres are inhabited.
Specifically we ask that $\bigsqcup_{i : I} \yo_{d_i} \times_{\yo_{c}} 1$ is inhabited for all maps $1 \to \yo_{c}$. 

Thus, a Grothendieck topology can be described by a collection of objects $T$ which we would like to force to be inhabited.
The additional properties of being a Grothendieck topology show that the terminal object is in $T$ and that $T$ is closed under coproducts. 
This motivates our internal definition of a presentation, subsuming the definition of a Grothendieck topology. 

We now work in HoTT, and use the conventions from the HoTT book \cite{hottbook}, using the univalence axiom and its consequences freely.
We begin with some type theoretic background, developing enough of the general theory of modalities from \cite{spitters2020modalities} for our purposes.

\begin{definition}
    A \textbf{reflective subuniverse} is formed of a subuniverse $\UU_\OO$ of $\UU$, a reflector $\OO : \UU \to \UU_\OO$, and a unit $\eta : \prod_{X : \UU} X \to \OO X$.
    These satisfy \textbf{modal recursion} for all $X : \UU$ and for all $Y : \UU_\OO$ the map 
    \[ - \circ \eta_X : (\OO X \to Y) \to (X \to Y) \]
    is an equivalence.
\end{definition}

Reflective subuniverses can be shown to have many stability properties. 
In particular, they are closed under $\Pi$, $+$, $\times$, and taking equality types.
However, they do not quite constitute logical subuniverses as they are not necessarily stable under $\Sigma$.

\begin{definition}
    A \textbf{modality} is a reflective subuniverse $(\UU_\OO, \OO, \eta)$ satisfying one of the following equivalent conditions:
    \begin{enumerate}
        \item $\UU_\OO$ is closed under $\Sigma$.
        \item $\OO$ satisfies \textbf{modal induction}. That is for all $B : \OO A \to \UU_\OO$, the natural map 
              \[ \prod_{x : \OO A} B(x) \to \prod_{x : A} B(\eta(x))  \]
              is an equivalence.
    \end{enumerate}
\end{definition}

We will often suppress much of the notation when defining a modality. 
We call a type $\textbf{modal}$ if the map $\eta_X$ is an equivalence.
Note that $\UU_\OO$ can be recovered as $\Sigma_{X : \UU} \text{is-modal}(X)$.

There are many natural examples of modalities, for instance, the $n$-truncation modality.
Truncation is an instance of nullification, which will be the basis for all modalities we consider.

\begin{definition}
    Given a type family $B : A \to \UU$, a type $X$ is called \textbf{$B$-null}
    if for all $a : A$ the diagonal map 
    \[ X \to (B(a) \to X) \]
    is an equivalence.
    That is every map $B(a) \to X$ is constant.
\end{definition}

\begin{definition}
    The \textbf{nullification} of $X$ at $B : A \to \UU$ is a type $\OO_B X$ and a natural map 
    $X \to \OO_B X$ such that $\OO_B X$ is $B$-null and so that the natural map satisfies $B$-null recursion.
\end{definition}

The nullification operation exists for any given type family by a higher inductive type construction, 
by taking a type $X$ and manually adds a proof that every map from $B(a) \to X$ is constant.

\begin{lemma}
    Nullification forms a modality.
\end{lemma}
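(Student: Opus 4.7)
The plan is to verify the first condition in the definition of a modality, namely that the subuniverse of $B$-null types is closed under $\Sigma$. Since the excerpt already supplies the reflective subuniverse data ($B$-null types, the reflector $\OO_B$, the unit $\eta$, and $B$-null recursion), this closure will upgrade the structure to a modality.

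The heart of the argument is a closure lemma: if $X$ is $B$-null and $Y : X \to \UU$ is pointwise $B$-null, then $\Sigma_{x : X} Y(x)$ is $B$-null. For each $a : A$ I would build the chain of equivalences
\[ (B(a) \to \Sigma_{x : X} Y(x)) \simeq \Sigma_{f : B(a) \to X} \prod_{b : B(a)} Y(f(b)) \simeq \Sigma_{x : X} \prod_{b : B(a)} Y(x) \simeq \Sigma_{x : X} Y(x), \]
where the first equivalence is the distributivity of mapping-out types over $\Sigma$, the second uses $B$-nullness of $X$ (inverting the constant map $X \to (B(a) \to X)$, which collapses each $f(b)$ to a single $x$), and the third uses pointwise $B$-nullness of $Y$. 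Tracing the definition of each step, the composite sends a constant function $\lambda b.\,(x, y)$ to the pair $(x, y)$, so it is precisely the inverse of the diagonal $\Sigma_{x : X} Y(x) \to (B(a) \to \Sigma_{x : X} Y(x))$. This establishes $B$-nullness of the total space.

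Given $\Sigma$-closure, modal induction then follows by a standard manoeuvre. For a family $P : \OO_B X \to \UU$ that is pointwise $B$-null and a section $s : \prod_{x : X} P(\eta_X(x))$, the type $T := \Sigma_{z : \OO_B X} P(z)$ is $B$-null; the map $x \mapsto (\eta_X(x), s(x))$ from $X$ to $T$ extends by $B$-null recursion to $\tilde s : \OO_B X \to T$. The composite $\pi_1 \circ \tilde s : \OO_B X \to \OO_B X$ satisfies $\pi_1 \circ \tilde s \circ \eta_X = \eta_X$, so by the uniqueness half of modal recursion it equals the identity; transporting $\pi_2 \circ \tilde s$ along this identification produces the desired dependent section agreeing with $s$ on $\eta$-images.

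The main obstacle is keeping the equivalence chain honest: one has to confirm that the composite equivalence really is the particular map inverse to the diagonal, rather than some unrelated equivalence of the same endpoints, since only this guarantees $B$-nullness in the required sense. The remaining work is a routine use of function extensionality, the universal property of $B$-null recursion, and careful transport management to unify the $\pi_2$ coordinate with the given $s$ on $\eta$-images.
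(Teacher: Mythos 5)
Your proof is correct: the $\Sigma$-closure chain of equivalences and the subsequent derivation of modal induction via the total space $\Sigma_{z : \OO_B X} P(z)$ is exactly the standard argument from the modalities literature (Rijke--Shulman--Spitters). The paper itself states this lemma without proof, treating it as a known result from \cite{spitters2020modalities}, so there is no in-paper argument to compare against; your write-up, including the care taken to check that the composite equivalence really inverts the diagonal (via the retraction-plus-two-out-of-three step) and the transport bookkeeping in the induction principle, supplies precisely the omitted details, and verifying condition (1) of the paper's definition of modality already suffices since the paper asserts the two conditions are equivalent.
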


A modality given by nullification at a family of types is called \textbf{accessible}.

\begin{example}
 $n$-truncation is given by nullification at the $(n+1)$-sphere $S^{n+1}$.
\end{example}

Even modalities don't correspond to the external notion of subtopos, as the $n$-truncation modality shows.
For any $n$ the subuniverse of $n$-truncated types is not a sub-$(\infty,1)$-topos of the category of types: an extra condition is needed for that to be the case.

\begin{definition}
    A \textbf{lex modality} is a modality $\OO$ which preserves pullbacks.
\end{definition}

Lex modalities really do correspond to the external notion of subtopos and as such they behave well with respect to many of the structures in HoTT.
For instance, they preserve truncation levels~\cite{spitters2020modalities}, and the universe of modal types is itself modal.
Somewhat surprisingly, however, subtopoi of $(\infty,1)$-topoi don't all come from Grothendieck topologies, unlike in the $1$-topos theoretic case.
The class which do are called topological, and also admit an internal characterisation.

\begin{definition}
    A modality is called \textbf{topological} if it is given by nullification at a family of propositions.
\end{definition}

\begin{lemma}[\cite{spitters2020modalities}]
    Any topological modality is lex.
\end{lemma}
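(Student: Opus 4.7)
The plan is to use the standard characterization that a modality $\OO$ is lex if and only if, for every modal type $X$ and every $x, y : X$, the identity type $x =_X y$ is again modal. (Equivalently, for every $x : X$ the map $\eta_{x = x}$ gives the loop space structure strictly.) Granted this reduction, since the modality is accessible --- nullification at a family of propositions $P : A \to \UU$ --- being modal is just being $P_a$-null for every $a : A$. So the problem reduces to the following pointwise statement: for any proposition $P$ and any $P$-null type $X$, and any $x, y : X$, the identity type $x =_X y$ is $P$-null.

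To verify this, I would unfold the $P$-null condition as asking that the constant map $(x =_X y) \to (P \to (x =_X y))$ is an equivalence. By function extensionality there is a natural equivalence
\[ (P \to (x =_X y)) \simeq ((\lambda p.\, x) =_{P \to X} (\lambda p.\, y)). \]
Because $X$ is $P$-null, the constant map $X \to (P \to X)$ is an equivalence, sending $x \mapsto \lambda p.\, x$ and $y \mapsto \lambda p.\, y$. Any equivalence induces an equivalence on identity types, so $(x =_X y) \simeq ((\lambda p.\, x) =_{P \to X} (\lambda p.\, y))$. Composing these two equivalences, and checking that the result agrees with the canonical constant map, shows $x =_X y$ is $P$-null. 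Since both equivalences are induced by whiskering with the map $P \to 1$, this naturality check is straightforward.

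The main obstacle is less in the calculation itself --- which is essentially a one-line manipulation once the proposition hypothesis is exploited --- and more in justifying the reduction step. One needs either to cite the known characterization (closure of modal types under identity types is equivalent to lex-ness, proved in \cite{spitters2020modalities}), or to unfold what preservation of pullbacks means and show it follows from modal identity types together with the already-established closure of $\UU_\OO$ under $\Sigma$. In particular, a pullback $A \times_C B$ can be written as $\sum_{a : A} \sum_{b : B} f(a) =_C g(b)$, and preservation of pullbacks by $\OO$ translates into the requirement that applying $\OO$ commutes with this formula; once identity types of modal types are known to be modal, the pullback is already modal and the universal property of $\OO$ concludes. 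Crucially, without the hypothesis that $P$ is a proposition one could only conclude $P \to (x = y)$ is equivalent to an equality in $(P \to X)$, not to $x = y$ itself, so propositionality enters exactly at the step where the constant-function equivalence holds.
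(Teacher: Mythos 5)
Your reduction step is based on a false characterization, and this sinks the argument. Closure of modal types under identity types is \emph{not} equivalent to lexness: it holds automatically for every modality, indeed for every reflective subuniverse (this is one of the basic stability properties, proved in \cite{spitters2020modalities} long before lexness is discussed). A telltale sign is that your own fibrewise computation never actually uses that $P$ is a proposition, despite your closing claim that it does: for \emph{any} type $B$ and any $B$-null $X$, one has $(x =_X y) \simeq (\mathrm{const}_x =_{X^B} \mathrm{const}_y) \simeq (B \to (x =_X y))$, by exactly the two equivalences you write down (the equality of the two \emph{constant} functions is a non-dependent product over $B$, so funext needs no propositionality, and the null-ness of $X$ supplies the other equivalence). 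Consequently the identical argument would ``prove'' that $n$-truncation --- nullification at $S^{n+1}$ --- is lex, which the paper explicitly points out is false. Your pullback remark has the same defect: writing $A \times_C B = \sum_{a : A}\sum_{b : B} f(a) =_C g(b)$ and invoking modal identity types only shows that pullbacks of \emph{modal} types are modal, which again holds for every modality; lexness requires $\OO(A \times_C B) \simeq \OO A \times_{\OO C} \OO B$ for \emph{arbitrary} types, and this fails already for $0$-truncation applied to the fibre sequence $\mathbb{Z} \to 1 \to S^1$.

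The correct identity-type characterization of lexness concerns $\OO$-\emph{connected} types, not modal ones: $\OO$ is lex iff whenever $\OO A$ is contractible, $\OO(x =_A y)$ is contractible for all $x, y : A$ (equivalently, $\OO(x = y) \simeq (\eta(x) = \eta(y))$, the form the paper later cites as \cite[Theorem 3.1.ix]{spitters2020modalities}). Establishing that property for nullification at a family of propositions is where propositionhood genuinely enters, and it is nontrivial --- it is the actual content of the result in \cite{spitters2020modalities}, which this paper cites without reproving. So the gap is not a fixable detail in your calculation (the calculation is fine but proves a statement true of all nullifications); it is that the bridge from that statement to lexness does not exist.
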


\section{Presentations of Modalities}

Now that we have the correct class of modalities, we can now begin to define our notion of presentation.
First we will call a function $T : I \to \UU$ a \textbf{collection of types}, and we will write $X \in T$ for $\exists_{i : I} T(i) = X$.
We will call $T$ a small collection if $I : \UU$, and large otherwise.

\begin{definition}
    A \textbf{presentation} of a lex modality is a collection $T$ of types with $1 \in T$ and $T$ closed under $\Sigma$.
    That is if $X \in T$ and $B : X \to \UU$ has $B(x) \in T$ for $x : X$ then $\sum_{x : X} B(x) \in T$.
\end{definition}

We can take any collection of types, and generate a presentation by simply adding $1$ and taking its $\Sigma$-closure. 
Given a collection of types $T$ we will denote the presentation it generates by $\langle T \rangle$.

\begin{definition}
    For any collection of types $T$, the modality $\OO_T$ it generates is given by nullifying at the family $P : I \to \Prop_\UU$ with $P(i) = \Trunc{T(i)}$. 
    We call the modal types for this modality \textbf{$T$-sheaves}, and the modality $\OO_T$ the \textbf{$T$-sheafification} modality.
\end{definition}

This modality is thus descending to the subuniverse in which all the types in the presentation are inhabited. 
This appears naturally, for instance in algebraic settings, where we can take the class generated by all solution sets of polynomials. 
Forcing these are all inhabited is something like taking an algebraic closure.

\begin{example}
    Any topological modality has a presentation, generated by the collection of propositions at which we are nullfying. 
    However, for many modalities this isn't a natural choice of presentation, and the choice of specific presentation will give us information about our modality.
\end{example}

Note presentations are far from unique, but for any given modality there is a unique large maximal presentation. 
It is large since the collection of types is indexed over the universe $\UU$.

\begin{definition}
    Given a lex modality $\OO$ the \textbf{maximal presentation} $T_{\mathrm{max}}$ 
    is given by the types $X$ such that $\OO \Trunc{X}$.
\end{definition}

\begin{lemma}
    Let $\OO$ be a topological modality.
    Then $T_{\mathrm{max}}$ is a presentation of $\OO$, and any other presentation is contained within it.
\end{lemma}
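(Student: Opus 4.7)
The plan has three parts: verify $T_{\mathrm{max}}$ contains $1$ and is closed under $\Sigma$; show the modality $\OO_{T_{\mathrm{max}}}$ it generates agrees with $\OO$; and deduce maximality. The last is immediate, since any presentation $T$ of $\OO$ realises $\OO$ as nullification at the family $\Trunc{T(i)}$, and nullification at a proposition $P$ forces $\OO P$ to be contractible. Hence $\OO\Trunc{T(i)}$ is inhabited for every $i$, i.e.\ $T(i) \in T_{\mathrm{max}}$.

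For the presentation axioms, $1 \in T_{\mathrm{max}}$ is trivial from $\Trunc{1} = 1 = \OO 1$. For closure under $\Sigma$, suppose $X \in T_{\mathrm{max}}$ and $B : X \to \UU$ has $B(x) \in T_{\mathrm{max}}$ for every $x$. Since $\OO$ is lex it preserves propositions, so $\OO \Trunc{\sum_{x:X} B(x)}$ is itself a modal proposition. For any $x : X$, the inclusion $B(x) \hookrightarrow \sum_{x' : X} B(x')$, together with functoriality of $\Trunc{-}$ and $\OO$, sends the inhabitant of $\OO\Trunc{B(x)}$ to an inhabitant of $\OO\Trunc{\sum B}$. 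This gives a map $X \to \OO\Trunc{\sum B}$; because the codomain is a modal proposition, it factors through $\OO\Trunc{X}$, which is inhabited by hypothesis, finishing the step.

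To prove $\OO_{T_{\mathrm{max}}} = \OO$, I will compare their subuniverses of modal types. If $Y$ is $\OO$-modal and $X \in T_{\mathrm{max}}$, then $\OO\Trunc{X}$ is contractible, and the chain
\[ (\Trunc{X} \to Y) \simeq (\OO\Trunc{X} \to Y) \simeq (1 \to Y) \simeq Y \]
witnesses that $Y$ is $\Trunc{X}$-null. Conversely, since $\OO$ is topological it arises as nullification at some family of propositions $Q : J \to \Prop_\UU$, and each $\OO Q(j)$ is contractible by construction of that nullification. Hence $Q(j) \in T_{\mathrm{max}}$, so any $T_{\mathrm{max}}$-null type is automatically $Q$-null, i.e.\ $\OO$-modal.

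The main delicate point is the $\Sigma$-closure: it is the only place where lex-ness is genuinely needed, since the argument relies on $\OO\Trunc{\sum B}$ being a modal proposition so that modal recursion can push the inhabited hypothesis $\OO\Trunc{X}$ to the desired conclusion. Everything else is formal consequence of the definitions of nullification and of $T_{\mathrm{max}}$.
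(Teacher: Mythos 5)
Your proof is correct and its overall architecture matches the paper's exactly: verify the presentation axioms, show the two subuniverses of modal types coincide (via modal recursion together with contractibility of $\OO\Trunc{X}$ in one direction, and the fact that nullification at a family of propositions makes each generator's reflection contractible in the other), and read off maximality. The one place you diverge is the $\Sigma$-closure step: the paper disposes of it by the equational chain $\OO\Trunc{\sum_{x} B(x)} = \OO\Trunc{\sum_{x} \OO\Trunc{B(x)}} = \OO\Trunc{\sum_{x} 1} = \OO\Trunc{X} = 1$, i.e.\ by the rewriting fact that one may reflect inside a $\Sigma$ under $\OO\Trunc{\cdot}$ (the same trick reused later to show $T$ and $\langle T \rangle$ present the same modality), whereas you construct the inhabitant by hand: functoriality along $B(x) \to \sum_{x'} B(x')$ gives a map $X \to \OO\Trunc{\sum_{x} B(x)}$, which, since the codomain is a modal proposition, factors through the inhabited $\OO\Trunc{X}$. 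Both verifications are sound; yours is more elementary and makes the mechanism explicit, while the paper's chain is the reusable general lemma. One small correction to your closing remark: lexness is not ``genuinely needed'' even there, since \emph{any} modality preserves propositions --- to see that $\OO P$ is a proposition, apply modal induction to the family $(x, y) \mapsto (x = y)$ over $\OO P \times \OO P$, which is valued in modal types because modal types are closed under identity types --- so your factorization argument works for an arbitrary modality, and topologicality is only essentially used, just as in the paper, to present $\OO$ as nullification at a family of propositions in the converse inclusion of modal types.
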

\begin{proof}
    First we show it is a presentation of a modality. 
    Clearly $\OO \Trunc{1} = 1$ holds, and so $1 \in T_{\mathrm{max}}$.
    Suppose we have $Y : X \to T_{\mathrm{max}}$ a family of types in the presentation, with $X$ in the presentation.
    Then 
    \[\OO \Trunc*{\sum_{y : Y} X(y) } = \OO \Trunc*{\sum_{y : Y} \OO \Trunc*{X(y)}} = \OO \Trunc*{\sum_{y : Y} 1} = \OO \Trunc*{Y} = 1\]
    so that $\sum_{y : Y} X(y)$ is also in the presentation.

    Now we show that the modalities $\OO$ and $\OO_T$ coincide.
    Since $\OO$ is topological, let $B : A \to \Prop$ be a presentation of it.
    Suppose $X$ is a sheaf for $T_\mathrm{max}$.
    Note by definition, since $\OO B(a)$ holds in nullification at $B$, we have $B(a) \in T_\mathrm{max}$.
    So in particular $X$ is a sheaf for $B$.
    
    Suppose now that $X$ is a sheaf for $\OO$.
    Then for all $Y \in T_\mathrm{max}$ we have, by modal recursion
    \[ X^{\Trunc{Y}} \simeq X^{\OO \Trunc{Y}} \simeq X \]

    The presentation $T_{\mathrm{max}}$ is clearly maximal, since any presentation $T'$ of $\OO$ satisfies $\OO \Trunc{X}$ for $X \in T'$.
\end{proof}

\begin{remark}
    Note this maximal presentation isn't necessarily good.
    It is by construction large, which means it cannot be used to define a modality which preserves universe levels.
\end{remark}

Recall that modalities are naturally ordered, with $\OO \leq \lozenge$ if every $\OO$-modal type is $\lozenge$-modal.
It is clear that if $T_1 \subseteq T_2$ we have $\OO_{T_2} \leq \OO_{T_1}$.
We can more generally compare presentations by a simple criterion: 

\begin{lemma}
    Suppose $T_1$ and $T_2$ are any collections of types, with the property that for all $X \in T_1$ we have $\OO_{T_2} \Trunc{X}$.
    Then $\OO_{T_2} \leq \OO_{T_1}$.
\end{lemma}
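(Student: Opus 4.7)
The plan is to take an arbitrary $\OO_{T_2}$-modal type $Y$ and show that it is $\Trunc{X}$-null for every $X \in T_1$; this is precisely what it means for $Y$ to be $\OO_{T_1}$-modal, so the desired inequality $\OO_{T_2} \leq \OO_{T_1}$ follows from the definition of the ordering on modalities.

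Fix $X \in T_1$. The hypothesis gives that $\OO_{T_2}\Trunc{X}$ is inhabited. Now $\OO_{T_2}$ is by construction the nullification at the propositional truncations $\Trunc{T_2(i)}$, so it is topological and hence lex; lex modalities preserve truncation levels, so $\OO_{T_2}\Trunc{X}$ is again a proposition. An inhabited proposition is contractible, which gives a canonical equivalence $Y \simeq (\OO_{T_2}\Trunc{X} \to Y)$.

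Next, since $Y$ is assumed $\OO_{T_2}$-modal, modal recursion at the type $\Trunc{X}$ gives that precomposition with the unit $\eta : \Trunc{X} \to \OO_{T_2}\Trunc{X}$ is an equivalence $(\OO_{T_2}\Trunc{X} \to Y) \simeq (\Trunc{X} \to Y)$. Composing the two equivalences produces $Y \simeq (\Trunc{X} \to Y)$, and by chasing through the definitions one sees the composite is the diagonal map, so $Y$ is $\Trunc{X}$-null as required.

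There is no real obstacle here; the only step worth flagging is the reduction from ``$\OO_{T_2}\Trunc{X}$ is inhabited'' to ``$\OO_{T_2}\Trunc{X}$ is contractible'', which depends on $\OO_{T_2}$ preserving propositions. This is exactly where lex-ness matters, and in our setting it is automatic since $\OO_{T_2}$ is topological by construction. Without that observation the argument breaks, so it is worth stating explicitly at the corresponding point of the proof.
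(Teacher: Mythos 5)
Your proof is correct and is essentially the paper's own argument: the paper's one-line chain $F^{\Trunc{X}} \simeq F^{\OO_{T_2}\Trunc{X}} \simeq F$ is exactly your modal-recursion step followed by contractibility of the inhabited proposition $\OO_{T_2}\Trunc{X}$, just stated more tersely. One minor remark: lexness is stronger than needed for $\OO_{T_2}\Trunc{X}$ to be a proposition, since \emph{every} modality preserves propositions (by modal induction, using that identity types of modal types are modal), though your appeal to topological-hence-lex is of course valid in this setting.
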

\begin{proof}
    Suppose $F$ is a $T_2$ sheaf. 
    Then for all $X \in T_1$ we have $F^{\Trunc{X}} \simeq F^{\OO_{T_2} \Trunc{X}} \simeq F$ by the sheaf condition.
    Hence $F$ is a $T_1$ sheaf.
\end{proof}

\begin{corollary}
    Any collection of types $T$ and its associated presentation $\langle T \rangle$ present the same modality.
\end{corollary}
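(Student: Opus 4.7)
The plan is to apply the preceding lemma in both directions. Setting $T_1 = T$ and $T_2 = \langle T \rangle$, the inequality $\OO_{\langle T \rangle} \leq \OO_T$ is immediate: for any $X \in T$ we also have $X \in \langle T \rangle$, so $\Trunc{X}$ is among the propositions nullified to form $\OO_{\langle T \rangle}$, and hence $\OO_{\langle T \rangle} \Trunc{X}$ holds by definition.

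For the reverse direction $\OO_T \leq \OO_{\langle T \rangle}$, I would show by induction on the construction of $\langle T \rangle$ as the $\Sigma$-closure of $T \cup \{1\}$ that $\OO_T \Trunc{X}$ holds for every $X \in \langle T \rangle$. The base cases are straightforward: $\OO_T \Trunc{1}$ is trivially inhabited, and for $X \in T$ the statement $\OO_T \Trunc{X}$ is immediate from the definition of $\OO_T$ as nullification at the propositional truncations of the elements of $T$.

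The inductive step for the $\Sigma$-closure is the main obstacle. Given $Y$ in the closure with $\OO_T \Trunc{Y}$, and a family $B : Y \to \UU$ with $B(y)$ in the closure and $\OO_T \Trunc{B(y)}$ for each $y$, I need to produce a term of the modal proposition $\OO_T \Trunc{\sum_{y : Y} B(y)}$ (noting that it is a proposition because lex modalities preserve truncation levels). The approach is to peel off the two truncated existential hypotheses in turn using modal and propositional truncation recursion into the modal proposition goal: the hypothesis $\OO_T \Trunc{Y}$ reduces the task to producing, for each $y : Y$, a term of $\OO_T \Trunc{\sum_{y : Y} B(y)}$; the hypothesis $\OO_T \Trunc{B(y)}$ then reduces it further to producing, for each $b : B(y)$, such a term, which is given by $\eta \lvert (y, b) \rvert$ via the modal unit and truncation constructor. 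This shows that ``being $\OO_T$-inhabited'' is closed under $\Sigma$, completing the induction.
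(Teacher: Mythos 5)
Your proof is correct and follows essentially the same route as the paper: $\OO_{\langle T \rangle} \leq \OO_T$ from $T \subseteq \langle T \rangle$, and the reverse inequality by structural induction on $\langle T \rangle$ with the same three cases ($1$, elements of $T$, $\Sigma$-closure), concluding via the preceding comparison lemma. The only cosmetic difference is in the $\Sigma$-step: the paper argues by the equality chain $\OO_T \Trunc*{\sum_{a : A} B(a)} = \OO_T \Trunc*{\sum_{a : A} \OO_T \Trunc*{B(a)}} = \OO_T \Trunc*{\sum_{a : A} 1} = 1$, whereas you construct the inhabitant directly by modal recursion and propositional-truncation recursion into the modal proposition --- the same content unfolded to the underlying universal properties.
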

\begin{proof}
    Since $T \subseteq \langle T \rangle$ we have $\OO_{\langle T \rangle} \leq \OO_{T}$.
    
    Now we work inductively on the structure of $X \in \langle T \rangle$.
    Note if $X = 1$ then clearly $\OO_{T} 1 = 1$.
    If $X \in T$ then also clearly $\OO_{T} \Trunc{X}$.
    Now if $X = \sum_{a : A} B(a)$ where $\OO_{T} \Trunc{A}$ and for all $a : A$ we have $\OO_{T} \Trunc{B(a)}$ then we may calculate
    \[ \OO_{T} \Trunc{ \sum_{a : A} B(a) } = \OO_{T} \Trunc*{ \sum_{a : A} \OO_{T} \Trunc*{ B(a) } } = \OO_{T} \Trunc*{ \sum_{a : A} 1 } = 1\]
    Hence we deduce that for all $X \in \langle T \rangle$ we have $\OO_{T} \Trunc{X}$ so that by the previous lemma, $\OO_{T} \leq \OO_{\langle T \rangle}$ and thus they are equal. 
\end{proof}

Given that taking $\Sigma$-closure and including $1$ doesn't change which modality is presented, one might wonder why we have considered it at all.
The benefit we gain from doing this is covers. 
The notion of a cover of a type only really makes sense when we have closure under $\Sigma$, since otherwise they do not compose.
This will be useful for recovering sheaf conditions, and describing certain covering properties in the upcoming sections.

Using the definition of a presentation, we can recover the notion of a cover in a Grothendieck topology.
Presentations should be thought of as giving the fibres of covering families.
Thus maps whose fibres are all in the presentation are covers.

\begin{definition}
    A \textbf{$T$-cover} is a map $f : X \to Y$ such that for all $y : Y$, we have $\fib_f(y) \in T$.
\end{definition}

Note all $T$-covers become surjective in the subuniverse.
Covers behave as expected for Grothendieck topologies:

\begin{lemma}
    Covers are closed under composition and pullback.
    Any equivalence is a cover.
\end{lemma}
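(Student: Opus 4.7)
The plan is to verify each of the three claims by directly computing the fibres of the maps involved and invoking the two closure properties of a presentation: that $1 \in T$ and that $T$ is closed under $\Sigma$. The definition of $X \in T$ via $\exists_{i : I} T(i) = X$ is invariant under equivalence of types (using univalence), so we are free to work up to equivalence on fibres throughout.

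First I would treat the easy case: if $f : X \to Y$ is an equivalence, then for every $y : Y$ we have $\fib_f(y) \simeq 1$, and since $1 \in T$ we get $\fib_f(y) \in T$. For pullback, given a cover $f : X \to Y$ and any $g : Z \to Y$, I would form the pullback square and recall that the fibre of the pulled-back map $f' : X \times_Y Z \to Z$ over $z : Z$ is equivalent to $\fib_f(g(z))$. This fibre lies in $T$ because $f$ is a cover, so $f'$ is a cover as well.

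For composition, let $f : X \to Y$ and $g : Y \to Z$ be covers, and fix $z : Z$. The fibre of the composite over $z$ is equivalent to
\[ \fib_{g \circ f}(z) \simeq \sum_{y : \fib_g(z)} \fib_f(\pi_1(y)). \]
Here $\fib_g(z) \in T$ since $g$ is a cover, and each $\fib_f(\pi_1(y)) \in T$ since $f$ is a cover. By the $\Sigma$-closure of $T$, this sum lies in $T$, so $g \circ f$ is a cover.

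There is no real obstacle; the only thing to be careful about is that membership $X \in T$ is defined propositionally via $\exists_{i : I} T(i) = X$, so after establishing equivalences of fibres one invokes univalence (plus the fact that $T$-membership is a proposition and so compatible with propositional truncation) to transport membership along these equivalences. Everything else is an immediate unfolding of definitions.
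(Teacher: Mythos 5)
Your proposal is correct and follows essentially the same route as the paper's proof: compute the fibre of the composite as $\sum_{y : \fib_g(z)} \fib_f(\pi_1(y))$ and apply $\Sigma$-closure, identify the fibres of the pullback with $\fib_f(g(z))$, and use contractibility of fibres plus $1 \in T$ for equivalences. Your extra remark about transporting membership along equivalences via univalence is a care the paper leaves implicit, but it changes nothing substantive.
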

\begin{proof}
    Let $f : X \to Y$ and $g : Y \to Z$ be $T$-covers. 
    Then for any $z : Z$ we have 
    \begin{align*}
        \fib_{g \circ f}(z) 
        &:= (x : X) \times (g(f(x)) = z) \\
        &= (x : X) \times (y : Y) \times (f(x) = y)\times (g(y) = z) \\
        &= ((y, p) : \fib_g(z)) \times \fib_f(y)
    \end{align*}
    Since $\fib_g(z)$ and $\fib_f(y)$ are in $T$, and $T$ is closed under $\Sigma$, we have that this type is in $T$.

    If $f : X \to Z$ is in $T$ and $g : Y \to Z$ then their pullback $\pi : X \times_Z Y \to Y$ of $f$ along $g$ has fibres $\fib_\pi(y) = \fib_f(g(y))$ which are in $T$ by assumption. 

    Finally since $1 \in T$ any equivalence is a $T$-cover, since all their fibre are contractible.
\end{proof}

\section{Sheaf Conditions}

We can now show that sheaves for a presentation satisfy a sheaf condition, justifying the name.
This is the usual sheaf condition for Grothendieck topoi, at least in the case of $0$-types.
More generally we have internal ``stack'' conditions, for $n$-types.
However, the usual formulation of stack conditions externally does not translate directly into type theory, as this would 
require writing out coherence data at arbitrary levels, a known open problem in HoTT. 
Instead we sidestep the issue by using iterated joins, a notion from homotopy theory, to avoid the coherence.
For each external natural number, we gain an equivalence to the usual formulation, but not quantifying internally over the natural numbers.
We recall the definition of joins in HoTT.

\begin{definition}
    The \textbf{join} of two types $A$ and $B$ is the pushout
    \[\begin{tikzcd}
        A \times B \ar[r] \ar[d] & B \ar[d] \\
        A \ar[r] & A \ast B
    \end{tikzcd}\]
\end{definition}

We can extend this notion to the join of maps. Given $f : A \to X$ and $g : B \to X$ we define their join as the pushout of the pullback:
\[\begin{tikzcd}
    A \times_X B \ar[r] \ar[d] &  B \ar[d] \\
    A \ar[r] & A \ast_X B
\end{tikzcd}\]
With this we have an induced map $f \ast g : A \ast_X B \to X$.

\begin{remark}
    Note that $f \ast g$ is \emph{not} the functorial action of $\ast$ on $f$ and $g$.
\end{remark}

\begin{lemma}[\cite{rijke2017join} Theorem 2.2]
    For any $f : A \to X$, $g : B \to X$ we have 
    \[ \fib_{f \ast g}(x) = \fib_{f}(x) \ast \fib_g(x)\]
\end{lemma}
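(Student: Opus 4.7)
The plan is to exploit the fact that in HoTT pullback preserves pushouts, a standard consequence of the flattening lemma (equivalently, descent in the $\infty$-topos of types). I will pull back the defining pushout square of $A \ast_X B$ along the point $1 \to X$ selecting $x$, and identify the resulting square with the pushout defining $\fib_f(x) \ast \fib_g(x)$.

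First I observe that the pushout square
\[\begin{tikzcd}
  A \times_X B \ar[r] \ar[d] & B \ar[d] \\
  A \ar[r] & A \ast_X B
\end{tikzcd}\]
is a pushout in the slice $\UU / X$: all four corners come with maps to $X$ (via $f$, $g$, either composite through the pullback, and the induced $f \ast g$), and the square commutes over $X$ by the universal property of the pushout.

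Next, I apply the fibre functor $\fib_{(-)}(x)$, i.e., pullback along $1 \to X$. The top-left corner becomes $\fib_f(x) \times \fib_g(x)$ since the fibre of a fibre product over $X$ is the ordinary product of fibres; the other corners become $\fib_g(x)$, $\fib_f(x)$, and $\fib_{f \ast g}(x)$ tautologically. Because pullback preserves pushouts, the resulting square is a pushout, which is exactly the defining pushout of $\fib_f(x) \ast \fib_g(x)$, delivering the required equivalence.

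The only substantive ingredient is stability of pushouts under pullback. This is standard in HoTT and follows from the flattening lemma, or equivalently from the fact that base change between slices is a left adjoint (to dependent product) and hence preserves colimits. If one wished to avoid invoking this, one could alternatively compute $\fib_{f \ast g}(x)$ directly using the characterisation of identity types of pushouts and the universal property of $A \ast_X B$, but this is considerably more laborious than the descent-style argument and would duplicate the work already done in \cite{rijke2017join}.
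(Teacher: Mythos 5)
Your proposal is correct and takes essentially the same route as the source this lemma is deferred to: the paper gives no proof of its own but cites \cite{rijke2017join}, where the statement is proved by exactly this descent-style argument, pulling back the defining pushout square of the join along the point $1 \to X$ and using stability of pushouts under pullback. Nothing further is needed.
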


The join allows for an explicit description of the propositional truncation, and more generally the image of any map, other than the standard higher inductive definition, by using a colimit of iterated joins. 
Given a type $A$ we define $A^{\ast n}$ to be the $n$-fold iterated join, and given a map $f : A \to B$ we define its $n$-fold iterated join as $f^{\ast n} : A^{\ast_B n} \to B$.
Note that by induction we have $\fib_{f^{\ast n}}(b) = \fib_f(b)^{\ast n}$.

\begin{lemma}[\cite{rijke2017join}]
    For any $f : A \to B$ we have 
    \[ \im(f) = \colim(A \to A \ast_B A \to A \ast_B A \ast_B A \to \cdots) \]
\end{lemma}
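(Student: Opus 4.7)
The plan is to reduce to a pointwise statement on fibres and then use the increasing connectivity of iterated joins. First I would apply the preceding fibre lemma inductively to obtain $\fib_{f^{\ast n}}(b) \simeq \fib_f(b)^{\ast n}$ for every external $n$. Since sequential colimits commute with fibres in any $\infty$-topos (descent for sequential colimits), the fibre of the induced map $\colim_n A^{\ast_B n} \to B$ over $b : B$ is $\colim_n \fib_f(b)^{\ast n}$. Combined with the description $\im(f) \simeq \sum_{b : B} \Trunc{\fib_f(b)}$, the problem reduces to showing that $\Trunc{X} \simeq \colim_n X^{\ast n}$ for any type $X$.

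For this pointwise claim I would verify that the colimit is a proposition that is inhabited if and only if $X$ is. Inhabitedness is immediate: $X^{\ast n}$ is empty whenever $X$ is, and contains $X$ as a summand otherwise. For propositionality I would use the fundamental connectivity estimate for joins: if $X$ is $m$-connected and $Y$ is $n$-connected, then $X \ast Y$ is $(m+n+2)$-connected. By induction it follows that whenever $X$ is inhabited, $X^{\ast n}$ is $(n-2)$-connected, so the sequential colimit is $n$-connected for every $n$. Since it is also inhabited when $X$ is, it should be contractible, and hence a proposition either way.

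The main obstacle is making this last step precise, since ``$\infty$-connected plus inhabited implies contractible'' is not directly available internally. I would sidestep it by constructing the equivalence directly: given $x_0 : X$, the inclusion $X^{\ast n} \hookrightarrow X^{\ast (n+1)} = X \ast X^{\ast n}$ is nullhomotopic via the pushout cell of the join, because every $a : X^{\ast n}$ is connected to $x_0$ by the glue path coming from $X \times X^{\ast n}$. Iterating this nullhomotopy argument in higher dimensions, and using that propositional truncation commutes with sequential colimits, gives the required equivalence $\colim_n X^{\ast n} \simeq \Trunc{X}$ without needing to quantify internally over $\mathbb{N}$. Feeding this back through the fibre reduction yields the colimit description of $\im(f)$.
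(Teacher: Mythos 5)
The paper does not actually prove this lemma---it imports it from \cite{rijke2017join}---so the comparison here is against Rijke's argument. Your reduction (the fibrewise identification $\fib_{f^{\ast n}}(b) \simeq \fib_f(b)^{\ast n}$, commutation of sequential colimits with fibres, and $\im(f) \simeq \sum_{b : B} \Trunc{\fib_f(b)}$) is sound and is essentially the same skeleton as Rijke's, so everything hinges on the pointwise claim $\Trunc{X} \simeq \colim_n X^{\ast n}$, and that is where your sketch has a genuine gap. You are right to abandon the connectivity route---``inhabited and $n$-connected for every external $n$'' does not imply contractible, internally or in non-hypercomplete models---but the replacement is under-specified in two ways. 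First, the nullhomotopy of $\inr : X^{\ast n} \to X \ast X^{\ast n}$ requires an actual point $x_0 : X$, whereas the statement concerns arbitrary $X$; your ``a proposition either way'' gestures at an inhabited-or-empty case split that is not constructively available. The fix is to prove $\prod_{u, v : T} (u = v)$ for $T := \colim_n X^{\ast n}$ directly: from $u$ one extracts $\Trunc{X}$ (each $X^{\ast n}$ maps into $\Trunc{X}$), and since contractibility of $T$ is itself a proposition one may then assume a genuine $x_0 : X$. Second, ``iterating this nullhomotopy argument in higher dimensions'' is precisely the step that needs an argument: a sequential colimit whose connecting maps are merely nullhomotopic need not be contractible unless the nullhomotopies are chosen compatibly with the cocone glue, since defining the contraction $\prod_{t : T} (c = t)$ by colimit induction demands a square over each glue constructor. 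Fortunately only one layer of coherence is required---the sequential-colimit HIT has point and path constructors and nothing higher---so given $x_0$ the contraction can be built by induction on the stage, using the join's glue paths to fill the squares. That is completable, but it is the real content of the proof and cannot be left as ``iterating''.

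For comparison, the endgame in \cite{rijke2017join} avoids nullhomotopy coherence altogether: shift-invariance of sequential colimits together with the fact that joins commute with them yields an equivalence $T \ast T \simeq T$ compatible with both $\inl$ and $\inr$, and then for $u, v : T$ the glue path $\inl(u) = \inr(v)$ immediately gives $u = v$, so $T$ is a proposition; the two canonical maps between $T$ and $\Trunc{X}$ then make them equivalent. Note also that your appeal to ``propositional truncation commutes with sequential colimits'' only yields $\Trunc{T} = \Trunc{X}$; it does not by itself make $T$ a proposition, which is the actual crux of the lemma.
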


\begin{remark}
    In particular using the unique map $f : A \to 1$ we have
    \[ \| A \| = \colim(A \to A \ast A \to A \ast A \ast A \to \cdots) \]
\end{remark}

The idea is that by repeatedly joining a type with itself, we remove homotopy information from below, one step at a time.
In the limit there is no homotopy information in the sense that all pairs of points are actually equal.

\begin{lemma}
\label{lem:join_to_n_type}
    Let $A$ be a type and $X : A^{\ast(n+3)} \to \UU$ such that for all $a:A^{\ast(n+3)}$, $X(a)$ is an $n$-type. Then the inclusion $\inr : A^{*(n+2)} \to A^{*(n+3)}$ induces an equivalence
    \[ \prod_{z : A^{\ast(n+3)}}X(z) = \prod_{z : A^{\ast(n+2)}}X(\inr(z)) \]
\end{lemma}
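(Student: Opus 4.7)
The plan is to unfold $A^{\ast(n+3)} = A \ast A^{\ast(n+2)}$ as a pushout, apply its dependent universal property to describe the fiber of restriction along $\inr$, and then finish by a connectivity argument for iterated joins.

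First, I would apply the dependent universal property of the pushout $A \leftarrow A \times A^{\ast(n+2)} \to A^{\ast(n+2)}$ defining $A^{\ast(n+3)}$. A section $f : \prod_{z : A^{\ast(n+3)}} X(z)$ corresponds uniquely to a triple $(f_A, f_B, \alpha)$ consisting of $f_A : \prod_{a : A} X(\inl(a))$, $f_B : \prod_{z : A^{\ast(n+2)}} X(\inr(z))$, and a coherence $\alpha$ witnessing $\mathrm{glue}(a, z)_*(f_A(a)) = f_B(z)$ for all $(a,z) : A \times A^{\ast(n+2)}$. The restriction map selects $f_B$, so its fiber over a given $s : \prod_z X(\inr(z))$ is $\sum_{f_A} \prod_{(a,z)} \mathrm{glue}(a, z)_*(f_A(a)) = s(z)$. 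Swapping $\Sigma$ and $\Pi$ by the type-theoretic axiom of choice gives the equivalent form
\[ \prod_{a : A} \sum_{x : X(\inl(a))} \prod_{z : A^{\ast(n+2)}} x = g_a(z), \]
where $g_a(z) := \mathrm{glue}(a, z)_*^{-1}(s(z)) \in X(\inl(a))$.

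Second, I would show each inner type $\sum_{x} \prod_z x = g_a(z)$ is contractible. Having fixed $a : A$, the type $A$ is $(-1)$-connected in this context, so the standard join connectivity bound $\mathrm{conn}(A \ast B) \geq \mathrm{conn}(A) + \mathrm{conn}(B) + 2$ yields, by a short induction, that $A^{\ast(n+2)}$ is $n$-connected. Since $X(\inl(a))$ is an $n$-type, the constants map $X(\inl(a)) \hookrightarrow X(\inl(a))^{A^{\ast(n+2)}}$ is then an equivalence, and its fiber over $g_a$ is exactly the type in question, hence contractible.

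The main obstacle is the transport bookkeeping involved in unpacking the pushout universal property and identifying the fiber: one must keep careful track of the action of glue paths on $X$ to obtain the clean $\Pi$-$\Sigma$ form above. Once that infrastructure is in place, the connectivity step is routine; in particular, the local inhabitedness of $A$ inside the binder $a : A$ is precisely what allows the iterated join to attain $n$-connectivity, making a separate case split on whether $\Trunc{A}$ holds unnecessary.
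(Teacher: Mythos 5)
Your proof is correct, but it takes a genuinely different route from the paper's. The paper argues by induction on $n$: it builds an explicit inverse to restriction along $\inr$ using pushout induction, choosing $\phi \circ \inl$ on the $A$-summand and $\phi$ on the $A^{\ast(n+2)}$-summand, and discharges the gluing coherence --- which lives in the family of $(n-1)$-types $\phi(\inl a) = \phi(z)$ --- by applying the inductive hypothesis to reduce it to $z : A^{\ast(n+1)}$, where the glue paths of $A^{\ast(n+2)}$ supply the witness; it then verifies both round trips by hand. You instead compute the fibers of the restriction map via the dependent universal property of the pushout, swap $\Pi$ and $\Sigma$, and reduce to contractibility of $\sum_{x} \prod_{z} x = g_a(z)$, which you get from the join connectivity bound plus the standard characterization of $n$-connected types as those at which every $n$-type is local (HoTT book, Cor.~7.5.9). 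Your binder trick is exactly right and is the crux: under the hypothesis $a : A$ you have $\Trunc{A}$, hence $A$ is $(-1)$-connected and $A^{\ast(n+2)}$ is $n$-connected, while for empty $A$ the outer $\prod_{a : A}$ is vacuously contractible, so no case split is needed. The trade-off is that your argument is structurally cleaner --- no induction at the level of this lemma and no round-trip verification --- but it imports the join connectivity theorem, a nontrivial result whose own proof is an induction of comparable weight; the paper's proof is elementary and self-contained, using only pushout induction and truncation-level arithmetic, and can in fact be read as a hands-on proof of precisely the special case of join connectivity that you invoke. One caveat for a full write-up: the coherence in the pushout universal property is a dependent path over glue, so the transport bookkeeping you flag is real, though it is the same bookkeeping the paper's informal proof silently elides.
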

\begin{proof}
    By induction on $n$. For $n=-2$ this is trivial, since both sides are the unit type.
    Now we define an inverse to the natural map
    \[ \prod_{z : A^{\ast(n+3)}}X(z) \to \prod_{z : A^{\ast(n+2)}}X(\inr z).\]
    Let $\phi : \prod_{z : A^{\ast(n+2)}}X(\inr z)$. 
    By the universal property of the pushout, we need to supply maps $\prod_{a : A} X(\inl a)$ and $\prod_{z : A^{*(n+2)}} X(\inr z)$ and the gluing data.

    For the first map we choose $\phi \circ \inl$ where $\inl : A \to A^{\ast(n+2)}$ and for the second we choose $\phi$.
    It only remains to show the gluing, that is:
    \[ \prod_{a : A} \prod_{z : A^{\ast(n+2)}} \phi(\inl a) = \phi(z)\]
    Since $X$ is an $n$-type, $\phi(\inl a) = \phi(z)$ is an $(n-1)$-type, so that we may apply induction, to see that this type is equivalent to:
    \[ \prod_{a : A} \prod_{z : A^{\ast(n+1)}} \phi(\inl a) = \phi(\inr z)\]
    But this is satisfied, since for any $a : A$ and $z : A^{\ast(n+1)}$ we have $\inl a = \inr z$ in $A^{\ast(n+2)}$ by gluing data of the pushout.
    This defines the map as required, and it is definitionally a section to the natural map.
    So we only need to check the other round trip. 
    
    Let $\psi : \prod_{z : A^{\ast(n+3)}} X(z)$. 
    This gets mapped to $\psi \circ \inr : \prod_{z : A^{\ast(n+2)}} X(\inr(z))$ which in turn gets mapped to the term of $ \prod_{z : A^{\ast(n+3)}} X(z) $ given by the data of
    \[ \psi \circ \inr \circ \inl : \prod_{a : A} X( \inr(\inl(a)) )\]
    \[ \psi \circ \inr : \prod_{a : A} X( \inr(a) ) \]
    and the gluing term
    \[ \prod_{a : A} \prod_{z : A^{\ast(n+2)}} \psi(\inr(\inl(a))) = \psi(\inr(z))\]
    But note that we have a term $p : \inr \circ \inl = \inl : A \to A^{\ast(n+3)}$ and applying this $p$ to the gluing term gives the standard proof that $\psi \circ \inl = \psi \circ \inr$.
    This means that this map $\prod_{z : A^{\ast(n+3)}} X(z)$ is equal to $\phi$, and we are done.
\end{proof}

Note this lemma easily extends to include families depending on $A^{\ast(n+k)}$ for $k \geq 3$.

\begin{corollary}
      $X^{\Trunc{A}} = X^{A^{\ast(n+2)}}$ for any type $A$ and $n$-type $X$.
\end{corollary}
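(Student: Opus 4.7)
The plan is to combine the sequential colimit description of propositional truncation with the previous lemma. The preceding excerpt tells us that $\Trunc{A} = \colim(A \to A \ast A \to A \ast A \ast A \to \cdots)$, so applying the functor $X^{(-)}$ converts this sequential colimit in the domain into a sequential limit of exponentials. Concretely, a map $\Trunc{A} \to X$ is the same as a compatible family of maps $A^{\ast k} \to X$ for all $k$, so
\[ X^{\Trunc{A}} \simeq \lim_k X^{A^{\ast k}} \]
where the limit is taken along the restriction maps induced by the inclusions $\inr : A^{\ast k} \to A^{\ast (k+1)}$.

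Next I would invoke Lemma~\ref{lem:join_to_n_type} (together with the remark that it extends to families over $A^{\ast(n+k)}$ for all $k \geq 3$, applied here to the constant family $X$) to conclude that for every $k \geq n+2$, the restriction map $X^{A^{\ast(k+1)}} \to X^{A^{\ast k}}$ is an equivalence. Hence the tower $(X^{A^{\ast k}})_k$ stabilises from index $n+2$ onwards, so its sequential limit is equivalent to $X^{A^{\ast(n+2)}}$ via the natural projection. Composing with the equivalence from the first step yields $X^{\Trunc{A}} \simeq X^{A^{\ast(n+2)}}$ as claimed.

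Alternatively, one can package the argument more directly without explicitly mentioning the limit: the inclusion $A^{\ast(n+2)} \to \Trunc{A}$ into the colimit induces a restriction $X^{\Trunc{A}} \to X^{A^{\ast(n+2)}}$, and to build an inverse one extends any $\phi : A^{\ast(n+2)} \to X$ uniquely up the chain of joins by iterated application of Lemma~\ref{lem:join_to_n_type}, then assembles these into a map out of the colimit via its universal property; a dual computation shows the two round trips are identities.

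The main obstacle is really only bookkeeping: one must justify carefully that $X^{(-)}$ sends the sequential colimit defining $\Trunc{A}$ to the sequential limit of the exponentials (which is a standard consequence of the universal property), and then verify that the chain of equivalences given by the previous lemma composes coherently so that the stabilised value is indeed $X^{A^{\ast(n+2)}}$ rather than something indexed shifted by one. Since Lemma~\ref{lem:join_to_n_type} is already proved with an explicit inverse, these coherences come essentially for free.
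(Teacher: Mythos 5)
Your proof is correct and follows exactly the paper's own argument: dualise the iterated-join colimit presentation $\Trunc{A} = \colim(A \to A \ast A \to \cdots)$ into a sequential limit of exponentials, then use Lemma~\ref{lem:join_to_n_type} (with the constant family $X$, via the remark extending it to all $k \geq 3$) to see the tower stabilises at stage $n+2$. The only difference is that you spell out the bookkeeping the paper leaves implicit; no changes needed.
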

\begin{proof}
    We have $\Trunc{A} = \colim(A \to A\ast A \to A \ast A \ast A \to \cdots)$ and so 
    \[ X^{\Trunc{A}} = \lim( X^A \leftarrow X^{A \ast A} \leftarrow X^{A \ast A \ast A} \leftarrow \cdots)\]
    But using \cref{lem:join_to_n_type} we have that this sequence stabilises at $n+2$ giving the result.
\end{proof}

We can use this to obtain a sheaf condition. 
To pass from arbitrary $T$-covers to simply elements of the topology, we use a lemma about totalisation of a family of maps to work fibrewise.

\begin{lemma}
    \label{lem:fibrewise_precomp}
    Let $B, C : A \to \UU$ be type families over $A$, and let $f : \prod_{a : A} B(a) \to C(a)$.
    We define the totalisation of $f$ to be 
    \[ \tot_f : \sum_{a : A} B(a) \to \sum_{a : A} C(a) \]
    \[ \tot_f(a, b) = (a, f(b)) \]
    Fix a type $X$. Suppose for all $a : A$, the precomposition map  
    \[ {f_a}^* : (C(a) \to X) \to (B(a) \to X) \] is an equivalence. 
    Then the map
    \[ \psi : X^{\sum_{a : A} C(a)} \to X^{\sum_{a : A} B(a)}\]
    given by composition with $\tot_f$ is an equivalence.
\end{lemma}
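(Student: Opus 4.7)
The plan is to reduce the statement to the fact that a product of equivalences is an equivalence, using the standard currying equivalence for $\Sigma$-types.

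First I would apply the currying equivalence $X^{\sum_{a:A} C(a)} \simeq \prod_{a : A} (C(a) \to X)$ and similarly for $B$ in place of $C$. These equivalences send a function $h$ to the family $a \mapsto (c \mapsto h(a,c))$.

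Next I would verify that, under these two currying equivalences, the map $\psi$ corresponds to the map
\[ \prod_{a : A} {f_a}^* : \prod_{a : A} (C(a) \to X) \to \prod_{a : A} (B(a) \to X) \]
sending a family $g$ to $a \mapsto g(a) \circ f_a$. This is essentially a definitional check: given $h : \sum_{a : A} C(a) \to X$, its image under $\psi$ satisfies $\psi(h)(a,b) = h(\tot_f(a,b)) = h(a, f_a(b))$, and currying this yields exactly $a \mapsto (b \mapsto h(a,\cdot) \circ f_a)(b)$.

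Finally, since each ${f_a}^*$ is an equivalence by hypothesis, the dependent product $\prod_{a : A} {f_a}^*$ is an equivalence (equivalences are closed under $\Pi$). Therefore $\psi$ is an equivalence, being conjugate to an equivalence by two equivalences. The only slightly finicky point is the compatibility check in the middle step, but since both the currying equivalence and $\psi$ are defined by direct substitution, the two composites agree definitionally (or at worst by a short computation using function extensionality), so there is no genuine obstacle.
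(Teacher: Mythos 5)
Your proposal is correct, but it is packaged differently from the paper's proof, which is a bare-hands construction: writing $\mathrm{ext}_a$ for the inverse of ${f_a}^*$, the paper defines the candidate inverse $\phi(g) = \lambda (a,c).\, \mathrm{ext}_a(g_a)(c)$ directly and then verifies both round trips $\phi(\psi(g)) = g$ and $\psi(\phi(q)) = q$ by pointwise computation. You instead conjugate $\psi$ by the two currying equivalences $X^{\sum_{a:A} C(a)} \simeq \prod_{a:A}(C(a) \to X)$ and $X^{\sum_{a:A} B(a)} \simeq \prod_{a:A}(B(a) \to X)$, identify the middle map with $\prod_{a:A} {f_a}^*$, and invoke closure of equivalences under $\Pi$ (which requires function extensionality, available in the paper's ambient theory). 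Your compatibility check is indeed unproblematic: since $\tot_f(a,b) \equiv (a, f_a(b))$ holds by definition, the square commutes judgmentally, as you claim. Unfolding your argument recovers exactly the paper's proof --- transporting the fibrewise inverses $\mathrm{ext}_a$ across currying yields precisely $\phi$, and the homotopies hidden inside the $\Pi$-closure lemma are the paper's two pointwise calculations --- so the computational content is identical. What your version buys is modularity: the explicit homotopy-checking is delegated to two standard library lemmas, at the cost of one (definitional) commutativity check, which makes it arguably the cleaner formalisation target; what the paper's version buys is self-containedness and an explicit description of the inverse map.
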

\begin{proof}
    Suppose precomposition by $f_a$ is an equivalence for all $a : A$.
    Call the inverse map $\mathrm{ext}_a : X^{B(a)} \to X^{C(a)}$.
    We define the inverse to precomposition with totalisation by
    \[ \phi : X^{\sum_{a : A} B(a)} \to X^{\sum_{a : A} C(a)} \]
    \[ g \mapsto \lambda (a,c) . \mathrm{ext}_a(g_a)(c)\]
    This gives an inverse since for any $p : \sum_{a : A} B(a) \to X$ and any $a : A, c : C(a)$ we have
    \begin{align*}
        \phi(\psi(g))(a, c) &= \phi( g \circ \tot_f)(a,c)
        \\                  &= \mathrm{ext}_a( (g \circ \tot_f)_a )(c) 
        \\                  &= \mathrm{ext}_a(g_a \circ f_a)(c)
        \\                  &= g_a(c)
    \end{align*}

    Similarly, for any $q : \sum_{a : A} C(a) \to X$ and any $a : A, b : B(a)$ we have
    \begin{align*}
        \psi(\phi(q))(a,b) &= (\phi(q) \circ \tot_f)(a,b)
        \\                 &= \phi(q)(a, f_a(b))
        \\                 &= \mathrm{ext}_a(q_a)(f_a(b))
        \\                 &= q_a(b) 
    \end{align*}
\end{proof}

\begin{corollary}[Sheaf Condition] 
\label{cor:sheaf_cond}
    Let $n \geq 0$.
    An $(n-2)$-type $X$ is a $T$-sheaf iff for any $T$-cover $f : A \to B$ the natural map $(B \to X) \to (A^{*_Bn} \to X)$
    is an equivalence.
\end{corollary}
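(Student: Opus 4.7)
The plan is to unpack the natural map in each direction into one that the previous corollary $X^{\|Y\|} = X^{Y^{\ast n}}$ (for an $(n-2)$-type $X$) together with \cref{lem:fibrewise_precomp} can handle. Both directions are then reductions to the pointwise statement that $X$ is null at $\|Y\|$ for $Y \in T$.

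For the backward direction, I would specialise the hypothesis to the cover $!_Y : Y \to 1$ for an arbitrary $Y \in T$ (this is a $T$-cover since its single fibre is $Y \in T$). The fibrewise join over $1$ is the ordinary join, so $Y^{\ast_1 n} = Y^{\ast n}$, and the hypothesis yields $X \simeq X^{Y^{\ast n}}$. Applying the corollary one gets $X \simeq X^{\|Y\|}$, so $X$ is $\|Y\|$-null. Since this holds for every $Y \in T$, $X$ is a $T$-sheaf.

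For the forward direction, let $f : A \to B$ be a $T$-cover. Identify $A \simeq \sum_{b:B} \fib_f(b)$ and, using $\fib_{f^{\ast n}}(b) = \fib_f(b)^{\ast n}$, identify $A^{\ast_B n} \simeq \sum_{b:B} \fib_f(b)^{\ast n}$. Under these identifications the canonical map $A^{\ast_B n} \to B$ is exactly the totalisation of the family of maps $\fib_f(b)^{\ast n} \to 1$, so precomposition $(B \to X) \to (A^{\ast_B n} \to X)$ is precisely the totalisation precomposition of \cref{lem:fibrewise_precomp}. It then suffices to show that for each $b : B$ the precomposition $(1 \to X) \to (\fib_f(b)^{\ast n} \to X)$, i.e.\ the constant-map map $X \to X^{\fib_f(b)^{\ast n}}$, is an equivalence. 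By the corollary this map equals $X \to X^{\|\fib_f(b)\|}$, and since $\fib_f(b) \in T$ and $X$ is a $T$-sheaf this map is an equivalence.

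The main obstacle is purely bookkeeping: matching the canonical map $A^{\ast_B n} \to B$ with the totalisation of $\fib_f(b)^{\ast n} \to 1$ so that \cref{lem:fibrewise_precomp} applies with $C(b) = 1$ and $B(b) = \fib_f(b)^{\ast n}$. Once that identification is in place, the equivalence of the two sides follows directly from the previous corollary and the definition of a $T$-sheaf, and the backward direction falls out by specialising to $B = 1$.
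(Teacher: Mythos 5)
Your proposal is correct and follows essentially the same route as the paper's own proof: the forward direction identifies $A^{\ast_B n}$ with $\sum_{b : B} \fib_f(b)^{\ast n}$ and applies \cref{lem:fibrewise_precomp} to reduce to the fibrewise statement that $X \to X^{\fib_f(b)^{\ast n}} = X^{\Trunc{\fib_f(b)}}$ is an equivalence, while the backward direction specialises the hypothesis to the cover $Y \to 1$ for $Y \in T$, exactly as in the paper. Your explicit bookkeeping (taking $C(b) = 1$, $B(b) = \fib_f(b)^{\ast n}$ in \cref{lem:fibrewise_precomp}, and noting $Y^{\ast_1 n} = Y^{\ast n}$) is just a spelled-out version of the paper's appeal to ``precomposition of the totalisation of the fibrewise maps.''
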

\begin{proof}
($\Rightarrow$)
Suppose $X$ is a $T$-sheaf and $f : A \to B$ is a $T$-cover. 
Then note that $A^{*_B n} = \Sigma_{b : B} \fib_{f^{*n}}(b) = \Sigma_{b : B} \fib_f(b)^{*n}$ and that the map $(B \to X) \to (A^{*_B n} \to X)$ is given by precomposition of the totalisation of the fibrewise maps.
Hence by \cref{lem:fibrewise_precomp} we have that $(B \to X) \to (A^{*_B n} \to X)$ is an equivalecne if for each $b : B$ that diagonal map
\[ X \to (\fib_{f}(b)^{*n} \to X) \]
is an equivalence.
But note that $\fib_{f}(b)^{*n} \to X = \| \fib_{f}(b) \| \to X$ as $X$ is an $(n-2)$-type and since $X$ is a $T$-sheaf and $\fib_f(b) \in T$ as $f$ is a cover, we have that the diagonal map $X \to (\| \fib_f(b) \| \to X)$ is an equivalence.

($\Leftarrow$) Note that for any $X \in T$ the map $f : A \to 1$ is a $T$-cover trivially. Hence the sheaf condition gives the natural map $X \to (A^{*n} \to X)$ is an equivalence. But since $X$ is an $(n-2)$-type this is equivalent to the natural map $X \to (\| A \| \to X)$ being an equivalence.
\end{proof}

\begin{remark}
    Note we have stated this for covers and presentations.
    But an intermediate result didn't require closure under $\Sigma$ or $1 \in T$.
    In particular an $(n-2)$-type $X$ is a sheaf for $\langle T \rangle$ if and only if for all $Y \in T$ we have 
    \[ X \to (Y^{\ast n} \to X)\] 
    is an equivalence, where $T$ is now allowed to be any collection of types.
\end{remark}

This is a ``sheaf condition'' since we can expand $A^{\ast n}$ out to a colimit, corresponding to the covering condition for Grothendieck topologies. 
Suppose $X$ is a set, and $f : A \to B$ is a $T$-cover.
Since we present  $A \ast_B A$ as a colimit, by the universal property we have that 
\[\begin{tikzcd}
  X^{A \ast_B A}  \ar[dr, phantom, very near start, "\lrcorner"]\ar[r] \ar[d] & X^{A} \ar[d]  \\ 
  X^{A} \ar[r] & X^{A \times_B A} 
\end{tikzcd}\]
is a pullback diagram.
Exploiting the fact that $X$ is a set we can simplify this colimit, to say that 
$X^{A \ast_B A} \to X^{A} \rightrightarrows X^{A \times_B A}$ is an equaliser.
So we see that $X$ is a $T$-sheaf if and only if for any $T$-cover, $f : A \to B$ the natural map 
\[ X^B \to \lim(X^A \rightrightarrows X^{A \times_B A})\]
is an equivalence.
This is precisely a rephrasing of the sheaf condition of \cref{def:sheaf} where we have collected all of the maps in the covering into one map, by taking a disjoint union of their domains.

We also naturally obtain higher sheaf conditions. 
For example, we can present $A \ast_B A \ast_B A$ as the colimit of the diagram
\[\begin{tikzcd}
	&& A \\
	& {A \times_B A} & A \\
	&& A \\
	{A \times_B A \times_B A} & {A \times_B A} \\
	&& A \\
	& {A \times_B A} & A \\
	&& A
	\arrow[from=2-2, to=1-3]
	\arrow[from=2-2, to=2-3]
	\arrow[from=4-1, to=2-2]
	\arrow[from=4-1, to=4-2]
	\arrow[from=4-1, to=6-2]
	\arrow[from=4-2, to=3-3]
	\arrow[from=4-2, to=5-3]
	\arrow[from=6-2, to=6-3]
	\arrow[from=6-2, to=7-3]
\end{tikzcd}\]
Equivalently we can view this colimit as a partial \v{C}ech nerve of $A \to B$, in that $A \ast_B A \ast_B A$ is the colimit of 
\[\begin{tikzcd}
	{A \times_B A \times_B A} & {A\times_B A} & A
	\arrow[from=1-1, to=1-2]
	\arrow[shift right=4, from=1-1, to=1-2]
	\arrow[shift left=4, from=1-1, to=1-2]
	\arrow[shift left=2, from=1-2, to=1-1]
	\arrow[shift right=2, from=1-2, to=1-1]
	\arrow[shift left=3, from=1-2, to=1-3]
	\arrow[shift right=3, from=1-2, to=1-3]
	\arrow[from=1-3, to=1-2]
\end{tikzcd}\]
This then corresponds to the usual notion of stack condition for a stack of groupoids.
This equivalence can be increased to arbitrary length for any external natural number, but actually checking 
the coherence conditions to construct the isomorphism between the colimit of this diagram and the iterated join
 is not known to be possible internally to type theory for arbitrary $A$ and $B$.

\section{Projective Modalities}

In presheaf categories, representable objects can classified internally up to retracts as the ``tiny objects'', that is, 
those presheaves $F$ such that $\hom(F, \_)$ preserves colimits.
More specifically given a category $\mathcal{C}$, the full subcategory $\bar \cC \subseteq \mathrm{Psh}(\cC)$ of tiny objects satisfies $\mathrm{Psh}(\cC) \simeq \mathrm{Psh}(\bar \cC)$.
Tiny objects are difficult to pick out internally, since they are defined by an external right adjoint.
Recent work in modal type theory~\cite{gratzer2021mtt} and more specifically on adding tiny objects to type theory~\cite{riley2024tiny} has allowed this,
althought it requires fundamentally altering the type theory.
However, we can relax the condition somewhat, by noting that being tiny is a stronger form of being projective, a notion which can be internalised well.

\begin{definition}[Projective]
    A type $X$ is called \textbf{projective} if for any type family $B : X \to \UU$ such that $\prod_{x : X} \Trunc{B(x)}$ we have $\Trunc{\prod_{x : X} B(x)}$.
\end{definition}

A type $X$ being projective tells us type families over $X$ satisfy the axiom of choice. It can also be described in a fibred manner:
being projective is equivalent to saying surjective maps into $X$ split.
That is for all $f : Y \to X$ surjective, there is merely $g : X \to Y$ such that $f \circ g = \id$.

\begin{lemma}
\label{lem:proj_sigma}
    Projective objects are closed under $\Sigma$.
\end{lemma}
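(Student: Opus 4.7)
The plan is to unfold the definition of projectivity for $\sum_{x : X} B(x)$ and then apply projectivity twice: once fibrewise using that each $B(x)$ is projective, and once using that $X$ itself is projective.

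More concretely, suppose $X$ is projective and $B : X \to \UU$ is a family with each $B(x)$ projective. To show $\sum_{x : X} B(x)$ is projective, I take an arbitrary family $C : (\sum_{x : X} B(x)) \to \UU$ such that $\prod_{p : \sum_{x : X} B(x)} \Trunc{C(p)}$, and aim to produce $\Trunc{\prod_{p : \sum_{x : X} B(x)} C(p)}$. By currying, the hypothesis becomes $\prod_{x : X} \prod_{b : B(x)} \Trunc{C(x,b)}$, and the goal becomes $\Trunc{\prod_{x : X} \prod_{b : B(x)} C(x,b)}$.

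First I would fix $x : X$ and apply projectivity of $B(x)$ to the family $b \mapsto C(x,b)$, yielding $\Trunc{\prod_{b : B(x)} C(x,b)}$ for each $x : X$. Then I would apply projectivity of $X$ to the family $D(x) := \prod_{b : B(x)} C(x,b)$ to obtain $\Trunc{\prod_{x : X} \prod_{b : B(x)} C(x,b)}$, which is the required conclusion after uncurrying.

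There is really no obstacle here; the argument is just two nested applications of the axiom-of-choice form of projectivity, and the only thing to be careful about is bookkeeping the currying isomorphism $(\prod_{p : \sum_{x} B(x)} C(p)) \simeq (\prod_{x : X} \prod_{b : B(x)} C(x,b))$ and its truncated version, both of which are standard.
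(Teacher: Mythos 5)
Your proof is correct and follows exactly the same route as the paper's: curry the hypothesis over $\sum_{x:X} B(x)$, apply projectivity of each fibre $B(x)$ to get $\prod_{x:X}\Trunc{\prod_{b:B(x)} C(x,b)}$, then apply projectivity of $X$ and uncurry. In fact you spell out the two nested applications of choice slightly more explicitly than the paper, which compresses them into a single sentence.
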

\begin{proof}
    Let $X$ be projective and $Y : X \to \UU$ be a family of projective types.
    Then consider a type family $Z : \sum_{x : X} Y(x) \to \UU$ with a term of type:
    \[ \prod_{z : \sum_{x : X} Y(x)} \Trunc{Z(z)} \]
    This type is equivalent to:
    \[ \prod_{x : X} \prod_{y : Y(x)} \Trunc{Z(x,y)} \]
    Since $Y(x)$ is projective and $X$ is projective we have a term of type:
    \[  \Trunc*{ \prod_{x : X} \prod_{y : Y(x)} Z(x, y)} \]
    By undoing the first equivalence, we obtain the necessary term.
\end{proof}

\begin{lemma}
    Projective objects are closed under sums.
\end{lemma}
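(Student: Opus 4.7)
The plan is to prove this directly from the definition of projectivity, using the universal property of the binary coproduct $+$.

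Let $X$ and $Y$ be projective and let $B : X + Y \to \UU$ satisfy $\prod_{z : X + Y} \Trunc{B(z)}$; the goal is to produce $\Trunc{\prod_{z : X + Y} B(z)}$. The universal property of $+$ gives a natural equivalence
\[ \prod_{z : X + Y} B(z) \simeq \left(\prod_{x : X} B(\inl\, x)\right) \times \left(\prod_{y : Y} B(\inr\, y)\right), \]
and the hypothesis splits in the same way as $\prod_{x : X} \Trunc{B(\inl\, x)}$ together with $\prod_{y : Y} \Trunc{B(\inr\, y)}$.

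Next I would apply the projectivity of $X$ to the family $x \mapsto B(\inl\, x)$ to obtain $\Trunc{\prod_{x : X} B(\inl\, x)}$, and symmetrically use projectivity of $Y$ to obtain $\Trunc{\prod_{y : Y} B(\inr\, y)}$. Combining these via the equivalence $\Trunc{A} \times \Trunc{C} \simeq \Trunc{A \times C}$ (which holds because both sides are propositions, with the forward map being the standard pairing and the inverse defined by truncation recursion on each factor) and then the equivalence above yields the desired $\Trunc{\prod_{z : X + Y} B(z)}$.

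There is no real obstacle: the proof is essentially the universal property of $+$ applied to both the hypothesis and the goal, together with the compatibility of propositional truncation with binary products. An equally clean alternative would be to reduce to \cref{lem:proj_sigma} by writing $X + Y \simeq \sum_{b : \mathbf{2}} F(b)$ for $F(b)$ defined by case analysis as $X$ or $Y$, but this still requires verifying that $\mathbf{2}$ is projective, which again comes down to exactly the same product-of-truncations argument.
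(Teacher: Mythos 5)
Your proof is correct and follows essentially the same route as the paper's: split the family over $X + Y$ into its $\inl$- and $\inr$-components, apply projectivity of $X$ and $Y$ separately, and recombine the two truncated products using truncation recursion (the paper invokes the universal property of propositional truncation for exactly this step). Your extra remarks --- making the equivalence $\Trunc{A} \times \Trunc{C} \simeq \Trunc{A \times C}$ explicit, and noting the alternative reduction via \cref{lem:proj_sigma} with a projective $\mathbf{2}$ --- are accurate but do not change the underlying argument.
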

\begin{proof}
    Let $X$, $Y$ be projective. 
    Then suppose we have a type family $Z : X + Y \to \UU$ so that $(z : X + Y) \to \Trunc*{Z(z)}$.
    Then we have $\prod_{x : X} \Trunc*{Z(\inl x)}$ and $\prod_{y : Y} \Trunc*{Z(\inr y)}$, and so it follows we have terms of type $\Trunc{\prod_{x : X} Z(\inl x)}$ and $\Trunc{\prod_{x : X} Z(\inl x)}$ by projectivity of $X$ and $Y$.
    Using the universal property of propositional truncation we can then construct a term $\Trunc{\prod_{z : X + Y} Z(z)}$ and we are done.
\end{proof}

\begin{example}
    The unit type $1$ is projective since the types $\prod_{x : 1} \Trunc{X(x)}$ and $\Trunc{\prod_{x : 1} X(x)}$ are both equivalent to $\Trunc{X(*)}$.
    Thus any of the finite types given by $1 + \cdots + 1$ is projective. Furthermore since being projective is a proposition, all finite types are projective.
\end{example}

\begin{definition}
    A \textbf{projective presentation} of a modality is a presentation $T$ in which all $X \in T$ are projective.
\end{definition}

The previous lemma also implies that given a collection $T$ of projective types, that $\langle T \rangle$ is a projective presentation.
For projective presentations, we can describe sheafification of a proposition explicitly.

\begin{lemma}
    \label{lem:local_props}
    For a projective presentation $T$, and any proposition $P$ we have:
    \[ \bigcirc_T P = \exists_{A \in T} P^A\]
\end{lemma}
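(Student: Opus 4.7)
Let $Q := \exists_{A \in T} P^A = \Trunc{\sum_{i : I} P^{T(i)}}$. Both $Q$ and $\bigcirc_T P$ are propositions, so it suffices to construct maps in both directions.

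The main work is showing $Q$ is itself a $T$-sheaf. By \cref{cor:sheaf_cond} applied with $n = 1$ to the $T$-covers $A \to 1$ for $A \in T$, this reduces to checking that for every $A \in T$, any function $f : A \to Q$ produces an element of $Q$. The obstacle is that the witness $(i, h)$ asserted by each $f(a)$ is truncated and so cannot be extracted fibrewise. This is exactly what projectivity of $A$ is for: it promotes $f : A \to \Trunc{\sum_{i : I} P^{T(i)}}$ to $\Trunc{\prod_{a : A} \sum_{i : I} P^{T(i)}}$. Inside the truncation, type-theoretic choice distributes the $\Sigma$ past the $\Pi$, yielding $g : A \to I$ and $h : \prod_{a : A} P^{T(g(a))}$. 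Since $T$ is closed under $\Sigma$, $\sum_{a : A} T(g(a)) \in T$, and $h$ uncurries into a single map $\sum_{a : A} T(g(a)) \to P$ witnessing $Q$.

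Once $Q$ is a sheaf, the remaining direction is routine. The map $P \to Q$ is immediate (take $A := 1 \in T$ with the constant function at $p : P$), and by modal recursion this lifts uniquely to $\bigcirc_T P \to Q$. For the converse, $\bigcirc_T P$ is a propositional sheaf, hence $A$-null for every $A \in T$, so the diagonal $\bigcirc_T P \simeq (A \to \bigcirc_T P)$ is an equivalence. From $A \in T$ and $f : A \to P$, composing with $\eta_P : P \to \bigcirc_T P$ and then inverting this diagonal yields an element of $\bigcirc_T P$, and propositionality extends this to a map $Q \to \bigcirc_T P$.

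The hard part is the sheaf condition on $Q$; without projectivity one can pick witnesses fibrewise but not assemble them into a single pair $(B, h)$ with $B \in T$, which is precisely the data that closure under $\Sigma$ provides once the choice step has been performed.
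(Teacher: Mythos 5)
Your proof is correct and follows essentially the same route as the paper's: both directions are constructed using that $\bigcirc_T P$ is null at each $A \in T$ (for the map out of $\exists_{A \in T} P^A$), and the key step of showing $\exists_{A \in T} P^A$ is a $T$-sheaf is done identically, via projectivity of $A$ to commute the truncation, type-theoretic choice to extract $g : A \to I$ with $h : \prod_{a : A} P^{T(g(a))}$, and $\Sigma$-closure of $T$ to assemble the witness $\sum_{a : A} T(g(a))$. The only cosmetic difference is that you invoke \cref{cor:sheaf_cond} to justify checking nullity on the covers $A \to 1$, where the paper checks the (equivalent) condition $Q^{\Trunc{X}} = Q$ directly from the definition of the nullification modality.
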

\begin{proof}
    Modalities preserve propositions, so we can use propositional induction to define a map:
    \[ \exists_{A \in T} P^A \to \OO_T P\]
    Note for $A \in T$ we have an isomorphism $\OO_T P ^ {\Trunc A} = \OO_T P$, 
    so given $f : A \to P$ we get $| f | : \Trunc A  \to \OO_T P$ by $|f|(|a|) = \eta_P(f(a))$, which gives a point of $\OO_T P$

    For the other direction we first show that $\exists_{A \in T} P^A$ is a $T$-sheaf. Let  $X \in T$. Then we have:
    \begin{align*}
        \Trunc X  \to \exists_{A \in T} P^A 
        &= X \to \exists_{A \in T} P^A \\
        &= \Trunc*{ X \to \sum_{A \in T} P^A } \quad \text{ By projectivity of } X. \\
        &= \Trunc*{ \sum_{g : X \to T} \prod_{x : X} P^{g(x)} }
    \end{align*}
    Given such a $g$ we obtain a map $\sum_{x: X} g(x) \to P$, and since $T$ is closed under $\sum$, we have $\sum_{x: X} g(x)$ is in $T$. 
    So we have $\Trunc X  \to \exists_{A \in T} P^A = \exists_{A \in T}P^A$ for any $X \in T$ and thus the type is local.
    
    Now we define a map $\OO_T P \to \exists_{A \in T} P^A $.
    Since we have established the right-hand side is local, it's enough to define a map $P \to \exists_{A \in T} P^A$, which we can write down explicitly as $p \mapsto | (1, \mathrm{const}_p) |$.
\end{proof}

As a corollary we can deduce ordinary results from the Kripke-Joyal semantics, for example:
\begin{corollary}
    Let $T$ be a projective presentation of a modality.
    Then given a type $X$ and a predicate $\phi : X \to \Prop$ we have:
    \[ \OO_T \exists_{x :X} P(x) = \exists_{A \in T} \exists_{x : A \to X} \prod_{a : A} P(x(a)) \]
\end{corollary}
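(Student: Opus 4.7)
The plan is to reduce directly to Lemma \ref{lem:local_props} applied to the proposition $\exists_{x:X} P(x)$, and then simplify the exponential $A \to \exists_{x : X} P(x)$ using the projectivity of $A$.

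Applying the previous lemma gives
\[ \OO_T \bigl(\exists_{x : X} P(x)\bigr) = \exists_{A \in T} \bigl(A \to \exists_{x : X} P(x)\bigr). \]
So it suffices, for each $A \in T$, to produce an equivalence
\[ \bigl(A \to \exists_{x : X} P(x)\bigr) \;\simeq\; \exists_{x : A \to X} \prod_{a : A} P(x(a)), \]
since then existentially quantifying both sides over $A \in T$ gives the result.

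The key step is the use of projectivity of $A$: the type on the left is $A \to \Trunc*{\sum_{x : X} P(x)}$, and by projectivity (essentially the axiom of choice internal to $A$) this is equivalent to $\Trunc*{A \to \sum_{x : X} P(x)}$. Next I would apply the standard type-theoretic distributivity of $\Sigma$ over $\Pi$ (i.e.\ $\prod_{a : A} \sum_{x : X} P(x)$ rearranges to $\sum_{f : A \to X} \prod_{a : A} P(f(a))$) inside the truncation, producing $\Trunc*{\sum_{f : A \to X} \prod_{a : A} P(f(a))}$, which is by definition $\exists_{f : A \to X} \prod_{a : A} P(f(a))$.

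There isn't really a main obstacle here; the content of the corollary is entirely in Lemma \ref{lem:local_props}, and the only extra ingredient is the observation that projectivity of $A$ lets us commute $\Trunc{-}$ past the exponential $A \to -$. The only thing to be mildly careful about is that projectivity as stated applies to a dependent family, so I would either invoke it for the constant family $B(a) := \sum_{x : X} P(x)$, or, more cleanly, rewrite $A \to \Trunc*{\sum_{x : X} P(x)}$ as $\prod_{a : A} \Trunc*{\sum_{x : X} P(x)}$ before applying the projectivity hypothesis to get $\Trunc*{\prod_{a : A} \sum_{x : X} P(x)}$, and then continue as above.
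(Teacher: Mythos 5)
Your proposal is correct and takes essentially the same route as the paper, whose entire proof is a one-line appeal to \cref{lem:local_props} together with projectivity; you have simply spelled out the intended details (the equivalence $(A \to \Trunc{C}) \simeq \Trunc{A \to C}$ for projective $A$, valid since both sides are propositions with maps in both directions, followed by the $\Sigma$--$\Pi$ distributivity inside the truncation). Your care about applying projectivity to the constant family is exactly the right reading of the definition, so nothing further is needed.
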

\begin{proof}
    Follows quickly from lemma~\ref*{lem:local_props}, and projectivity.
\end{proof}

Using projectivity, we can show a local version of the axiom of choice holds true in the subuniverse.
This principle has come to be known as ``local choice'', and has had striking applications in synthetic algebraic geometry~\cite{cherubini2023foundation}.
This more general principle is also known externally, for $1$-topoi. 
The benefit of this internal version is the ability to apply it to higher categories.

\begin{definition}
    A map $f : X \to Y$ is \textbf{$T$-surjective} if for all $y : Y$ we have a term of type $\OO_T \Trunc{\fib_f(y)} $.
    This is the externalisation of the condition of being surjective in the subtopos.
\end{definition}

\begin{lemma}[$T$-local partial choice]
\label{lem:local_choice}
    Consider a diagram
    \[\begin{tikzcd}
        & X \ar[d, "f"] \\
        D \ar[r, "g"] & C 
    \end{tikzcd}\]
    where $D$ is projective and $f$ is $T$-surjective.
    Then there exists a type $Z$, with a map $Z \to X$ and a $T$-cover $Z \to D$ such that
    \[\begin{tikzcd}
       & & X \ar[d, "f"] \\
       Z \ar[r] \ar[urr] & D \ar[r, "g"] & C 
    \end{tikzcd}\]
    commutes.
\end{lemma}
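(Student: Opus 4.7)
The plan is to combine two applications of projectivity---one on a type $A$ in the presentation, one on the projective base $D$---with the explicit description of sheafified propositions from \cref{lem:local_props}. First I would unpack $T$-surjectivity of $f$: for each $d : D$ we have $\OO_T \Trunc{\fib_f(g(d))}$, which by \cref{lem:local_props} unfolds to $\exists_{A \in T}\bigl(A \to \Trunc{\fib_f(g(d))}\bigr)$. Since every $A \in T$ is projective, the natural map $\Trunc{A \to \fib_f(g(d))} \to (A \to \Trunc{\fib_f(g(d))})$ is an equivalence of propositions; and since the surrounding $\exists$ is already a propositional truncation, this inner truncation can be absorbed, yielding for each $d : D$
\[ \exists_{A \in T}\bigl(A \to \fib_f(g(d))\bigr). \]

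Next, since $D$ is projective I would apply its choice principle to the family $d \mapsto \sum_{A \in T}(A \to \fib_f(g(d)))$, obtaining the mere existence of a function sending each $d : D$ to some $A_d \in T$ together with $\psi_d : A_d \to \fib_f(g(d))$. The conclusion of the lemma is itself a mere-existence statement, so we may open this outer truncation and work with a genuine such function. I would then set $Z := \sum_{d : D} A_d$, whose first projection $\pi : Z \to D$ has fibre $A_d \in T$ over $d$ and so is a $T$-cover; the map $Z \to X$ given by $(d,a) \mapsto \pi_1(\psi_d(a))$---using the first projection out of $\fib_f(g(d)) = \sum_{x : X}(f(x) = g(d))$---makes the square commute by construction, since the second component of $\psi_d(a)$ witnesses $f(\pi_1(\psi_d(a))) = g(d)$.

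The main step to watch is the absorption of the inner truncation at the first step, which relies essentially on $A \in T$ being projective; this is where the hypothesis that $T$ is a \emph{projective} presentation (rather than just a presentation) is really used.
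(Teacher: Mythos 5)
Your proof is correct and follows essentially the same route as the paper's: unpack $T$-surjectivity via \cref{lem:local_props}, apply projectivity of $D$ to (merely) choose for each $d : D$ a covering type $A_d \in T$ with a map into the fibre, and take $Z := \sum_{d : D} A_d$ with the evident projection and lift. The one genuine divergence is where the residual truncation on the fibrewise maps is removed: you absorb it early, inside the $\exists$, using projectivity of each $A \in T$ to turn $A \to \Trunc{\fib_f(g(d))}$ into $\Trunc{A \to \fib_f(g(d))}$, whereas the paper carries the truncated maps $A_d \to \Trunc{\fib_f(g(d))}$ all the way through and untruncates only at the end, using that $Z$ itself is projective by \cref{lem:proj_sigma} (closure of projectives under $\Sigma$) --- both orderings are valid, yours trading the appeal to $\Sigma$-closure for a fibrewise application of projectivity, and your final map $Z \to X$ comes out fully explicit rather than merely existing after one last truncation-strip.
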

\begin{proof}
    By $T$-surjectivity and Lemma~\ref{lem:local_props} we have a term of type
    \[  \prod_{d : D} \OO_T \Trunc{\fib_f(g(d))} = \prod_{d : D} \Trunc*{\sum_{Y \in T} \Trunc{ \fib_f(g(d))}^Y } \]
    Since $D$ is projective, and we are proving a proposition, we obtain a term
    \[ \phi : \prod_{d : D} \sum_{Y \in T} \Trunc{ \fib_f(g(d))}^Y  \]
    Now set $Z :\equiv \sum_{d : D} \pi_1(\phi(d))$.
    Projection down to $d$ then has fibres $\pi_1(\phi(d))$, which are in $T$ by assumption, making $Z \to D$ a $T$-cover.
    We define a map $Z \to X$ by noting we have a term $\prod_{(d, y) : Z} \Trunc{ \fib_f(g(d)) }$ given by $(d, y) \mapsto \pi_2(\phi(d))(y)$.  
    Since projective objects are closed under $\Sigma$ we note that $Z$ is projective, and we get merely $\prod_{(d, y) : Z} \fib_f(g(d))$, which then merely gives a map $Z \to X$ making the diagram commute.
\end{proof}

Note we can rephrase this more similarly to the axiom of choice, in an indexed fashion.
$T$-local choice implies that for all $D$ projective and all $B : D \to U$ such that $\OO_T \Trunc{B(d)}$ for all $d : D$, there is merely a $T$-cover $f : Z \to D$
such that $B(f(z))$ for all $z : Z$.
Thus we are allowed to make choices on covers, but these choices might not glue to form a whole section.

\section{Cohomology}

Cohomology detects failures of gluing problems. 
Every topos comes with a cohomology theory, related to gluing objects defined on the covers of the topology.
Normally the definition of cohomology is non-constructive, requiring the axiom of choice to ensure the given category has enough injective abelian groups.
However in higher topoi, this problem is averted.
Cohomology functors are all representable, and the representing objects are easily described.
We use this to our advantage to describe cohomology theories for given subuniverses.

\begin{definition}[\cite{licata2014EMspace}]
  \label{def:EM}
    Given a group $G$, and $n : \bN$, the $n$\textsuperscript{th} Eilenberg-MacLane space is denoted $K(G, n)$.
    It is a pointed $n-1$-connected, $n$-truncated type satisfying $\Omega K(G, n) = K(G, n-1)$ with $K(G, 0) = G$.
\end{definition}

\begin{definition}
    Given a type $X$, and a group $G$, we define its $n$\textsuperscript{th} \textbf{cohomology group} as
    \[ H^n(X, G) := \Trunc{X \to K(G, n)}_0\]
\end{definition}

There is a close relationship between the axiom of choice and cohomology. 
In fact one way to understand cohomology in a ``universal'' way is as an obstacle to the axiom of choice holding for a certain type.
For example, it is easily seen that any projective type has zero cohomology:

\begin{proposition}
    Let $X$ be a projective type.
    Then for all $n > 0$ we have $H^n(X, G) = 0$.
\end{proposition}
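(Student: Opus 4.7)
The plan is to show that the set $\Trunc{X \to K(G,n)}_0$ is contractible (equivalently, trivial as an abelian group) by establishing that every map $X \to K(G,n)$ is merely equal to the constant map at the basepoint.

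The key ingredient I will use is the connectedness of Eilenberg-MacLane spaces coming from Definition~\ref{def:EM}: for $n \geq 1$, $K(G,n)$ is $(n-1)$-connected, and in particular $0$-connected, so any two points $k_1, k_2 : K(G,n)$ satisfy $\Trunc{k_1 = k_2}$. Fix an arbitrary $f : X \to K(G,n)$ and let $c : X \to K(G,n)$ denote the constant map at the basepoint. Applied pointwise, the connectedness fact gives a term of type $\prod_{x : X} \Trunc{f(x) = c(x)}$.

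The heart of the argument is now a direct application of projectivity: since $X$ is projective, the pointwise mere equality promotes to $\Trunc*{\prod_{x : X} f(x) = c(x)}$, and function extensionality then yields $\Trunc{f = c}$. Since equality in a set is a proposition, equality in $\Trunc{X \to K(G,n)}_0$ is itself a proposition, so this mere equality is enough to conclude $|f|_0 = |c|_0$. Hence every element of $H^n(X, G)$ equals the basepoint, making the group trivial.

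There is no serious obstacle to overcome here; the proof is essentially the observation that projectivity is exactly the property needed to turn pointwise mere equality into mere pointwise equality, which combined with connectedness of $K(G,n)$ collapses all maps to the constant one.
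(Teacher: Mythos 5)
Your proof is correct and follows essentially the same route as the paper's: use $0$-connectedness of $K(G,n)$ for $n \geq 1$ to get pointwise mere equality of any $\chi : X \to K(G,n)$ with the constant map, then apply projectivity of $X$ (with $B(x) :\equiv (\chi(x) = \mathrm{pt})$) to conclude $\chi$ is merely constant. The only difference is that you spell out the function extensionality step and the fact that equality in the $0$-truncation is a proposition, which the paper leaves implicit.
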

\begin{proof}
    We need to show that for all $\chi : X \to K(G, n)$ that $\Trunc{\chi = \mathrm{const}}$, where $\mathrm{const}$ is the constant map at the point of $K(G, n)$.
    But note, since $n \geq 1$, we have that $K(G, n)$ is $0$-connected.
    Hence we have for all $x : X$ that $\Trunc{\chi(x) = \mathrm{pt}}$.
    But $X$ is projective so we deduce that $\chi$ is merely the constant map and we are done.
\end{proof}

Since, given a projective presentation, local choice holds, we may wonder whether this can be used to compute cohomology, and compare cohomology in a subuniverse.
For this we will need to consider the localisation $\OO K(G, n)$, for $G$ a group sheaf.
For any lex modality this is precisely the Eilenberg-Maclane space in the subuniverse.

We briefly check that this does correspond to the correct subobject.
\begin{definition}
  Given a lex modality $\OO$, the $k$-truncation of a sheaf $X$ in the subuniverse is $\OO \Trunc{X}_k$.
  We will call a $k$-truncated sheaf an $\OO$-$k$-truncated type.
  We will say a sheaf $X$ is $\OO$-$k$-connected if $\prod_{x, y : X} \OO \Trunc{x = y}_k$, analogously.
\end{definition}

\begin{lemma}
  \label{lem:modal_connectivity}
  Suppose $X$ is a $k$-connected type. Then $\OO X$ is $\OO$-$k$-connected.
\end{lemma}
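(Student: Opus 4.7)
The plan is to unpack the definition so that being $\OO$-$k$-connected becomes a concrete contractibility claim about $\OO \Trunc{\OO X}_k$, then identify this with $\OO \Trunc{X}_k$, which is trivially contractible by hypothesis.

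By the preceding definition of the $k$-truncation in the subuniverse, $\OO X$ is $\OO$-$k$-connected precisely when $\OO \Trunc{\OO X}_k$ is contractible. So it suffices to produce an equivalence
\[ \OO \Trunc{\OO X}_k = \OO \Trunc{X}_k, \]
since the right-hand side is $\OO 1 = 1$ by $k$-connectedness of $X$ together with lex preservation of contractible types (a consequence of preservation of pullbacks).

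For the equivalence, I would argue via universal properties. Both $\OO \Trunc{X}_k$ and $\OO \Trunc{\OO X}_k$ are $\OO$-modal by construction and $k$-truncated by the cited fact from Spitters et al.\ that lex modalities preserve truncation levels. Given any $\OO$-modal $k$-type $Y$, modal recursion and then $k$-truncation recursion (valid since $Y$ is a $k$-type) identify $(\OO \Trunc{X}_k \to Y)$ with $(X \to Y)$; symmetrically, iterated modal recursion (first on $\OO \Trunc{\OO X}_k$, then on $\OO X$ via $\eta_X$, using that $Y$ is modal) identifies $(\OO \Trunc{\OO X}_k \to Y)$ with $(X \to Y)$. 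Hence both types represent the same functor on $\OO$-modal $k$-types, and Yoneda produces the required equivalence.

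The main care point, although a routine one, is tracking the two layers of recursion on each side and checking at every step that the hypotheses (modal, $k$-truncated) are met so the corresponding recursion principle can be applied; once the equivalence is in place, lexness and the assumed $k$-connectedness of $X$ immediately finish the proof.
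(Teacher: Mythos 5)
Your argument is a correct proof that $\OO\Trunc{\OO X}_k$ is contractible: the representability step works (both $\OO\Trunc{X}_k$ and $\OO\Trunc{\OO X}_k$ are modal, and are $k$-truncated because lex modalities preserve truncation levels, and your two chains of recursions legitimately identify maps out of each with maps out of $X$ into any modal $k$-type), and $\OO\Trunc{X}_k = \OO 1 = 1$ then follows from $k$-connectedness of $X$ (for that last step you need no lexness: $1$ is modal for any reflective subuniverse). But this is a genuinely different route from the paper's, and it starts from a mis-unpacking of the paper's definition: the paper defines $\OO$-$k$-connectedness \emph{pathwise}, as an inhabitant of $\prod_{x,y : \OO X} \OO\Trunc{x = y}_k$, not as contractibility of the modal $k$-truncation. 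The paper's proof is correspondingly three lines: the goal is a family of modal types, so modal induction reduces it to $x, y : X$, where $k$-connectedness of $X$ is claimed to give a point of $\Trunc{x = y}_k$, which is pushed forward along $\eta$ to $\OO\Trunc{\eta(x) = \eta(y)}_k$.

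The two notions do not coincide at the same index, so read against the paper's literal definition your proof lands an index off: from contractibility of $\OO\Trunc{\OO X}_k$, using $\Trunc{x = y}_{k-1} \simeq (|x| = |y|)$ in $\Trunc{\OO X}_k$ together with lexness, one extracts only $\prod_{x,y : \OO X}\OO\Trunc{x = y}_{k-1}$, one level below the paper's condition (and conversely the pathwise condition is vacuously true for the empty type, whose modal truncation is rarely contractible, so neither notion implies the other). In fairness, the paper's own step ``$X$ is $k$-connected so that $\Trunc{x = y}_k$'' contains exactly the same off-by-one --- $k$-connectedness makes $\Trunc{x = y}_{k-1}$ contractible and only \emph{merely} inhabits $\Trunc{x = y}_k$ --- so the discrepancy traces back to the paper's nonstandard definition rather than to a flaw in your reasoning. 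Comparing what each route buys: yours yields the reusable equivalence $\OO\Trunc{\OO X}_k \simeq \OO\Trunc{X}_k$, i.e.\ that sheafification commutes with truncation, which is closer to what the subsequent Eilenberg--MacLane proposition actually needs, but it requires lexness (to know the truncations are $k$-types) and the naturality bookkeeping for Yoneda; the paper's route buys brevity and economy of hypotheses, using only modal induction and never invoking lexness.
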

\begin{proof}
  To define a map $\prod_{x, y : \OO X} \OO \Trunc{x = y}_k$ we apply modal induction and see it is enough when $x, y : X$.
  We have $X$ is $k$-connected so that $\Trunc{x = y}_k$ and hence $\Trunc{\eta(x) = \eta(y)}_k$ in $\OO X$.
  Thus we obtain a term of $\OO \Trunc{\eta(x) = \eta(y)}_k$ and we are done.
\end{proof}

\begin{proposition}
  For any group sheaf $G$, the type $\OO K(G, n)$ is an $n$\textsuperscript{th} Eilenberg-MacLane of $G$ 
  in the subuniverse $U_\OO$.
\end{proposition}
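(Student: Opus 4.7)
The plan is to verify each clause of Definition~\ref{def:EM} for the putative EM-space $\OO K(G,n)$ internal to $\UU_\OO$, appealing at each step to the lexness of $\OO$ together with the results already in the paper. The four things to check are: pointedness and the base case $n = 0$; $\OO$-$n$-truncatedness; $\OO$-$(n-1)$-connectedness; and the loop-space relation $\Omega(\OO K(G,n)) = \OO K(G, n-1)$.

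First I would handle the easy clauses. Pointedness is immediate: $\eta$ applied to the basepoint of $K(G, n)$ gives a basepoint of $\OO K(G, n)$. For the base case, $G$ is already a sheaf by assumption, so $\OO K(G, 0) = \OO G = G$. For $\OO$-$n$-truncatedness, lex modalities preserve truncation levels (cited in the background section on lex modalities), so $\OO K(G, n)$ is already genuinely $n$-truncated, hence certainly $\OO$-$n$-truncated. For $\OO$-$(n-1)$-connectedness, $K(G, n)$ is $(n-1)$-connected by definition, so \cref{lem:modal_connectivity} immediately gives that $\OO K(G, n)$ is $\OO$-$(n-1)$-connected.

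The only clause requiring a small calculation is the loop-space relation. Since $\OO$ is lex it preserves pullbacks, and $\OO$ of the unit type is the unit type, so
\[ \Omega\bigl(\OO K(G, n)\bigr) = * \times_{\OO K(G, n)} * = \OO\bigl(* \times_{K(G,n)} *\bigr) = \OO \Omega K(G, n) = \OO K(G, n-1), \]
using the defining property $\Omega K(G, n) = K(G, n-1)$ in the last step. Here the basepoint on the right is $\eta$ of the basepoint of $K(G, n-1)$, which matches the one on the left under the equivalence.

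There is no serious obstacle here; the proof is really just a ledger entry verifying that each defining clause of an EM-space transports across a lex modality. The closest thing to a subtlety is making sure the notion of ``loop space in the subuniverse'' agrees with the ordinary loop space: this works because identity types of sheaves are sheaves and because lexness makes pullbacks in $\UU_\OO$ agree with those computed in $\UU$, which is exactly what powers the displayed calculation above.
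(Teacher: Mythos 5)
Your proof is correct and takes essentially the same route as the paper: the same clause-by-clause verification, with \cref{lem:modal_connectivity} for $\OO$-$(n-1)$-connectedness, preservation of truncation levels by lex modalities for $\OO$-$n$-truncatedness, and $\OO K(G,0) = \OO G = G$ for the base case. The only difference is cosmetic, in the loop-space clause: the paper invokes the characterisation $\OO(x = y) \simeq (\eta(x) = \eta(y))$ of lex modalities from \cite{spitters2020modalities}, whereas you apply pullback preservation directly to $\Omega K(G,n) = * \times_{K(G,n)} *$ --- an equivalent use of lexness (and arguably more self-contained, since pullback preservation is the paper's stated definition of a lex modality), though note your closing remark slightly misattributes the agreement of pullbacks in $\UU_\OO$ with those in $\UU$ to lexness, when this holds for any modality by closure of modal types under pullback.
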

\begin{proof}
  Translating~\cref{def:sheaf} into the subuniverse, we need to show that $\OO K(G, n)$ is $\OO$-$(n-1)$-connected, $\OO$-$n$-truncated and that $\Omega \OO K(G, n) = \OO K(G, n-1)$.
  Finally we need to check $\OO K(G, 0) = G$.
  The first fact follows since any modality preserves connectivity~\ref{lem:modal_connectivity}, 
  and by definition $K(G, n)$ is $n-1$-connected.
  Similarly lex modalities preserve truncation levels~\cite[Corollary 3.9]{spitters2020modalities}, it follows that $\OO K(G, n)$ is also $\OO$-$n$-truncated.

  Finally lex modalities satisfy $\OO (x = y) \simeq \eta(x) = \eta(y)$~\cite[Theorem 3.1.ix]{spitters2020modalities}.
  Hence we can deduce that for $\eta(\mathrm{pt}) : \OO K(G, n)$ we have:
   \[ \eta(\mathrm{pt}) = \eta(\mathrm{pt}) = \OO(\mathrm{pt} = \mathrm{pt}) = \OO K(G, n-1) \]
  Finally we have $\OO K(G, 0) = \OO G = G$ since $G$ is a sheaf.
\end{proof}

Hence we can use the Eilenberg MacLane space in the subuniverse to define cohomology..
\begin{definition}
    Given a lex modality $\OO$ we can define its cohomology theory. 
    For any sheaf $X$ and any group sheaf $G$ we define
    \[ H^n_\OO(X, G) := \OO \Trunc{ X \to \OO K(G, n) }_0 \]
\end{definition}

\begin{proposition}
  For any $H^n_\OO(X, G)$ is the externalisation of the internal cohomology of $X$ 
\end{proposition}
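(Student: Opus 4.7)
The plan is to unwind what ``internal cohomology of $X$ in the subuniverse'' expands to, and check that after externalising, each piece matches the definition of $H^n_\OO(X, G)$ given just above the statement.

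First I would observe that, internally to $\UU_\OO$, the definition of cohomology is the same formula as in the ambient theory, namely $\Trunc{X \to K(G,n)}_0$, but interpreted so that every construction lives in $\UU_\OO$. Two things therefore need translating into the ambient $\UU$: the Eilenberg--MacLane space $K(G,n)$ as formed in $\UU_\OO$, and the $0$-truncation $\Trunc{-}_0$ as formed in $\UU_\OO$. For the first, I would cite the preceding proposition, which identifies the $\UU_\OO$-internal Eilenberg--MacLane space on the group sheaf $G$ with $\OO K(G,n)$. For the second, I would invoke the earlier definition stating that the $k$-truncation of a sheaf $Y$ inside $\UU_\OO$ is $\OO \Trunc{Y}_k$, specialised to $k=0$.

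Putting the pieces together, the internal cohomology reads, after externalisation,
\[ \OO \Trunc{X \to \OO K(G,n)}_0, \]
which is exactly $H^n_\OO(X,G)$ by definition. So the proof is essentially a chain of equalities justified by the two lemmas/propositions listed above, together with the observation that function types $X \to Y$ in $\UU_\OO$ externalise to the same function type in $\UU$ whenever $Y$ is modal (which $\OO K(G,n)$ is, being in the image of $\OO$).

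The only real content, and the part I expect to require the most care, is making precise what ``externalisation'' means so that the equation above is genuinely an equality of types in $\UU$ rather than an informal correspondence. Concretely, I would spell out that a sheaf-valued construction interpreted in $\UU_\OO$ produces a modal type whose underlying type in $\UU$ is obtained by forgetting modality, and then check that each step (forming $K(G,n)$, forming the function type, taking $0$-truncation) commutes with this forgetful passage — the first by the preceding proposition, the second because the codomain is modal, and the third by the definition of truncation in $\UU_\OO$. No new calculations should be needed beyond these bookkeeping checks.
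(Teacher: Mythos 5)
Your proposal matches the paper's proof: both identify the internal Eilenberg--MacLane space with $\OO K(G,n)$ via the preceding proposition, identify the internal set truncation with $\OO \Trunc{-}_0$, and conclude the formula is the direct translation of the cohomology definition into the subuniverse. Your extra bookkeeping remark that function types into modal types are themselves modal is a point the paper leaves implicit, but it is the same argument.
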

\begin{proof}
  By the previous proposition, $\OO K(G, n)$ is the Eilenberg MacLane space of $G$ in the subuniverse.
  Similarly $\OO\Trunc{\_}_0$ is the set truncation.
  Thus this is exactly the translation of the definition of cohomology into the subuniverse.
\end{proof}

\begin{remark}
    There is a canonical map $\OO H^n(X, G) \to H^n_\OO(X, G)$ defined as follows.
    Since $H^n_\OO(X, G)$ is a sheaf it is enough to define a map from $H^n(X, G)$.
    Similarly since $H^n_\OO(X, G)$ is a set, it is enough to define the untruncated map.
    Given a term $\chi : X \to K(G, n)$ we obtain a term $\eta \circ \chi : X \to \OO K(G, n)$,
    which gives a term of $H^n_\OO(X, G)$.
\end{remark}

A key use for the local choice principle shown in~\cref{lem:local_choice} is computing cohomology classes
in the subuniverse.
In particular, a cohomology class $\chi : X \to \OO K(G, 1)$ satisfies 
\[ \prod_{x : X} \OO  \Trunc{\chi(x) = \mathrm{pt}}\] and if $X$ is projective then we can find a cover $\phi : Z \to X$
such that \[\Trunc*{\prod_{z : Z} \chi(\phi(z)) = \mathrm{pt}}  \]
We can then classify how these glue to global sections on $X$ to determine the cohomology classes.

We now determine a class of abelian groups so that cohomology agrees in the subuniverse, up to sheafification.
Note that for any type $X$ and any abelian group $A$, we have a sequence
\[  A^X \to A^{X \times X} \to A^{X \times X \times X }  \]
where the two maps are 
\[  d : A^X \to A^{X \times X} \]
\[ d(f)(x, x') := f(x) - f(x') \]
  
\[ d : A^{X \times X} \to A^{X \times X \times X} \]
\[ d(f)(x, x', x'') := f(x, x') - f(x, x'') + f(x',x'') \]

Given a presentation $T$ 
we will say $A$ satisfies \textbf{descent for $T$} if for all $X \in T$ the above sequence is exact.

\begin{lemma}
  \label{lem:desc_exact_cover}
    Suppose $A$ satisfies descent for a presentation $T$.
    Then for all $T$-covers $\phi : Z \to X$ there is an exact sequence
    \[ A^{Z} \to A^{Z \times_X Z} \to A^{Z \times_X Z \times_X Z} \]
\end{lemma}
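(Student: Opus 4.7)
The plan is to reduce exactness of the complex to the descent hypothesis, one fibre of $\phi$ at a time. Since $\phi : Z \to X$ is a $T$-cover, every fibre $\fib_\phi(x)$ lies in $T$, so the descent hypothesis for $A$ applies at each fibre.

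I would first use the equivalences $Z \simeq \sum_{x:X} \fib_\phi(x)$, $Z \times_X Z \simeq \sum_{x:X} \fib_\phi(x) \times \fib_\phi(x)$, and similarly for the triple pullback, together with $\Pi$-$\Sigma$ currying, to identify
\[
A^Z \simeq \prod_{x:X} A^{\fib_\phi(x)}, \qquad A^{Z \times_X Z} \simeq \prod_{x:X} A^{\fib_\phi(x) \times \fib_\phi(x)},
\]
and analogously for $A^{Z \times_X Z \times_X Z}$. A direct calculation, essentially copying the definitions of the differences $f(z) - f(z')$ and $g(z, z') - g(z, z'') + g(z', z'')$, shows that under these identifications the differentials $d_1$ and $d_2$ act componentwise, so that the complex in the statement is the product over $x : X$ of the descent complexes for the individual fibres $\fib_\phi(x) \in T$.

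By the descent hypothesis, each such fibre complex is exact at the middle term. Exactness of the whole complex then reduces to the statement that a product of exact sequences of abelian groups is exact. Since kernels commute with products, one has $\ker d_2 \simeq \prod_{x:X} \ker d_2^x \simeq \prod_{x:X} \im d_1^x$, and it remains to identify this with $\im d_1 \subseteq A^{Z \times_X Z}$.

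The main obstacle is precisely this last identification: given fibrewise mere preimages $f_x$ for a cocycle $g \in \ker d_2$, one must produce a single uniform preimage $f \in A^Z$. Classically this is the AB4 exactness of products in $\mathrm{Ab}$, and in the HoTT setting one invokes it as a generic property of the category of abelian groups, independent of the particular cover $\phi$; with this in hand the argument collapses to the fibrewise calculation above.
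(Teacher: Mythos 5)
Your proposal is correct and takes essentially the same route as the paper's proof, which likewise writes $Z_x$ for the fibre of $\phi$ at $x$ (so $Z_x \in T$ since $\phi$ is a $T$-cover), applies the descent hypothesis to get exactness of each fibre complex, invokes exactness of products, and identifies $\prod_{x : X} A^{Z_x} = A^{(x : X) \times Z_x} = A^Z$ and similarly for the double and triple terms. Even the step you flag as the main obstacle --- producing a uniform preimage from fibrewise mere preimages --- is dispatched in the paper by the same one-line appeal (``since product preserves exactness''), so your account is, if anything, more explicit about where the argument leans on that property.
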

\begin{proof}
    For each $x : X$ write $Z_x$ for the fiber of $\phi$ at $x$. Then for each $x$ we have an exact sequence
    \[  A^{Z_x} \to A^{Z_x \times Z_x} \to A^{Z_x \times Z_x \times Z_x}  \]
    by assumption.
    Since product preserves exactness we have 
    \[ \prod_{x : X} A^{Z_x} \to \prod_{x : X} A^{Z_x \times Z_x} \to \prod_{x : X} A^{Z_x \times Z_x \times Z_x} \]
    is exact.
    But $\prod_{x : X} A^{Z_x} = A^{(x : X) \times Z_x} = A^Z$
    and similarly for the other two terms.
    The other products simplify similarly.
\end{proof}

\begin{lemma}[Gluing]
  \label{lem:gluing}
    Let $X$ be a type, $B : X \to \UU$ be a pointwise merely inhabited type family,
    and $C : X \to \Set_\UU$ be a family of $0$-types.
    Suppose we have 
    \[ s : \prod_{x : X} \prod_{b : B(x)} C(x) \]
    such that 
    \[ H : \prod_{x : X} \prod_{b_1 b_2 : B(x)} s_x(b_1) = s_x(b_2) \]
    Then there exists a map 
    \[ s' : \prod_{x: X} C(x) \]
    extending $s$. 
\end{lemma}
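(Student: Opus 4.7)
The plan is to apply the standard fact that a weakly constant function from any type into a set factors (uniquely) through the propositional truncation of its domain; this is sometimes called Kraus's theorem. The hypothesis $H$ is precisely the condition that, for each fixed $x : X$, the function $s_x : B(x) \to C(x)$ is weakly constant, and $C(x)$ is a set by assumption, so this factorization applies fibrewise.

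First I would fix $x : X$ and produce $\bar s_x : \Trunc{B(x)} \to C(x)$ together with a homotopy witnessing $\bar s_x \circ |\cdot| = s_x$. The construction can be made explicit: one defines $\bar s_x(u)$ as the unique element of the image of $s_x$, which is a proposition because $C(x)$ is a set and $s_x$ is weakly constant (so any two elements of the image are equal, and the image is inhabited whenever $u : \Trunc{B(x)}$). Concretely, set
\[ \bar s_x(u) := \pi_1\bigl(\mathrm{pt}\bigl(\exists_{b : B(x)} \text{fixed element}\bigr)\bigr), \]
or, more cleanly, use the fact that the subtype $\sum_{c : C(x)} \exists_{b : B(x)} s_x(b) = c$ is a proposition (by $H$ and $C(x)$ being a set) and is inhabited from any $u : \Trunc{B(x)}$ by propositional truncation recursion into a proposition.

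Next, since $B(x)$ is merely inhabited by hypothesis, for each $x$ we have a term $t_x : \Trunc{B(x)}$, and I define
\[ s'(x) := \bar s_x(t_x). \]
The factorization identity $\bar s_x(|b|) = s_x(b)$ gives the desired extension property: for any $x$ and any $b : B(x)$, we have $s'(x) = \bar s_x(t_x) = \bar s_x(|b|) = s_x(b)$, where the middle equality uses that $\Trunc{B(x)}$ is a proposition so $t_x = |b|$.

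The main obstacle, such as it is, is the construction of $\bar s_x$; but this is entirely standard in HoTT and uses only that $C(x)$ is a $0$-type and $s_x$ is weakly constant. No use of projectivity, $T$-covers, or modalities is required, and the argument is pointwise in $x$, so nothing depends on structure of $X$ beyond being a type.
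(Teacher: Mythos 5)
Your proof is correct and matches the paper's argument: both proceed pointwise in $x$, factoring the weakly constant map $s_x : B(x) \to C(x)$ through $\Trunc{B(x)}$ using that $C(x)$ is a set, and then evaluating at the given inhabitant of $\Trunc{B(x)}$. The only difference is that you additionally spell out the standard construction of the factorization (via the propositional image subtype $\sum_{c : C(x)} \exists_{b : B(x)}\, s_x(b) = c$), which the paper simply invokes as a known correspondence.
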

\begin{proof}
  Let $x : X$. 
  Then we have $ \Trunc{B(x)} $ since $B$ is inhabitted.
  So it suffices to find a map $\Trunc{B(x)} \to C(x)$.
  But since $C(x)$ is a set, maps $\Trunc{B(x)} \to C(x)$ correspond to constant maps $B(x) \to C(x)$.
  We have precisely this data by $s_x$, by assumption.
\end{proof}

\begin{theorem}
  \label{th:zero_cohom}
    Let $T$ be a projective presentation and $A$ an abelian group satisfying descent for $T$.
    Then for all projective $X$, we have $H^1_\OO(X, A) = 0$, where $0$ is the trivial abelian group.
\end{theorem}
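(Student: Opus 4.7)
The plan is to trivialize any cohomology class using local choice on the target and then descend via the exact sequence provided by descent. Fix $\chi : X \to \OO K(A, 1)$ representing an arbitrary element of $H^1_\OO(X, A) = \OO \Trunc{X \to \OO K(A, 1)}_0$; since this is a modal set, it suffices to construct a term of $\prod_{x : X} \chi(x) = \eta(\mathrm{pt})$.

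Since $K(A, 1)$ is $0$-connected, Lemma~\ref{lem:modal_connectivity} shows $\OO K(A, 1)$ is $\OO$-$0$-connected, so the basepoint inclusion $1 \to \OO K(A, 1)$ is $T$-surjective. Applying Lemma~\ref{lem:local_choice} with $D = X$ projective and $g = \chi$ produces a $T$-cover $\phi : Z \to X$ together with a homotopy $H : \prod_{z : Z} \chi(\phi z) = \eta(\mathrm{pt})$, trivializing $\chi$ on the cover.

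Next I descend this trivialization. Because $\OO$ is lex and $A$ is a sheaf, $\Omega \OO K(A, 1) \simeq \OO \Omega K(A, 1) \simeq \OO A \simeq A$. On $W := Z \times_X Z$, define the cocycle $c : W \to A$ by $c(z_1, z_2) := H(z_1) \cdot H(z_2)^{-1}$, which makes sense since $\phi z_1 = \phi z_2$. A direct computation gives $c(z_1, z_2) - c(z_1, z_3) + c(z_2, z_3) = 0$ on $Z \times_X Z \times_X Z$, placing $c$ in the kernel of the second map of Lemma~\ref{lem:desc_exact_cover}. By descent for $T$ applied to the $T$-cover $\phi$, there exists $f : Z \to A$ with $c(z_1, z_2) = f(z_1) - f(z_2)$.

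Set $H'(z) := H(z) \cdot (-f(z))$, using the $A$-action on paths into $\eta(\mathrm{pt})$; one checks $H'(z_1) = H'(z_2)$ whenever $(z_1, z_2) : W$. Since $\OO K(A, 1)$ is $\OO$-$1$-truncated, the type $\chi(x) = \eta(\mathrm{pt})$ is a modal set, so for each $x : X$ the constant map $\fib_\phi(x) \to (\chi(x) = \eta(\mathrm{pt}))$ induced by $H'$ factors through $\Trunc{\fib_\phi(x)}$ and then through $\OO \Trunc{\fib_\phi(x)}$ by modality of the target. The latter is inhabited by $T$-surjectivity of $\phi$, yielding the required $\prod_{x : X} \chi(x) = \eta(\mathrm{pt})$. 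The main obstacle will be the identification $\Omega \OO K(A, 1) \simeq A$ together with the bookkeeping needed to verify that the coboundary correction really does produce invariant descent data for the modal gluing step.
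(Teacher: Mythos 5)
Your proof follows essentially the same route as the paper's: use the pointwise $\OO$-mere trivialisation of $\chi$ together with local choice (\cref{lem:local_choice}) to trivialise $\chi$ on a $T$-cover, form the difference cocycle in $A \simeq \Omega\, \OO K(A,1)$ (via lexness), kill it by a coboundary using the exactness of \cref{lem:desc_exact_cover}, and glue the corrected trivialisation to a global section. Your gluing step is in fact slightly more careful than the paper's appeal to \cref{lem:gluing}: since the fibres of a $T$-cover are only $\OO$-merely inhabited rather than merely inhabited, factoring the weakly constant map through $\OO \Trunc{\fib_\phi(x)}$, using that $\chi(x) = \eta(\mathrm{pt})$ is a modal set, is exactly the refinement needed for the lemma's hypothesis to apply.
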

\begin{proof}
    Let $\chi : X \to \OO K(A, 1)$ be a cohomology class.
    As before we can find a $T$-cover $\phi : Z \to X$ and a term 
    \[ s : \Trunc*{\prod_{z : Z} \chi(\phi(z)) = \mathrm{pt}} \] 
    Now we (merely) define a map $t : Z \times_X Z \to A$ by $s(z)^{-1} \cdot s(z') : \mathrm{pt} = \mathrm{pt}$.
    Note that $(\pt = \pt) \simeq A$ by definition of $K(A,1)$, exploiting lexness of $\OO$.
    By definition we can see that $t$ maps to $0$ in $Z \times_X Z \times_X Z \to A$ by $d$.
    Hence, by~\cref{lem:desc_exact_cover}, there is merely $u : Z \to A$ such that $d(u) =t $.
    Then we can merely define $s' : \prod_{z : Z} \chi(\phi(z)) = \mathrm{pt}$
    by $s' = s \cdot u^{-1}$.
    This satisfies for all $(z_1, z_2) : Z \times_X Z$ that $s'(z_1) = s'(z_2)$.
    Hence, noting that the equality types of $\OO K(A, 1)$ are all sets, by definition of $K(A, 1)$ and lexness of $\OO$,
    we may apply~\cref{lem:gluing} and glue $s'$ to a section 
    \[ s'' : \prod_{x : X} \chi(x) = \pt \]
    We then deduce that $\chi$ is merely the constant map, and hence is trivial in $\Trunc{X \to \OO K(A, 1)}_0$.
\end{proof}

\begin{example}
  Every abelian group $A$ satisfies descent for the trivial presentation $\langle 1 \rangle$,
  since the required sequence is just formed of identity maps.
  Further examples for more interesting presentations will be showed later.
\end{example}

\begin{remark}
  Using \v{C}ech cohomology results~\cite{blechschmidt2024cech}, this result should generalise to types built out of projectives in a nice way.
\end{remark}

\section{Applications}

We give a generic example application, which we then specialise to synthetic algebraic geometry, and directed type theory.
Type theory interprets into topoi, and thus we are able to axiomatise our theory to have a model in a specific topos.
A very common example is the classifying topos of an algebraic theory.
For example if we would like to study algebraic geometry, we might consider modelling into functors from finitely presented commutative rings to $\mathrm{Set}$, 
that is, the classifying topos of the theory of rings.
Or for directed type theory, we might consider modelling into presheaves on cubes or lattices, which also classify the corresponding algebraic theories.

In recent work of Ingo Blechschmidt~\cite{blechschmidt20XXqcoh} a strong generic axiom was shown to hold in these classifying categories
, originally called synthetic quasi-coherence in the setting of algebraic geometry,
 now more commonly referred to as duality in the more general setting.
We will add axioms and enforce duality, to obtain a type theory which can be interpreted into a classifying category.
We will then build projective modalities in these theories, and establish some useful consequences.

Consider an algebraic theory $\bT$.
We add the following axioms to form a presheaf model in type theory of the theory $\bT$.

\begin{axiom}
    We have a model $U_\bT$ of $\bT$ over signature $\Sigma$.
\end{axiom}

From this we are able to construct models over $U_\bT$. We call a model $A$ of $\bT$ with a model homomorphism $U_\bT \to A$ a $U_\bT$-algebra. 
A homomorphism of $U_\bT$ algebras is given by a map of models $A \to B$ so that the following diagram commutes.
\[\begin{tikzcd}
    A \ar[rr]  & & B \\
    & U_\bT \ar[ul] \ar[ur]
\end{tikzcd}\]
$U_\bT$-algebras can be described by a new algebraic theory, axiomatising the map from $U_\bT$.

For any algebraic theory $T$ we can, given a set $X$ of variables and a set $R$ of atomic formulas, construct an algebra 
\[ \bT[X]/R := \mathrm{Terms}(\Sigma + X) / \sim  \]
presented by $X$ and $R$.
$\Sigma + X$ is a new signature with a constant symbol for each $x : X$ and  
two terms $t_1 \sim t_2$ are identified if $\bT + R \vdash t_1= t_2$.

In particular we can apply this to the theory of $U_\bT$-algebras to get $U_\bT$-algebras presented by generators and relations: $U_\bT[X]/R$. 

We call a presentation \textbf{finite} if $X$ and $R$ are finite.
~\footnote{Scholars of finiteness will recognise there are different notions of finite in constructive mathematics. Specifically we mean $X$ is Bishop-finite and $R$ is Kuratowski-finite - since only the image of $R$ is required in the construction.}
This comes with a natural algebra map $U_\bT \to U_\bT  [X] / R $.
We say a generic $U_\bT$ algebra is finitely presented if it is merely equal to a one of these for some $X$, $R$.

\begin{definition}
    For a finitely presented $U_\bT$-algebra $A$ we define its \textbf{spectrum} to be
    \[ \Spec(A) :\equiv \Hom_{U_\bT}(A, U_\bT) \]
    We call a type \textbf{affine} if it is merely equal to $\Spec(A)$ for some finitely presented algebra $A$.
\end{definition}

Note that $\Spec$ is functorial: by composition, it sends a homomorphism $f : A \to B$ to a map $\Spec(B) \to \Spec(A)$
We also have a natural functor in the reverse direction, taking any affine type $X$ to the free $U_\bT$ algebra on it, by ${U_\bT}^X$.
Additionally, for any finitely presented $U_\bT$-algebra $A$ there is a natural map
\[ A \to {U_\bT}^{\Spec(A)} \]
\[ a \mapsto (\phi \mapsto \phi(a))\]
We add an axiom to our type theory forcing that $\Spec$ is an equivalence of categories, which is true in models as in \cite{blechschmidt20XXqcoh}.

\begin{axiom}[Blechschmidt duality]
    For any finitely presented $U_\bT$-algebra $A$ the natural map $A \to {U_\bT}^{\Spec(A)}$ is an equivalence.
\end{axiom}

\begin{lemma}
    $\Spec$ is an equivalence of between the categories of finitely presented models, and the opposite category of affine types and functions between them.
\end{lemma}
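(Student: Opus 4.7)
The plan is to verify the two standard conditions for an equivalence of categories: essential surjectivity and full faithfulness. Essential surjectivity is essentially immediate from the definition of affine type, which says that a type $X$ is affine precisely when it is merely equal to $\Spec(A)$ for some finitely presented $U_\bT$-algebra $A$. Hence every object of the target category lies, up to equivalence, in the essential image of $\Spec$.

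For full faithfulness, I would show that for any finitely presented $U_\bT$-algebras $A$ and $B$ the functorial action
\[ \Spec : \Hom_{U_\bT}(A, B) \to (\Spec(B) \to \Spec(A)) \]
is an equivalence. The key move is to use Blechschmidt duality to replace $B$ with $U_\bT^{\Spec(B)}$, giving $\Hom_{U_\bT}(A, B) \simeq \Hom_{U_\bT}(A, U_\bT^{\Spec(B)})$. It then suffices to exhibit an equivalence $\Hom_{U_\bT}(A, U_\bT^X) \simeq (X \to \Spec(A))$ for any type $X$, which amounts to the fact that the $U_\bT$-algebra structure on $U_\bT^X$ is defined pointwise, so a $U_\bT$-algebra map into $U_\bT^X$ is the same as an $X$-indexed family of $U_\bT$-algebra maps into $U_\bT$.

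To make this last equivalence explicit, I would build the two directions directly: from $f : A \to U_\bT^{\Spec(B)}$ produce $\phi \mapsto (a \mapsto f(a)(\phi)) : \Spec(B) \to \Spec(A)$, which is pointwise a $U_\bT$-algebra homomorphism; conversely, from $g : \Spec(B) \to \Spec(A)$ produce $a \mapsto (\phi \mapsto g(\phi)(a))$, which respects the operations because they are evaluated pointwise in $U_\bT^{\Spec(B)}$. The two round trips are routine unfoldings. Finally I would check that this equivalence agrees with the functorial action of $\Spec$, so that the composite really is the map in the statement.

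The main obstacle will be bookkeeping: verifying that pointwise-defined maps genuinely respect the $U_\bT$-algebra structure (in particular, the map $U_\bT \to A$) and that the constructed equivalence is literally the functorial action of $\Spec$ rather than some merely isomorphic comparison. Once Blechschmidt duality is invoked, everything else is a chain of canonical equivalences.
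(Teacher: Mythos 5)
Your proof is correct, but it takes a different route from the paper's. The paper constructs the inverse functor explicitly --- $X \mapsto {U_\bT}^X$ --- and checks the two object-level round trips: the composite ${U_\bT}^{\Spec(A)} \simeq A$ is exactly the duality axiom, and $\Spec({U_\bT}^X) \simeq X$ follows by writing $X = \Spec(A)$ and applying duality under $\Spec$; naturality is then asserted without proof. You instead verify full faithfulness and essential surjectivity, with the key step being the exponential correspondence $\Hom_{U_\bT}(A, {U_\bT}^X) \simeq (X \to \Spec(A))$, coming from the pointwise algebra structure on ${U_\bT}^X$, composed with duality. Your decomposition has the virtue of making explicit the morphism-level correspondence that the paper's ``it isn't hard to see this is a natural isomorphism'' sweeps under the rug, and of identifying the resulting bijection with the functorial action of $\Spec$, which is the real content of the statement. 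Two caveats: first, in HoTT a fully faithful and (merely) essentially surjective functor is an equivalence only for \emph{univalent} categories (or with choice); here both categories are univalent --- finitely presented algebras form a full subcategory of the univalent category of model structures on sets, and affine types form a full subcategory of the univalent category of sets (each $\Spec(A)$ is a set since $U_\bT$ is) --- so your argument goes through, but you should say so, since your essential surjectivity witness is only a mere equality. Second, your pointwise check must include the structure map $U_\bT \to {U_\bT}^X$ (the diagonal), not just the operations of the signature; you flag this as bookkeeping and it is indeed routine, since the structure map on $U_\bT$ itself is the identity. On balance the paper's argument is shorter because it hands you the inverse functor, while yours is more self-contained at the morphism level.
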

\begin{proof}
    We already have one composition gives the identity by duality.
    Now we show $\Spec({U_\bT}^X) = X$ for $X$ an affine type. 
    Write $X = \Spec(A)$ then $\Spec({U_\bT}^{\Spec(A)}) = \Spec(A)$. 
    It isn't hard to see this is a natural isomorphism.
\end{proof}

Note since, by standard category theory, the functor $\Hom(\_, U_\bT)$ is limit preserving, we have
\[ \Spec(\colim_{i : I}  A_i) = \lim_{i : I} \Spec(A_i)\]

\begin{axiom}
    For any finitely presented $U_\bT$-algebra $A$ we have $\Spec(A)$ is projective.
\end{axiom}

Note in the classifying topos of an algebraic theory, the tiny objects are precisely given by finite sums of affines, as the category is idempotent complete.
Thus we will consider presentations given by these objects. 

\begin{definition}
    We call a presentation $T$ \textbf{subcanonical} if every affine type is a $T$-sheaf.
\end{definition}

\begin{lemma}
  \label{lem:subcan}
    A presentation $T$ is subcanonical iff $U_\bT$ is a $T$-sheaf.
\end{lemma}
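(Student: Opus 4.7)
The plan is to exploit duality to translate the question about sheaf-hood of affine types into a question about exponentials and equality types of $U_\bT$, and then use the closure properties of the subuniverse of $T$-sheaves.

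For the forward implication ($\Rightarrow$), I would observe that $U_\bT$ itself is affine. Indeed, $U_\bT \simeq \Hom_{U_\bT}(U_\bT[x], U_\bT) = \Spec(U_\bT[x])$, where $U_\bT[x]$ is the free $U_\bT$-algebra on one generator (a homomorphism out of the free algebra is just a choice of image of the generator). Hence if every affine type is a $T$-sheaf, so is $U_\bT$.

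For the backward implication ($\Leftarrow$), let $A$ be a finitely presented $U_\bT$-algebra; write $A = U_\bT[x_1, \ldots, x_n] / (r_i = s_i)_{i \in F}$ for some finite set of generators and a (Kuratowski-finite) family of relations. Then by the universal property of the presentation, $\Spec(A) = \Hom_{U_\bT}(A, U_\bT)$ is equivalent to the subtype of ${U_\bT}^n$ consisting of tuples $(a_1, \ldots, a_n)$ satisfying each equation $r_i(a_1, \ldots, a_n) = s_i(a_1, \ldots, a_n)$ in $U_\bT$. That is,
\[ \Spec(A) \simeq \sum_{(a_1, \ldots, a_n) : {U_\bT}^n} \prod_{i \in F} r_i(\vec a) = s_i(\vec a). \]
Now I would invoke the standing closure properties: a modality is closed under $\Pi$, under $\Sigma$, and (being a reflective subuniverse) under equality types. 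Since $U_\bT$ is assumed to be a $T$-sheaf, ${U_\bT}^n$ is a $T$-sheaf, each equality type $r_i(\vec a) = s_i(\vec a)$ in $U_\bT$ is a $T$-sheaf, the finite product of these equalities is a $T$-sheaf, and therefore so is the $\Sigma$-type displayed above. This shows $\Spec(A)$ is a $T$-sheaf, as required.

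The only real subtlety I foresee is ensuring the description of $\Spec(A)$ as such a $\Sigma$ of equalities is legitimate when $A$ is only \emph{merely} equal to a chosen presentation; since being a sheaf is a proposition, this is not a genuine obstacle and can be handled by propositional truncation induction on the witness of finite presentability. Everything else is a direct application of the closure properties of the modal subuniverse together with Blechschmidt duality implicit in the definition of $\Spec$.
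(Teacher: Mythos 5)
Your proof is correct and takes essentially the same approach as the paper: the forward direction via $U_\bT \simeq \Spec(U_\bT[x])$ is identical, and the backward direction rests on the same closure properties of the modal subuniverse (exponentials into a sheaf, equality types, and $\Sigma$). The only cosmetic difference is that the paper carves $\Spec(A)$ out of the full exponential $(A \to U_\bT)$ by the homomorphism equations, whereas you carve it out of ${U_\bT}^n$ via the universal property of the presentation; your observation that the merely-existing presentation is handled by truncation induction, since being a sheaf is a proposition, is a detail the paper leaves implicit.
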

\begin{proof}
    ($\Rightarrow$) Suppose $T$ is subcanonical. 
    Then $U_\bT = \Spec(U_\bT[x])$ is a sheaf by definition.

    ($\Leftarrow$) Suppose $U_\bT$ is a $T$-sheaf.
    Then for any f.p algebra $A$ we have 
    $\Spec(A) := (f : A \to U_\bT) \times H(f)$
    where $H(f)$ are the equations in $R$. 
    Since $U_\bT$ is a sheaf so is $A \to U_\bT$ and the equations are also all sheaves.
\end{proof}

Since $U_\bT$ is a set by assumption, we can always apply the set sheaf condition after \cref{cor:sheaf_cond}.
So $T$ is subcanonical if and only if for all $A \in T$ we have 
\[ U_\bT \to \lim( (U_\bT)^A \to (U_\bT)^{A \times A} )\]
is an equivalence.
The topologies we consider will have $T$ be generated by finite sums of affines.
In particular this means we can exploit duality with $U_\bT$.
Set $A = \Spec(X_1) + \cdots + \Spec(X_n)$.
Note that as the category of models of any algebraic theory is cocomplete, and we write the pushout of $A \leftarrow C \rightarrow B$ as $A \otimes_C B$.
Then duality, combined with limit preservation of $\Spec$, tells us that the $U_\bT$ is a sheaf if and only if 
\[ U_\bT \to \lim \left ( \prod_{i} X_i \rightrightarrows \prod_{i, j} X_i \otimes_{U_\bT} X_j \right ) \]
is an equivalence for all $i, j$.
This is now a purely algebraic question, and is the sort of gluing condition algebraic geometers are very familiar with.

Finally, we consider an additional axiom which will allow for our presentations to genuinely be closed under $\Sigma$.
\begin{axiom}
    For all finitely presented $U_\bT$-algebras $A$, the diagonal map $\mathbb{N} \to \bN^{\Spec(A)}$ is an equivalence.
    That is every map $\Spec(A) \to \mathbb{N}$ is constant.
\end{axiom}

This is justified by a simple external calculation:
\begin{lemma}
    Let $\cC$ be a category with an initial object $0$.
    Let $X : \Set$ and abusing notation let $X$ also represent the constant presheaf with value $X$.
    Then for any $ c : \cC$ the diagonal map $X \to X^{\yo_c}$ is an equivalence.
\end{lemma}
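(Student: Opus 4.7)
The plan is to verify the equivalence of presheaves pointwise. First, I would unfold the exponential at an object $d \in \cC$: we have $X^{\yo_c}(d) \cong \mathrm{Nat}(\yo_c \times \yo_d, X)$, and the diagonal map corresponds to sending $x \in X$ to the constant natural transformation at $x$. So the claim reduces to showing that every natural transformation $\alpha : \yo_c \times \yo_d \to X$ is constant, and that this correspondence is bijective.

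The key observation is to evaluate at the initial object. Since $0$ is initial, $(\yo_c \times \yo_d)(0) = \Hom(0, c) \times \Hom(0, d) = 1$, so $\alpha_0$ picks out a single element $x \in X$. For any $e \in \cC$ and any $(u, v) \in \Hom(e, c) \times \Hom(e, d)$, I would apply naturality of $\alpha$ along the unique arrow $!_e : 0 \to e$ to conclude $\alpha_e(u, v) = \alpha_0(u \circ !_e, v \circ !_e) = x$, using that $u \circ !_e$ and $v \circ !_e$ are forced to agree with the unique maps $0 \to c$ and $0 \to d$ respectively. Hence $\alpha$ is the constant natural transformation at $x$, giving surjectivity of the diagonal.

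Injectivity is then immediate: two distinct elements of $X$ yield two distinct constant natural transformations, differing already at level $0$. Combined with surjectivity, the diagonal is a bijection at each $d$, which assembles to an isomorphism of presheaves since the pointwise inverses are natural in $d$ by construction. I do not anticipate any significant obstacle --- the argument amounts to using the initial object to collapse $\yo_c \times \yo_d$ to a singleton, at which point being a map to the constant presheaf forces constancy.
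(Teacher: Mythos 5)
Your proof is correct and takes essentially the same route as the paper: both unwind $X^{\yo_c}(d)$ as $\hom(\yo_c \times \yo_d, X)$ and use naturality along the unique arrow out of the initial object, together with $\hom(0,c) \times \hom(0,d)$ being a singleton, to force every transformation to be constant. You merely spell out the injectivity and pointwise-to-global bookkeeping that the paper leaves implicit.
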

\begin{proof}
    Let $d : \cC$. Then unwinding the definition $X^{\yo_c}(d) = \hom(\yo_c \times \yo_d, X)$
    So let $\eta : \yo_c \times \yo_d \to X$.
    Then for all $z : \cC$ we have the unique map $0 \to z$ in $\cC$ making the following diagram commute:
    \[\begin{tikzcd}
        \hom(z, c) \times \hom(z, d) \ar[r, "\eta_z"] \ar[d]  &= X \\
        \hom(0, c) \times \hom(0, d) \ar[ur, "\eta_0", swap]
    \end{tikzcd}\]
    But since $\hom(0, c) \times \hom(0, d) = \star$ we deduce that each component of $\eta$ factors through $\star$ and hence is constant.
    It is straightforward to show that this defines an isomorphism of $X^{\yo_c} \simeq X$ sending each $\eta$ to its constant value, hence inverting the diagonal map.
\end{proof}

This justifies the axiom since in our type theory $\mathbb{N}$ corresponds to the constant presheaf at the natural numbers,
and since our representables are the opposite category of finitely presented algebras, the initial object is given by the zero algebra.

\begin{lemma}
    Given $A$, $B$ finitely presented $U_\bT$ algebras, we have a natural equivalence
    \[ A \otimes_{U_{\bT}} B = \Spec(A) \to B\] 
\end{lemma}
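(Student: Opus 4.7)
The plan is to combine Blechschmidt duality with the fact that $\Spec$ sends colimits of $U_\bT$-algebras to limits. First I would note that $\Spec(U_\bT)$ is contractible: the algebra $U_\bT$ is the initial $U_\bT$-algebra (presented as $U_\bT[\emptyset]/\emptyset$, which is a finite presentation), so the only $U_\bT$-algebra morphism $U_\bT \to U_\bT$ is the identity, making $\Spec(U_\bT) = \Hom_{U_\bT}(U_\bT, U_\bT) = 1$.

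Next I would apply $\Spec$ to the pushout square defining $A \otimes_{U_\bT} B$. Using the limit-preservation property of $\Spec$ stated just after Blechschmidt duality, together with the previous observation, one gets
\[ \Spec(A \otimes_{U_\bT} B) \simeq \Spec(A) \times_{\Spec(U_\bT)} \Spec(B) \simeq \Spec(A) \times \Spec(B). \]
I also need that $A \otimes_{U_\bT} B$ is itself finitely presented, so that duality applies. This is merely true: given presentations $A = U_\bT[X_A]/R_A$ and $B = U_\bT[X_B]/R_B$, the algebra $U_\bT[X_A \sqcup X_B]/(R_A \cup R_B)$ presents the pushout, and being finitely presented is a proposition so the truncations can be stripped.

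With these in hand, the required equivalence is the composite
\[ A \otimes_{U_\bT} B \simeq U_\bT^{\Spec(A \otimes_{U_\bT} B)} \simeq U_\bT^{\Spec(A) \times \Spec(B)} \simeq \bigl(U_\bT^{\Spec(B)}\bigr)^{\Spec(A)} \simeq B^{\Spec(A)} = (\Spec(A) \to B), \]
using duality, the computation of $\Spec(A \otimes_{U_\bT} B)$, currying of function types, and duality again.

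The main obstacle is not any single step but verifying naturality in $A$ and $B$: each constituent equivalence is natural (duality is natural by axiom, $\Spec$ is a functor, and currying is natural in both factors), so the composite is natural, but I would want to trace through the induced map to check it agrees with the expected algebraic map sending $a \otimes b$ to the function $\phi \mapsto \phi(a) \cdot b$, where $\cdot$ denotes the action of $U_\bT$ on $B$ via the algebra structure. A secondary subtlety is making sure that the propositional truncations in the definition of ``finitely presented'' do not obstruct any step; since being an equivalence is itself a proposition, this bookkeeping reduces to applying the universal property of propositional truncation.
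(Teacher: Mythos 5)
Your proposal is correct and takes essentially the same route as the paper: both apply duality to the (finitely presented) coproduct $A \otimes_{U_\bT} B$, identify $\Spec(A \otimes_{U_\bT} B)$ with $\Spec(A) \times \Spec(B)$ via limit preservation of $\Spec$, curry, and apply duality once more. The extra details you supply (contractibility of $\Spec(U_\bT)$, finite presentation of the coproduct, handling of the truncations) are precisely the facts the paper invokes implicitly in its four-line calculation.
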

\begin{proof}
    This is a simple calculation using duality, since the coproduct of finitely presented algebras is finitely presented.
    \begin{align*}
        A \otimes_{U_{\bT}} B &= \Spec(A \otimes_{U_{\bT}} B) \to U_\bT \\
                            &= \Spec(A) \times \Spec(B) \to U_\bT \\
                            &= \Spec(A) \to (\Spec(B) \to U_\bT) \\
                            &= \Spec(A) \to B
    \end{align*}
\end{proof}

\begin{lemma}
    Affines are closed under $\Sigma$.
\end{lemma}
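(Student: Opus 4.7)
The plan is to construct, for each family $Y : X \to \UU$ of affines over an affine $X$, a finitely presented $U_\bT$-algebra $C$ with $\Spec(C) \simeq \sum_{x : X} Y(x)$. Since being affine is a proposition and $X = \Spec(A)$ is projective, I can freely promote merely-existent data to genuine functions on $X$ throughout the argument.

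First, using projectivity of $X$, I would extract for each $x : X$ a concrete presentation $B(x) = U_\bT[y_1, \ldots, y_{n(x)}]/(r_1(x), \ldots, r_{m(x)}(x))$ together with an equivalence $Y(x) \simeq \Spec(B(x))$, and a degree bound $d_i(x) : \bN$ on each relation $r_i(x)$. The key structural observation is that $n$, $m$, and each $d_i$ are functions $X \to \bN$, hence constant by the axiom that the diagonal $\bN \to \bN^{\Spec(A)}$ is an equivalence. So I may fix uniform values $n$, $m$, and $d_i$, independent of $x$.

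Each $r_i(x) \in U_\bT[y_1, \ldots, y_n]$ of degree at most $d_i$ is then encoded by finitely many coefficients $X \to U_\bT$; by Blechschmidt duality this mapping type is exactly $A$, so each coefficient lifts to an element of $A$. Reassembling gives polynomials $\tilde r_i \in A[y_1, \ldots, y_n]$ whose specialisation at $x : \Spec(A)$ is $r_i(x)$. Setting $C := A[y_1, \ldots, y_n]/(\tilde r_1, \ldots, \tilde r_m)$ yields a finitely presented $U_\bT$-algebra (since $A$ is). A direct unfolding identifies $\Spec(C)$ with the type of pairs $(x, y) \in \Spec(A) \times U_\bT^n$ satisfying $r_i(x)(y) = 0$ for all $i$, which is $\sum_{x : X} \Spec(B(x)) \simeq \sum_{x : X} Y(x)$.

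The main obstacle will be the coherent extraction of presentations and degree bounds in the first step: care is needed with universe levels and with the exact formulation of a "presentation" as data so that projectivity genuinely applies, and with synchronising the choice of presentation with the equivalence data $Y(x) \simeq \Spec(B(x))$. Once the finite shape data has been pinned down via the axiom controlling maps into $\bN$, the rest is a routine commutative-algebraic computation mediated by duality.
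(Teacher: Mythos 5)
Your skeleton matches the paper's proof: extract presentation data pointwise using projectivity of $\Spec(A)$, force the shape data $n, m$ to be constant via the axiom that maps $\Spec(A) \to \bN$ are constant, lift the relations from functions on $\Spec(A)$ to elements of $A[y_1, \ldots, y_n]$, set $C := A[y_1, \ldots, y_n]/(\text{relations})$, and identify the fibres of $\Spec(C) \to \Spec(A)$ using that $\Spec$ preserves limits. Your opening observation that affineness is a proposition, so mere data can be untruncated against the goal, is also the (implicit) mechanism in the paper.

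The one place you diverge is the lifting step, and there your argument is strictly narrower than the paper's setting. You encode each relation by a degree bound $d_i$ (made constant by the $\bN$-axiom) and finitely many coefficient functions $\Spec(A) \to U_\bT$, each lifted to $A$ by duality. This is fine for the theory of commutative rings — and your care in taking a degree \emph{bound} rather than the degree is constructively correct, since leading-coefficient vanishing is undecidable — but the lemma is stated for an arbitrary algebraic theory $\bT$, where ``degree'' and ``coefficients'' of elements of the free algebra $U_\bT[y_1, \ldots, y_n]$ have no meaning as stated (consider bounded distributive lattices, the theory used in the triangulated type theory application). Repairing the encoding (syntax-tree shapes with constant slots, shapes embedded in $\bN$) is possible only under extra hypotheses on the signature. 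The paper sidesteps all of this with its immediately preceding lemma, $A \otimes_{U_\bT} B = (\Spec(A) \to B)$ for finitely presented $B$, a short consequence of duality: applied to the free algebra $B = U_\bT[X_1, \ldots, X_n]$, each relation-function $\Spec(A) \to U_\bT[X_1, \ldots, X_n]$ \emph{is} directly an element of $A[X_1, \ldots, X_n]$, with no bounds or encodings, uniformly in $\bT$. A related minor slip: you write relations as single terms $r_i(x)$ (ring-style ``$r = 0$''), whereas for a general algebraic theory they must be pairs of terms $s_i = t_i$, as in the paper. So: right strategy, but to cover the paper's actual level of generality you should replace the coefficient extraction with the tensor-hom lemma.
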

\begin{proof}
    Let $A$ be an f.p. $U_\bT$ algebra and $B$ be a family of $U_\bT$ algebras depending on $\Spec(A)$.
    Then note we have for all $x : \Spec(A)$ that $B(x)$ is merely of the form $U_\bT [ X_1 , \ldots, X_n ] / (s_1 = t_1, \ldots, s_m = t_m)$ for some $n, m :\bN$ and terms $s_i, t_i : U_\bT \{ X_1 , \ldots, X_n\}$.
    By projectivity of $\Spec(A)$ we are able to explicitly obtain functions $n, m : \Spec(A) \to \bN$ giving the length of each $B(x)$ which must be constant by our axioms.
    Then we are also able to explicitly obtain functions $s_1, \ldots, s_m, t_1, \ldots, t_m : \Spec(A) \to U_\bT [ X_1, \ldots, X_n ]$ giving the equations.
    By the previous lemma these are equivalently terms of $A \otimes_{U_\bT} U_\bT [ X_1, \ldots, X_n ] = A [ X_1, \ldots, X_n ]$.
    We now claim that 
    \[ \sum_{x : \Spec(A)} B(x) = A [X_1 , \ldots X_n ] / (s_1 = t_1, \ldots s_m = t_m) \]
    But this is true since the natural map
    \[ \Spec(A [X_1, \ldots, X_n] / (s_1 = t_1, \ldots, s_m = t_m)) \to \Spec(A) \]
    has fiber at $x : \Spec(A)$ given by $r(x)$, using that $\Spec$ preserves limits.
\end{proof}

\begin{remark}
    A simple corollary is that finite sums of affines are closed under $\Sigma$ by applying constancy of maps to $\bN$ and distributing over the sum.
\end{remark}

\subsection{Synthetic Algebraic Geometry}

We will now apply this general example to a specific case, following~\cite{moeneclaey2024sheaves}.
Let $\bT$ be the theory of commutative rings with unit. 
We write $R := U_\bT$ and consider the \textbf{Zariski topology} $\Zar$, for which we give the presentation containing types merely of the form
\[ \Spec(R_{f_1}) + \cdots + \Spec(R_{f_n}) \]
where $f_1, \ldots , f_n : R$ satisfy $1 \in (f_1, \ldots, f_n)$, and the ring $R_{f_i}$ is $R$ with the element $f_i$ inverted.

\begin{lemma}
    $\Zar$ is a projective presentation.
\end{lemma}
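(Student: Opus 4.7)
The plan is to verify the two defining conditions: that every Zariski type is projective, and that $\Zar$ is a presentation (contains $1$ and is closed under $\Sigma$).

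For projectivity, the argument is quick. The localization $R_f$ is finitely presented as a $U_\bT$-algebra (given by the single generator $x$ and relation $xf=1$), so $\Spec(R_f)$ is affine and hence projective by the projectivity axiom for affines. Since projective types are closed under finite sums by the earlier sum-closure lemma, every $\Spec(R_{f_1}) + \cdots + \Spec(R_{f_n})$ is projective, and projectivity being a proposition lets us discharge the ``merely equal to'' quantifier in the definition of $X \in \Zar$.

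For the presentation conditions, $1 \in \Zar$ follows by taking $n=1$ and $f_1 = 1$, so $\Spec(R_1) = \Spec(R)$, which is the terminal type since $R$ is initial among $U_\bT$-algebras. The substantive work is $\Sigma$-closure. Given $X = \Spec(R_{f_1}) + \cdots + \Spec(R_{f_n}) \in \Zar$ and a family $Y : X \to \UU$ of Zariski types, I would distribute $\Sigma$ over the coproduct to reduce to the case $X = \Spec(R_f)$. Then, using projectivity of $\Spec(R_f)$ and the constancy-of-$\bN$-valued-maps axiom, I extract a fixed $m$ and explicit functions $g_1, \ldots, g_m : \Spec(R_f) \to R$ together with a pointwise witness of $1 \in (g_1(y),\ldots,g_m(y))$. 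By duality the $g_j$ correspond to elements $\tilde g_j \in R_f$, and the lemma that $\Sigma$ of affines is affine identifies $\sum_{y : \Spec(R_f)} \Spec(R_{g_j(y)})$ with $\Spec((R_f)_{\tilde g_j})$. Writing $\tilde g_j = h_j/f^{k_j}$, this localization is isomorphic to $R_{f h_j}$, so $\sum_{y} Y(y)$ splits as $\Spec(R_{fh_1}) + \cdots + \Spec(R_{fh_m})$, putting it in $\Zar$ provided the covering condition holds.

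The main obstacle is exactly this final covering check: converting the internal, pointwise covering data $1 \in (g_1(y),\ldots,g_m(y))$ for each $y : \Spec(R_f)$ into an external statement $1 \in (fh_1, \ldots, fh_m)$ in $R$. The pointwise data gives $1 \in (\tilde g_1, \ldots, \tilde g_m)$ in $R_f$, producing an equation $\sum a_j \tilde g_j = 1$ in $R_f$; clearing denominators yields $f^N \in (h_1,\ldots,h_m)$ in $R$ for some $N$, which combined with $f \cdot \text{const} = f$ and the original covering of $\Spec R$ by the $f_i$ gives the desired ideal membership. I expect this bookkeeping to be the only nontrivial part, and it is routine commutative algebra; everything else reduces cleanly to lemmas already established in the paper.
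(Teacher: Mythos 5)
Your proposal is correct and follows essentially the same route as the paper's proof: projectivity of $\Spec(R_f)$ plus closure of projectives under finite sums for the projective part, and for $\Sigma$-closure the same chain of projectivity to strip truncations, the $\bN$-constancy axiom to fix the lengths, duality to read the $g_j : \Spec(R_f) \to R$ as elements of $R_f$, the affine-$\Sigma$ lemma to compute the total space as a finite sum of iterated localisations, and the standard denominator-clearing argument (with $1 \in (f_1,\ldots,f_n)$ upgraded to $1 \in (f_1^{k_1},\ldots,f_n^{k_n})$) for the covering condition. The only differences are cosmetic: you reduce to a single summand $\Spec(R_f)$ before recombining, and you make explicit the projectivity verification that the paper leaves implicit.
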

\begin{proof}
    The unit $1 \in \Zar$ by choosing the sequence $1$, since $\Spec(R_{1}) = \Spec(R) = 1$.
    For $\Sigma$-closure, let $X \in \Zar$ and $Y : X \to \Zar$
    Then $\Trunc{ X = \Spec(R_{f_1}) + \cdots + \Spec(R_{f_n})}$ and we have 
    \[ \prod_{x : X} \Trunc*{ \sum_{g_i : R} Y(x) = \Spec(R_{g_1}) + \cdots \Spec(R_{g_m}) } \]
    By projectivity of $X$, we can push the product inside the truncation.
    We can rewrite this commuting product and sum to get a function $\phi : X \to \mathbb{N}$ giving the length of the sequence, and terms $g : (x : X) \to R^{\phi(x)}$ for the quotients.
    But every map from a spectrum to the naturals is constant, so we merely obtain $m_1, \ldots, m_n$ such that for $x : \Spec(R_{f_i})$ we have $\phi(x) = m_i$.
    Then we calculate
    \begin{align*}
        (x : \Spec(R_{f_i}) + &\cdots + \Spec(R_{f_n})) \times (\Spec(R_{g_{i,1}(x)}) + \cdots \Spec(R_{g_{i,\phi(x)}(x)})) 
        \\ &= \sum_{i = 1}^n \sum_{x : \Spec(R_{f_i})} \Spec(R_{g_{i,1}(x)}) + \cdots \Spec(R_{g_{i,\phi(x)}(x)})
        \\ &= \sum_{i = 1}^n \sum_{x : \Spec(R_{f_i})} \Spec(R_{g_{i,1}(x)}) + \cdots \Spec(R_{g_{i, m_i}(x)})
        \\ &= \sum_{i = 1}^n \sum_{j = 1}^{m_i} \sum_{x : \Spec(R_{f_i})} \Spec(R_{g_{ij}(x)}) 
        \\ &= \sum_{i = 1}^n \sum_{j = 1}^{m_i} \Spec( {R_{f_i}}_{g_{ij}} )
    \end{align*}

    Now this proof resolves down to the standard algebraic proof that the Zariski site forms a Grothendieck topology.
    Here $g_{ij} : \Spec(R_{f_i}) \to R$ is interpreted as an element of $R_{f_i}$ by duality.
    By assumption these satisfy that $1 \in (g_{i1}, \ldots, g_{i m_i})$.
    Writing $g_{ij} = r_{ij} / f_i^{k_{ij}}$ we can see that localising at $g_{ij}$ is the same as localising at $r_{ij}$.
    So we need only show that $1 \in (f_i r_{i j})$ instead.
    As an ideal in  $R_{f_i}$, we have  $1 \in (r_{i1}, \ldots, r_{i m_i})$ clearly. 
    Multiplying out denominators we see there is a natural number $k_i$ such that $f_i^{k_i} \in (f_i r_{i1}, \ldots , f_i r_{i m_i})$ as an ideal in $R$.
    Since $1 \in (f_1 , \ldots, f_n)$ we deduce that $1 \in (f_1^{k_1}, \cdots, f_n^{k_n})$ and hence are in the ideal generated by $(f_i r_{i j})$. 
\end{proof}

We show this presentation is subcanonical: by \cref{lem:subcan}, we only need show that $R$ is a sheaf for the presentation.
Let $A = \Spec(R_{f_1}) + \cdots + \Spec(R_{f_n})$ be in $\Zar$. 
Note a standard fact about localisation is that $R_{f_i} \otimes_R R_{f_j} = R_{f_i f_j}$.
Hence the sheaf condition states that $R$ is a sheaf if and only if
\[ R \to \lim \left ( \prod_{i} R_{f_i} \rightrightarrows \prod_{i,j} R_{f_i f_j} \right ) \]
is an equivalence.
Writing $[\_]_{ij}$ for the natural maps $R_{f_i} \to R_{f_i f_j}$ and similarly $[\_]_{i}$ for $R \to R_{f_i}$ we see this as the following algebraic statement:
For any $(x_1, \ldots, x_n) : R_{f_1} \times \cdots \times R_{f_n}$ such that $[x_i]_{ij} = [x_j]_{ij}$ for all $i, j$, there is a unique $x : R$ such that $[x]_i = x_i$.
This might seem like a somewhat contrived property if you've not done any algebraic geometry before, but this is in fact incredibly natural, and is proved (in some form) at the start of any algebraic geometry course. 
For a constructive proof of results similar to these, see the local-global principles in~\cite{Lombardi_2015}.
Thus $R$ is a sheaf and $\mathrm{Zar}$ is subcanonical.

Now we consider the subtopos carved out by the Zariski topology. 
In this subuniverse, Blechschmidt duality still holds, and since $\Zar$ is a projective presentation, the local choice principle of \cref{lem:local_choice} holds.
In this specific example, we can characterise the $\Zar$-covers of any finitely presented $R$-algebra.

\begin{lemma}
    Let $A$ be a finitely presented $R$-algebra. Then any Zariski cover of $\Spec(A)$ is merely of the form $\Spec(A_{f_1}) + \cdots + \Spec(A_{f_n})$ where $f_1, \ldots, f_n : A$ satisfy $f_1 + \cdots + f_n = 1$.
\end{lemma}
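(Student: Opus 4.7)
The plan is to unpack the definition of a $\Zar$-cover $\phi : Z \to \Spec(A)$ in terms of its fibres, and then use projectivity of $\Spec(A)$ to pass from pointwise data to uniform data. We first rewrite $Z \simeq \sum_{x : \Spec(A)} \fib_\phi(x)$. By definition of $\Zar$-cover, each fibre $\fib_\phi(x)$ is merely of the form $\Spec(R_{g_1}) + \cdots + \Spec(R_{g_m})$ for some $m : \bN$ and $g_1, \ldots, g_m : R$ with $1 \in (g_1, \ldots, g_m)$.

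Next we push this truncated pointwise data through the quantifier $\prod_{x : \Spec(A)}$ using projectivity of $\Spec(A)$. This yields an honest length function $m : \Spec(A) \to \bN$, which must be constant (say equal to $n$) by the axiom that maps from affines to $\bN$ are constant, along with functions $g_i : \Spec(A) \to R$ for $i = 1, \ldots, n$ which by Blechschmidt duality correspond to elements $g_i : A$. We also obtain, for each $x : \Spec(A)$, an identification of $\fib_\phi(x)$ with $\Spec(R_{g_1(x)}) + \cdots + \Spec(R_{g_n(x)})$.

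Substituting and swapping sums as in the $\Sigma$-closure computation for affines, we would compute
\[ Z \simeq \sum_{x : \Spec(A)} \left(\Spec(R_{g_1(x)}) + \cdots + \Spec(R_{g_n(x)})\right) \simeq \sum_{i=1}^n \sum_{x : \Spec(A)} \Spec(R_{g_i(x)}) \simeq \sum_{i=1}^n \Spec(A_{g_i}), \]
where the final step uses that $R_{g_i(x)} = R[X]/(g_i(x) X - 1)$ is a finitely presented family over $\Spec(A)$, whose total space — by the same reasoning as for $\Sigma$-closure of affines, invoking that $\Spec$ preserves limits and that coproducts of $R$-algebras give pushouts on spectra — is $\Spec(A[X]/(g_i X - 1)) = \Spec(A_{g_i})$.

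Finally, for the unit ideal condition on the $g_i$, we apply projectivity once more: the pointwise data $\prod_{x : \Spec(A)} \Trunc*{\sum_{h : R^n} g_1(x) h_1 + \cdots + g_n(x) h_n = 1}$ passes through the product by projectivity to yield honest functions $h_i : \Spec(A) \to R$, which by duality correspond to elements $h_i : A$ with $g_1 h_1 + \cdots + g_n h_n = 1$ in $A$, so that $1 \in (g_1, \ldots, g_n)$ as required. The main obstacle is purely bookkeeping — carefully applying projectivity and the $\bN$-constancy axiom at the right moments, and tracking the identifications under duality — but no new mathematical content beyond what was established for $\Sigma$-closure of affines is required.
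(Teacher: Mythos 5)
Your proof follows the paper's argument exactly: represent the cover by its fibre map into $\Zar$, use projectivity of $\Spec(A)$ to push the pointwise truncated data through the product, invoke the $\bN$-constancy axiom to make the length uniform, read the resulting functions $\Spec(A) \to R$ as elements of $A$ via duality, and identify the total space with $\Spec(A_{g_1}) + \cdots + \Spec(A_{g_n})$ by the same computation as in the $\Sigma$-closure lemma for affines. Your closing verification that $1 \in (g_1, \ldots, g_n)$ via a second application of projectivity is correct and is in fact a step the paper's own proof leaves implicit (its statement's ``$f_1 + \cdots + f_n = 1$'' should be read as the unit-ideal condition, which is what you prove).
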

\begin{proof}
    Let $Z \to \Spec(A)$ be a Zariski cover.
    This can dually be represented by its fibre, $r : \Spec(A) \to \Zar$.
    We know that $\prod_{x : \Spec(A)} \Trunc{ r(x) = \Spec(R_{f_1}) + \cdots + \Spec(R_{f_n})}$ by definition of the Zariski presentation.
    By projectivity of $\Spec(A)$ we can push the product inside the truncation.
    This allows us to merely obtain a map $\Spec(A) \to \bN$ giving the length of these sequences, which must be constant by our axioms.
    Writing $n$ for the value of the length, we can view each $f_i$ as a function $\Spec(A) \to R$ or as an element of $A$.
    Thus we obtain an equivalence
    \[ \sum_{x : \Spec(A)} r(x) = \sum_{x : \Spec(A)} \Spec(R_{f_1(x)}) + \cdots \Spec(R_{f_n(x)}) = \Spec(A_{f_1}) + \cdots + \Spec(A_{f_n}) \] 
\end{proof}

This recovers the usual statement of Zariski local choice as stated in the axioms of synthetic algebraic geometry.
\begin{lemma}
    For any finitely presented $R$-algebra $A$ and any Zariski merely inhabited type family $B : \Spec(A) \to \UU$, 
    there are $f_1, \ldots, f_n : A$ such that $f_1 + \cdots + f_n = 1$ and choices of sections $s_i : (x : \Spec(A_{f_i})) \to B(x)$. 
\end{lemma}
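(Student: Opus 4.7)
The plan is to reduce this to the general $T$-local partial choice lemma (Lemma~\ref{lem:local_choice}), then use the previous characterisation of Zariski covers to rewrite the resulting cover in the required standard form.

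First I would note that the hypothesis ``$B$ is Zariski merely inhabited'' means $\prod_{x : \Spec(A)} \OO_{\Zar}\Trunc{B(x)}$, which is exactly the statement that the first projection
\[ p : \sum_{x : \Spec(A)} B(x) \to \Spec(A) \]
is a $\Zar$-surjective map, since its fibre at $x$ is $B(x)$. Now apply Lemma~\ref{lem:local_choice} to the diagram formed by $p$ and $\id_{\Spec(A)}$, using that $\Spec(A)$ is projective by axiom. This yields a type $Z$, a $\Zar$-cover $\psi : Z \to \Spec(A)$, and a lift $\alpha : Z \to \sum_{x : \Spec(A)} B(x)$ with $p \circ \alpha = \psi$.

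Next I would feed the cover $\psi$ into the previous lemma characterising Zariski covers of $\Spec(A)$. It gives, merely, elements $f_1, \ldots, f_n : A$ with $f_1 + \cdots + f_n = 1$ and an equivalence $Z \simeq \Spec(A_{f_1}) + \cdots + \Spec(A_{f_n})$ under which $\psi$ is identified with the natural map induced by the localisation homomorphisms $A \to A_{f_i}$. The key bookkeeping is to check that this identification respects $\psi$: tracing through the proof of the previous lemma, the equivalence is built from the pointwise decomposition of fibres $r(x) = \Spec(R_{f_1(x)}) + \cdots + \Spec(R_{f_n(x)})$ and the isomorphism $\sum_{x : \Spec(A)} \Spec(R_{f_i(x)}) \simeq \Spec(A_{f_i})$, and the first projection from each summand is precisely the natural inclusion $\Spec(A_{f_i}) \hookrightarrow \Spec(A)$.

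Finally, restricting $\alpha$ along the coproduct inclusion $\Spec(A_{f_i}) \hookrightarrow Z$ yields a map $\alpha_i : \Spec(A_{f_i}) \to \sum_{x : \Spec(A)} B(x)$ whose first projection is the natural embedding. By the universal property of the $\Sigma$-type this is exactly the data of a dependent function $s_i : (x : \Spec(A_{f_i})) \to B(x)$. Since the whole conclusion is propositional, we can extract the $f_i$ and $s_i$ from the merely-statements without issue.

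The main obstacle is the compatibility check in the middle paragraph: one has to be careful that the equivalence $Z \simeq \sum_i \Spec(A_{f_i})$ coming from the previous lemma is genuinely an equivalence over $\Spec(A)$, since only then does restricting $\alpha$ yield sections of $B$ over each standard open. Everything else is a direct application of the local choice principle afforded by the projective presentation $\Zar$.
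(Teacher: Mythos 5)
Your proof is correct and is exactly the route the paper intends: the lemma appears there without its own proof, as the immediate combination of the $T$-local choice principle (Lemma~\ref{lem:local_choice}, applied to the projection $\sum_{x : \Spec(A)} B(x) \to \Spec(A)$, whose fibres are the $B(x)$) with the preceding characterisation of Zariski covers of $\Spec(A)$, which is precisely your argument. The compatibility point you flag, that the equivalence $Z \simeq \Spec(A_{f_1}) + \cdots + \Spec(A_{f_n})$ is an equivalence over $\Spec(A)$, is left implicit in the paper but does hold by inspection of that lemma's proof, since the decomposition is built fibrewise from $r : \Spec(A) \to \Zar$ and the summand projections are the natural maps induced by $A \to A_{f_i}$.
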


There are several other presentations of interest in the case of synthetic algebraic geometry, containing the Zariski presentation.
In particular we have the \'etale topology, generated by the types in the Zariski topology and $\Spec(R[x]/g)$ for any unramifiable polynomial $g$.
Similarly we can take the fppf topology, whose terms are generated by $\Spec(A)$ for $A$ faithfully flat over $R$.

\subsubsection*{Cohomology}

We now consider the cohomology of the subuniverses in the setting of synthetic algebraic geometry. 

\begin{definition}
    We call an $R$-module $M$ \textbf{weakly quasicoherent} if for all $f : R$ we have $M^{\Spec(R_f)} = M \otimes_R R_f = M_f$
\end{definition}

\begin{lemma}
  \label{lem:wqc_descent}
    Any weakly quasicoherent module $M$ satisfies descent for the Zariski topology.
    That is for all $X \in \Zar$ we have an exact sequence
    \[  M^X \to M^{X \times X} \to M^{X \times X \times X } \]
\end{lemma}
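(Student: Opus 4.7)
The plan is to reduce the exactness of the displayed three-term sequence to a classical fact in commutative algebra, namely the vanishing of Čech cohomology of a module over $R$ with respect to a Zariski cover. Since $X \in \Zar$, it is merely of the form $X = \Spec(R_{f_1}) + \cdots + \Spec(R_{f_n})$ with $1 \in (f_1, \ldots, f_n)$. Using that $M^{A+B} = M^A \times M^B$, and using that $\Spec$ preserves products (so that $\Spec(R_{f_i}) \times \Spec(R_{f_j}) = \Spec(R_{f_i} \otimes_R R_{f_j}) = \Spec(R_{f_i f_j})$, and similarly for triple products), the sequence becomes
\[ \prod_i M^{\Spec(R_{f_i})} \to \prod_{i,j} M^{\Spec(R_{f_i f_j})} \to \prod_{i,j,k} M^{\Spec(R_{f_i f_j f_k})} . \]
Now I apply weak quasicoherence of $M$ — noting that $R_{f_i f_j}$ and $R_{f_i f_j f_k}$ are again localisations at single elements, so the hypothesis applies — to rewrite this as
\[ \prod_i M_{f_i} \xrightarrow{d^1} \prod_{i,j} M_{f_i f_j} \xrightarrow{d^2} \prod_{i,j,k} M_{f_i f_j f_k} \]
with $d^1(a)_{ij} = [a_j]_{ij} - [a_i]_{ij}$ and $d^2(b)_{ijk} = [b_{jk}]_{ijk} - [b_{ik}]_{ijk} + [b_{ij}]_{ijk}$.

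Next, I would check exactness at the middle term. Clearly $d^2 \circ d^1 = 0$. For the converse, suppose $(b_{ij}) \in \prod_{i,j} M_{f_i f_j}$ is a Čech cocycle, i.e.\ $d^2(b) = 0$. Since $1 \in (f_1, \ldots, f_n)$, one can merely find $g_1, \ldots, g_n \in R$ and a natural number $N$ (after clearing denominators from a partition-of-unity identity) with $\sum_k g_k f_k^{N} = 1$, which remains valid after any further localisation. I would then define $a_i := \sum_k g_k \cdot [b_{ki}]_i \in M_{f_i}$ (with $b_{ii}$ defined suitably, or using the cocycle relation to produce the correct formula) and verify, using the cocycle identity in $M_{f_i f_j f_k}$, that $d^1(a)_{ij} = b_{ij}$ in $M_{f_i f_j}$. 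This is the textbook Čech coboundary calculation for a Zariski cover of an affine scheme.

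The main obstacle I anticipate is bureaucratic rather than mathematical: ensuring that every identification $M^{X \times X} = \prod_{i,j} M_{f_i f_j}$ (and the triple analogue) genuinely matches the differentials $d$ defined in the paper, rather than introducing sign or indexing discrepancies. Because $X$ is a finite coproduct of affines and weak quasicoherence is invoked repeatedly, I would package this bookkeeping in a preliminary lemma asserting that for finitely presented $R$-algebras $A, B, C$ one has $M^{\Spec(A) + \Spec(B)} = M^{\Spec(A)} \times M^{\Spec(B)}$ and $M^{\Spec(A) \times \Spec(B)} = M_{A \otimes_R B}$ (where $M_S := M \otimes_R S$), functorially in the indexing. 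Once these identifications are in place, the exactness is the classical local–global principle for modules, for which a fully constructive treatment is available in references like~\cite{Lombardi_2015} and which mirrors the sheaf-condition argument already used for $R$ when showing $\Zar$ is subcanonical.
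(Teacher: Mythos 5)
Your proposal follows essentially the same route as the paper's proof: decompose $X$ as $\Spec(R_{f_1}) + \cdots + \Spec(R_{f_n})$, use duality together with $R_{f_i} \otimes_R R_{f_j} = R_{f_i f_j}$ and weak quasicoherence to rewrite the sequence as $\prod_i M_{f_i} \to \prod_{i,j} M_{f_i f_j} \to \prod_{i,j,k} M_{f_i f_j f_k}$, and then invoke the classical exactness of this \v{C}ech sequence, which the paper dispatches with ``as is shown in any algebra course.'' Your extra sketch of the coboundary construction only adds detail beyond the paper (and is fine modulo the bookkeeping you already flag: $[b_{ki}]_i$ should really be a lift of $f_k^N b_{ki}$ along $M_{f_i} \to M_{f_i f_k}$, since restriction goes the wrong way).
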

\begin{proof}
    Let $X = \Spec(R_{f_1}) + \cdots \Spec(R_{f_n})$.
    Then rewriting using duality and weak quasicoherence, and standard properties of algebra, the exact sequence becomes 
    \[ \prod_{i} M_{f_i} \to \prod_{i,j} M_{f_i f_j} \to \prod_{i,j,k} M_{f_i f_j f_k} \]
    This sequence is exact, as is shown in any algebra course.
\end{proof}

\begin{corollary}
    We have 
    \[ H^1_{\Zar}(\Spec(A), M) = 0\]
    for any fintiely prestented $R$-algebra $A$ and any weakly quasicoherent module $M$.
\end{corollary}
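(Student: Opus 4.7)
The plan is to assemble this as a direct application of Theorem~\ref{th:zero_cohom} to the Zariski setting, since all the ingredients have already been established. The theorem requires three inputs: a projective presentation $T$, an abelian group satisfying descent for $T$, and a projective type $X$. I would point out that each of these is available in our context.

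First I would note that the Zariski presentation $\Zar$ has been shown to be a projective presentation earlier in the section, so the presentation hypothesis is met. Second, the module $M$ being weakly quasicoherent yields, via Lemma~\ref{lem:wqc_descent}, the descent exact sequence $M^X \to M^{X \times X} \to M^{X \times X \times X}$ for every $X \in \Zar$; viewing $M$ merely as an abelian group under its additive structure, this is precisely the descent condition required by Theorem~\ref{th:zero_cohom}. Third, the projectivity of $\Spec(A)$ for any finitely presented $R$-algebra $A$ is given to us by the projectivity axiom on spectra of finitely presented $U_\bT$-algebras, specialised to $\bT$ the theory of commutative rings.

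With these three facts in hand, Theorem~\ref{th:zero_cohom} applies directly and gives $H^1_{\Zar}(\Spec(A), M) = 0$, where the $0$ on the right-hand side denotes the trivial abelian group. There is no real obstacle: the work of the previous subsections (establishing that $\Zar$ is a projective presentation, that $\Spec(A)$ is projective, and that weak quasicoherence implies descent) has already done all of the heavy lifting, and the corollary is essentially a bookkeeping step recording that these hypotheses line up with Theorem~\ref{th:zero_cohom}. The only thing worth being mildly careful about is implicitly forgetting the $R$-module structure on $M$ down to an abelian group when invoking the theorem, since the theorem is stated for abelian groups rather than modules.
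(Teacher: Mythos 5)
Your proposal is correct and matches the paper's own proof, which is exactly the same one-line application of Theorem~\ref{th:zero_cohom} via Lemma~\ref{lem:wqc_descent}; you simply spell out the three hypotheses (projectivity of the presentation $\Zar$, descent for $M$, projectivity of $\Spec(A)$ from the axiom) more explicitly than the paper does. Your remark about viewing the $R$-module $M$ as an abelian group is a fine, if minor, point of care that the paper leaves implicit.
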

\begin{proof}
  Since weakly quasicoherent satisfy descent, by~\cref{lem:wqc_descent}, we can apply \cref{th:zero_cohom} to deduce this.
\end{proof}

We can extend this result to include both the fppf and \'etale presentation by picking a smaller class of modules. 
\begin{definition}
    Call an $R$-module $M$ \textbf{quasicoherent} if for all finitely presented $R$-algebras $A$ we have $M^{\Spec(A)} = M \otimes_R A$.
\end{definition}
Using this standard definition we can abstract the property of the zariski topology that allows for descent to occur.
The particular property is that $\prod_{i} R_{f_i}$ is \textbf{faithfully flat} whenever $1 \in (f_1 + \cdots f_n)$.

\begin{definition}
  An $R$-algebra $A$ is \textbf{faithfully flat} if tensoring with $A$ preserves short exact sequences.
\end{definition}

\begin{lemma}
  Let $A$ be a faithfully flat $R$-module. 
  Then for any $R$-module $M$ we have An exact sequence 
  \[ M \otimes_R A \to M \otimes_R A \otimes_R A \to M \otimes_R A \otimes_R A \otimes_R A \]
\end{lemma}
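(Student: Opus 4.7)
The approach is to apply the classical Amitsur descent argument, adapted to the type-theoretic setting. First I would make the maps explicit: the leftmost map is $d_0(m \otimes a) = m \otimes 1 \otimes a - m \otimes a \otimes 1$ and the rightmost is $d_1(m \otimes a \otimes b) = m \otimes 1 \otimes a \otimes b - m \otimes a \otimes 1 \otimes b + m \otimes a \otimes b \otimes 1$. A direct computation (expanding the alternating sums and using bilinearity) shows $d_1 \circ d_0 = 0$, so the sequence is a complex.

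The core of the proof is exactness at the middle term. The plan is the standard trick: since $A$ is faithfully flat, a sequence of $R$-modules is exact iff it is exact after tensoring with $A$ on the right. So it suffices to show the tensored sequence
\[ M \otimes_R A \otimes_R A \to M \otimes_R A \otimes_R A \otimes_R A \to M \otimes_R A \otimes_R A \otimes_R A \otimes_R A \]
is exact. Here we can use the $R$-algebra structure of $A$, specifically the multiplication $\mu : A \otimes_R A \to A$ satisfying $\mu(1 \otimes a) = a$, to construct an explicit contracting homotopy $s$ which multiplies the last two tensor factors. One verifies directly that $\mathrm{id} = s \circ d_1' + d_0' \circ s$ where $d_0', d_1'$ are the tensored differentials, so any element in the kernel of $d_1'$ lies in the image of $d_0'$.

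Finally one must translate the ``faithful flatness reflects exactness'' step. Faithful flatness of $A$ as an $R$-algebra means $-\otimes_R A$ is an exact functor that reflects the zero module, hence reflects exactness of sequences. Given an element in $\ker(d_1)$, tensoring with $A$ and applying the contracting homotopy gives a preimage after base change; faithful flatness lets us descend this preimage back to $M \otimes_R A$, yielding the desired exactness.

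The main obstacle will be the clean construction of the contracting homotopy and ensuring the descent step works cleanly. The contracting homotopy uses the algebra multiplication applied to two specific tensor factors, and bookkeeping the indices carefully (making sure one really gets $\mathrm{id} = sd + ds$ and not just $sd + ds$ equal to something close to $\mathrm{id}$) is where errors typically creep in. The reflection-of-exactness step is standard homological algebra but should be spelled out since we are working constructively.
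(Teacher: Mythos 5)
The paper states this lemma \emph{without proof} (it is the standard exactness of the truncated Amitsur complex for a faithfully flat algebra), so there is no in-paper argument to compare against; your proposal supplies precisely the classical argument, and it works. The details check out: with the base-change factor placed last, the homotopy $s(m\otimes a\otimes b\otimes c)=m\otimes a\otimes bc$ and $s(m\otimes a\otimes b\otimes c\otimes e)=m\otimes a\otimes b\otimes ce$ gives, for the differentials you wrote, $s\circ d_1' - d_0'\circ s = \mathrm{id}$ on $M\otimes A\otimes A\otimes A$ (note the minus sign relative to your $sd+ds$ claim, a harmless convention issue), so any $x$ with $d_1'x=0$ satisfies $x = -d_0'(sx)$, giving middle exactness after base change. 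Two caveats. First, your closing sentence about ``descending the preimage back'' is the wrong mechanism: one does not descend an element along $-\otimes_R A$. The correct step, which you in fact already named, is homological: since $-\otimes_R A$ is exact it commutes with forming the middle homology $H = \ker(d_1)/\mathrm{im}(d_0)$, so $H\otimes_R A = 0$, and faithfulness ($N\otimes_R A = 0 \Rightarrow N = 0$) forces $H=0$; drop the element chase and keep this. Second, be aware that the paper's own definition of faithfully flat literally asks only that tensoring \emph{preserves} short exact sequences, i.e.\ flatness; your argument genuinely needs the faithfulness half (reflection of vanishing), which is part of the standard definition and should be read into the paper's -- without it the lemma is not provable, so your explicit statement of the stronger definition is the right move.
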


\begin{theorem}
  Let $T$ be a presentation which satisfies, that any  
   $X \in T$ can be writen as $X = \Spec(A_1) + \cdots + \Spec(A_n)$ where $A_1 \times \cdots \times A_n$ is a faithfully flat $R$-algebra. 
  
  Then all quasicoherent modules satisfy descent for $T$.
\end{theorem}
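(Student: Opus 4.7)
The plan is to translate the descent sequence for a generic $X = \Spec(A_1) + \cdots + \Spec(A_n) \in T$ into a tensor-product sequence for $A := A_1 \times \cdots \times A_n$, and then invoke the preceding lemma on faithfully flat descent.

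First, using the universal property of $+$, quasicoherence of $M$, and the fact that tensor distributes over finite products of modules (since they agree with finite direct sums for modules), I compute
\[ M^X \simeq \prod_i M^{\Spec(A_i)} \simeq \prod_i M \otimes_R A_i \simeq M \otimes_R A. \]
For the double product, I use that $\Spec$ preserves limits (so $\Spec(A_i) \times \Spec(A_j) \simeq \Spec(A_i \otimes_R A_j)$), together with $X \times X \simeq \sum_{i,j} \Spec(A_i) \times \Spec(A_j)$, to obtain
\[ M^{X \times X} \simeq \prod_{i,j} M \otimes_R A_i \otimes_R A_j \simeq M \otimes_R A \otimes_R A, \]
and analogously $M^{X \times X \times X} \simeq M \otimes_R A \otimes_R A \otimes_R A$.

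Next, I would verify that the differentials $d$ match, under these identifications, the standard faithfully flat descent differentials on $A$. On the $(i,j)$-component, the map $d(f)(x, x') = f_i(x) - f_j(x')$ is the difference of the two pullbacks along the inclusions $A_i, A_j \hookrightarrow A_i \otimes_R A_j$ given by $a \mapsto a \otimes 1$ and $a \mapsto 1 \otimes a$; assembling these across $(i,j)$ yields exactly the difference of the two coface maps $M \otimes_R A \to M \otimes_R A \otimes_R A$. A parallel calculation handles the next differential via the three canonical maps $A \otimes_R A \to A \otimes_R A \otimes_R A$.

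With the sequence rewritten as
\[ M \otimes_R A \to M \otimes_R A \otimes_R A \to M \otimes_R A \otimes_R A \otimes_R A, \]
exactness follows directly from the preceding faithfully flat descent lemma, since $A$ is assumed faithfully flat. The main obstacle is the bookkeeping: carefully matching the two differentials and collapsing the products indexed by pairs $(i,j)$ and triples $(i,j,k)$ into tensor powers of $A$. Once these identifications are written out, the conclusion is immediate.
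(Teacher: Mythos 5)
Your proposal is correct and follows essentially the same route as the paper: both collapse $M^X$, $M^{X\times X}$, $M^{X\times X\times X}$ via quasicoherence, limit preservation of $\Spec$, and distribution of tensor over finite products into $M \otimes_R A \to M \otimes_R A \otimes_R A \to M \otimes_R A \otimes_R A \otimes_R A$ for $A = A_1 \times \cdots \times A_n$, then cite the preceding faithfully flat descent lemma. Your explicit check that the differentials match the coface maps is a detail the paper silently elides, but the argument is the same.
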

\begin{proof}
  Let $X = \Spec(A_1) + \cdots + \Spec(A_n) \in T$.
  Write $A = A_1 \times \cdots \times A_n$.
  Then we have 
  \begin{align*}
    M^X \to &M^{X \times X} \to M^{X \times X \times X} \\
    &=  \prod_{i} M \otimes_R A_i  \to \prod_{ij} M \otimes_R A_i \otimes_R A_j \to \prod_{ijk} M \otimes_R A_i \otimes_R \otimes_R A_j \otimes_R A_k \\
    &= M \otimes_R A \to M \otimes_R A \otimes_R A \to M \otimes_R A \otimes_R A \otimes_R A 
  \end{align*}
  which is exact by the previous lemma.
\end{proof}

\begin{corollary}
  Let $T$ be either the fppf, \'etale or zariski presentations.
  Then for any fintiely presented $R$-algebra $A$ and any abelian group $G$ we have $H^1(\Spec(A), G) = 0$  
\end{corollary}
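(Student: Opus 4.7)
The approach I would take is to observe that this corollary does not actually rely on the immediately preceding descent theorem, but rather follows directly from two earlier results. First, the axiom that $\Spec(A)$ is projective whenever $A$ is a finitely presented $U_\bT$-algebra: specialising $\bT$ to the theory of commutative rings, so that $U_\bT = R$, this applies to every finitely presented $R$-algebra $A$. Second, the earlier proposition showing that every projective type has trivial cohomology in positive degrees, i.e.\ $H^n(X, G) = 0$ for all projective $X$, all abelian groups $G$, and all $n > 0$; recall its proof merely used that $K(G, 1)$ is $0$-connected, so each value $\chi(x)$ is merely equal to the basepoint, and that on a projective type a pointwise merely true family is itself merely true.

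Composing these two ingredients gives the corollary in a single step: $\Spec(A)$ is projective by the axiom, and hence $H^1(\Spec(A), G) = 0$ for any abelian group $G$. The hypothesis that $T$ is the fppf, \'etale, or Zariski presentation plays no logical role in the argument; it is present only because the corollary is embedded in the discussion of those particular presentations and serves to frame the result alongside the descent statements around it.

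I would be careful to flag in the write-up that the $H^1$ appearing here is the ambient cohomology from the original definition, not the modal version $H^1_{\OO_T}$: were the latter intended, the universal quantifier over arbitrary abelian groups $G$ would overshoot what the descent machinery supplies, since descent has only been established for weakly quasi-coherent or quasi-coherent modules. There is no real obstacle once this interpretation is pinned down; the argument collapses to citing projectivity of $\Spec(A)$ and then invoking the projective-implies-trivial-cohomology proposition.
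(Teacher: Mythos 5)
Your argument is valid for the statement exactly as typeset: the ambient $H^1(\Spec(A),G) = \Trunc{\Spec(A) \to K(G,1)}_0$ vanishes immediately from the projectivity axiom together with the earlier proposition that projective types have trivial cohomology in positive degrees, and under that reading the choice of presentation is indeed idle. But this is not the paper's argument, and the paper's own one-line proof --- that all three presentations satisfy the hypothesis of the preceding theorem --- makes clear the literal statement is a typo: the intended claim is about cohomology in the subuniverse, $H^1_{\OO_T}(\Spec(A), M) = 0$ with coefficients a quasicoherent module $M$, exactly parallel to the earlier corollary $H^1_{\Zar}(\Spec(A), M) = 0$ for weakly quasicoherent $M$. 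The intended route is: covers in each of the Zariski, \'etale and fppf presentations are finite sums $\Spec(A_1) + \cdots + \Spec(A_n)$ with $A_1 \times \cdots \times A_n$ faithfully flat over $R$; hence quasicoherent modules satisfy descent for $T$ by the preceding theorem; hence \cref{th:zero_cohom} applies (this is where projectivity of $\Spec(A)$, via local choice, actually does its work) and kills $H^1_{\OO_T}$. Your reading, by contrast, says nothing about the subuniverse --- vanishing of the ambient $H^1$ does not transfer along the canonical map $\OO H^1(X,G) \to H^1_\OO(X,G)$ --- so it carries none of the content this section is after, namely that cohomology of quasicoherent modules is stable between the presheaf universe and the Zariski, \'etale and fppf subuniverses.

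Your hedge about the modal reading is exactly right, and is worth stating more strongly: for arbitrary abelian groups the modal statement is not merely beyond the descent machinery but false in general --- take $G = R^\times$ in the Zariski case, where $H^1_{\Zar}(\Spec(A), R^\times)$ computes the Picard group of $A$, which need not be trivial for finitely presented $A$. So the correct repair is to restrict the coefficients to quasicoherent modules and add the modal subscript, not to weaken $H^1_{\OO_T}$ to the ambient $H^1$. As a diagnosis of the statement's typos your proposal is sharp; as a proof of the result the paper intends it establishes the wrong statement, and the argument you should give instead is the faithful flatness of the three kinds of covers followed by the descent theorem and \cref{th:zero_cohom}.
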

\begin{proof}
  All three presentations satisfy the condition of the previous theorem.
\end{proof}

\subsection{Triangulated Type Theory}

The recent extension of simplicial HoTT by \citeauthor{gratzer2024directedunivalencesimplicialhomotopy}~\cite{gratzer2024directedunivalencesimplicialhomotopy} is closely related to this example.
We take the subset of triangulated type theory given by an instance of our generic example with $\bT$ as the theory of bounded distributive lattices.
We write the universal model of this theory as $\bI$. They then define the following

\begin{definition}
    A type $X$ is simplicial if it is local with respect to $(i \leq j) \vee (j \leq i)$ for all $i, j : \bI$.
\end{definition}
Note that
$(i \leq j) \vee (j \leq i) = \Trunc{\Spec(\bI/(i \leq j)) + \Spec(\bI/(j \leq i))}$ 
giving this modality a natural presentation by sums of affines.

In \cite{gratzer2024directedunivalencesimplicialhomotopy} it is shown that $\bI$ is simplicial only by first taking a model extension of 
HoTT, and then adding two additional axioms of triangulated type theory, in addition to duality.
Using the sheaf condition established here we prove it is simplicial just using duality, and with no modal type theory.

To do so we establish some lattice theory, specifically on congruences in distributive lattices.
Given a lattice $L$, and $a,b: L$ we will write $(a = b) \subseteq L^2$ for the smallest congruence containing $(a,b)$. 
We will write $(a \leq b)$ for $(a \vee b = b)$.
\begin{lemma}
    $(a = b) = (a \wedge b =  a \vee b)$
\end{lemma}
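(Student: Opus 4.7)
The plan is to prove the equality of these two congruences by mutual inclusion, using only that both sides are equivalence relations on $L$ compatible with $\wedge$ and $\vee$, so that $(x,y)$ in the congruence implies $(x \wedge z, y \wedge z)$ and $(x \vee z, y \vee z)$ are too.

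For the inclusion $(a = b) \subseteq (a \wedge b = a \vee b)$, I would work inside the congruence generated by $(a \wedge b, a \vee b)$ and exhibit $(a,b)$ in it. Joining both sides with $a$ and using absorption gives $(a \wedge b) \vee a = a$ and $(a \vee b) \vee a = a \vee b$, so $(a, a \vee b)$ lies in this congruence. Symmetrically, joining with $b$ produces $(b, a \vee b)$. Transitivity then yields $(a, b)$.

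For the reverse inclusion $(a \wedge b = a \vee b) \subseteq (a = b)$, I work inside the congruence generated by $(a, b)$. Meeting $(a, b)$ with $a$ gives $(a, a \wedge b)$ by idempotence and absorption, so $(a \wedge b, a)$ is in the congruence. Joining $(a, b)$ with $b$ gives $(a \vee b, b)$, so $(b, a \vee b)$ is in the congruence. Chaining $(a \wedge b, a)$, $(a, b)$, and $(b, a \vee b)$ by transitivity delivers $(a \wedge b, a \vee b)$.

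Neither direction uses distributivity; only the lattice axioms (idempotence, commutativity, absorption) and the congruence properties are needed. The only mild subtlety — and the closest thing to an obstacle — is being explicit about which absorption identity justifies each reduction, e.g.\ $(a \wedge b) \vee a = a$ rather than reaching for a distributive law by reflex. Once the two inclusions are set up as above, each is essentially a three-step chain in the ambient congruence.
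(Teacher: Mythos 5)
Your proof is correct: both inclusions go through exactly as written, and your remark that distributivity is never used is also right --- the identity $(a=b) = (a\wedge b = a\vee b)$ between generated congruences holds in an arbitrary lattice, from absorption, idempotence, and the compatibility of congruences with $\wedge$ and $\vee$ alone. The paper states this lemma without any proof, so there is nothing to compare against; your argument is the standard one the authors presumably intended, with only a trivial nit: in the reverse inclusion the step from $(a,b)$ to $(a, a\wedge b)$ uses idempotence and commutativity of $\wedge$, not absorption.
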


\begin{theorem}[\cite{gratzer2009lattice} Chapter 9, Theorem 3]
    \label{th:principal_cong_in_distrib_lattice}
    Let $L$ be a distributive lattice, and $a,b,x,y : L$ with $a \leq b$.
    Then $x \equiv y \pmod{a = b}$ iff $x \wedge a = y \wedge a$ and $x \vee b = y \vee b$.
\end{theorem}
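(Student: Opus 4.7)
My plan is to handle the two directions by different techniques: for $(\Rightarrow)$ I introduce an auxiliary relation and verify it is a congruence containing $(a,b)$, while for $(\Leftarrow)$ I rewrite $x$ into $y$ using a short chain of lattice identities and the congruence $a \equiv b$.

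For the forward direction, I define
\[ R(u, v) := (u \wedge a = v \wedge a) \wedge (u \vee b = v \vee b). \]
The plan is to show $R$ is a lattice congruence containing $(a,b)$, so that minimality of $(a = b)$ yields $(a = b) \subseteq R$, which is precisely the implication $x \equiv y \pmod{a = b} \Rightarrow (x \wedge a = y \wedge a) \wedge (x \vee b = y \vee b)$. Reflexivity, symmetry, and transitivity of $R$ are immediate. Compatibility with $\wedge$ reduces, via associativity, commutativity and idempotence of $\wedge$, to $(u_1 \wedge u_2) \wedge a = (u_1 \wedge a) \wedge (u_2 \wedge a)$, and the $\vee b$ component requires distributivity: $(u_1 \wedge u_2) \vee b = (u_1 \vee b) \wedge (u_2 \vee b)$; compatibility with $\vee$ is dual. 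Checking $R(a,b)$ is where the hypothesis $a \leq b$ enters, since it forces $a \wedge b = a$ and $a \vee b = b$, making both required equalities tautological.

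For the backward direction, assume $x \wedge a = y \wedge a$ and $x \vee b = y \vee b$. The idea is that modulo $(a = b)$ the elements $a$ and $b$ become interchangeable, which together with the two hypotheses is enough to morph $x$ into $y$. Concretely I would chain
\begin{align*}
    x &= x \wedge (x \vee b) = x \wedge (y \vee b) = (x \wedge y) \vee (x \wedge b) \\
      &\equiv (x \wedge y) \vee (x \wedge a) = (x \wedge y) \vee (y \wedge a) \\
      &= y \wedge (x \vee a) \equiv y \wedge (x \vee b) = y \wedge (y \vee b) = y,
\end{align*}
where the two $\equiv$ steps replace $b$ by $a$ and then $a$ by $b$ modulo $(a = b)$, and the remaining equalities use absorption, the hypotheses, and distributivity.

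The main obstacle is finding the backward chain; the crucial move is that distributivity lets $(x \wedge y) \vee (y \wedge a)$ be refactored as $y \wedge (x \vee a)$, which is precisely what bridges the meet-side hypothesis to the join-side hypothesis through a single $a \leftrightarrow b$ swap. Everything else, including the congruence and compatibility verifications in the forward direction, is routine.
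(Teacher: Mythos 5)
Your proof is correct. The paper does not prove this theorem at all — it is imported by citation from Gr\"atzer — and your argument is precisely the standard proof from the cited source: exhibit the relation $u \wedge a = v \wedge a$, $u \vee b = v \vee b$ as a congruence containing $(a,b)$ (distributivity entering in the $\vee b$ and $\wedge a$ components of the compatibility checks, and $a \leq b$ entering exactly where you say, namely in verifying that the relation identifies $a$ with $b$) and conclude by minimality, then close the converse with the absorption/distributivity chain through the two $a \leftrightarrow b$ swaps, each of which is legitimate since $a \equiv b \pmod{a=b}$ propagates through the lattice operations. Nothing further is needed; note also, consistently with your chain, that the backward direction never uses $a \leq b$.
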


\begin{corollary}
    Let $L$ be a distributive lattice and $a, b : L$.
    Then $L/(a = b) = 0$ iff $a$ and $b$ are complements.
\end{corollary}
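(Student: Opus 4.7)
The plan is to combine the previous lemma, which rewrites $(a = b)$ as $(a \wedge b = a \vee b)$, with Theorem \ref{th:principal_cong_in_distrib_lattice} applied to the pair $(x, y) = (0, 1)$. Recall that being complements means $a \wedge b = 0$ and $a \vee b = 1$, so this is exactly the conclusion we should get out.

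First I would observe that $L/(a = b) = 0$ (the trivial lattice, where $0 = 1$) is equivalent to saying $0 \equiv 1 \pmod{(a=b)}$: one direction is immediate from the definition of the trivial lattice, and the other follows because if $0 \equiv 1$ in the quotient then for any $x \in L$ we have $x = x \wedge 1 \equiv x \wedge 0 = 0$, making the quotient a singleton.

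Next, by the preceding lemma $(a = b) = (a \wedge b = a \vee b)$, and we have $a \wedge b \leq a \vee b$, so we may apply Theorem \ref{th:principal_cong_in_distrib_lattice} with the roles of $a, b$ in that theorem played by $a \wedge b, a \vee b$ respectively, and with $x = 0$, $y = 1$. The theorem then says
\[ 0 \equiv 1 \pmod{(a \wedge b = a \vee b)} \iff 0 \wedge (a \wedge b) = 1 \wedge (a \wedge b) \text{ and } 0 \vee (a \vee b) = 1 \vee (a \vee b). \]
The right-hand conjunction simplifies to $0 = a \wedge b$ and $a \vee b = 1$, which is precisely the statement that $a$ and $b$ are complements.

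I do not anticipate any real obstacle here; the work is just carefully reducing to the hypothesis of the cited theorem. The only thing to watch is that Theorem \ref{th:principal_cong_in_distrib_lattice} requires $a \leq b$, which is why the reduction through the preceding lemma (replacing $a, b$ by $a \wedge b, a \vee b$) is the crucial first move rather than attempting to apply the theorem directly to the unordered pair.
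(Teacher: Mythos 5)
Your proposal is correct and follows essentially the same route as the paper's own proof: rewrite $(a = b)$ as $(a \wedge b = a \vee b)$ via the preceding lemma, note $a \wedge b \leq a \vee b$, and apply Theorem \ref{th:principal_cong_in_distrib_lattice} with $x = 0$, $y = 1$ to read off the complement conditions. Your extra justification that $0 \equiv 1$ in the quotient forces triviality is a detail the paper leaves implicit, but it is the same argument.
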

\begin{proof}
    We note first that $L/(a = b) = L/(a \wedge b = a \vee b)$, and that clearly $a \wedge b \leq a \vee b$.
    Thus we have $L/(a = b) = 0$ if and only if $0 \equiv 1 \pmod{a \wedge b = a \vee b}$.
    By \cref{th:principal_cong_in_distrib_lattice}, this is equivalent to  
    \[ 0 =  0 \wedge (a \wedge b) = 1 \wedge (a \wedge b) = a \wedge b \]
    \[ a \vee b =  0 \vee ( a \vee b) = 1 \vee (a \vee b) = 1\]
    So $a$ and $b$ are complements.
\end{proof}

\begin{lemma}
    For all $a, b : \bI$ we have $\neg (a = b) = (a \vee b = 1) \times (a \wedge b = 0)$.
\end{lemma}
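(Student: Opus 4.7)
The plan is to use Blechschmidt duality to convert the internal negation $\neg(a=b)$ into the external algebraic condition that $\bI/(a=b)$ is the trivial lattice, at which point the previous corollary applies directly with $L = \bI$.

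First I would identify $\Spec(\bI/(a=b))$ with the proposition $(a=b)$. A $\bI$-algebra homomorphism $f : \bI/(a=b) \to \bI$ must be a retraction of the quotient map $q : \bI \to \bI/(a=b)$, since both sides carry a $\bI$-algebra structure and $f$ must commute with the structure maps; such a retraction exists and is then unique precisely when $q$ is an isomorphism, which happens iff $a = b$ in $\bI$. Since $\bI$ is a set, this gives a canonical equivalence of propositions $\Spec(\bI/(a=b)) \simeq (a=b)$.

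For the forward direction, assume $\neg(a=b)$. Then $\Spec(\bI/(a=b)) \simeq \emptyset$, and by Blechschmidt duality $\bI/(a=b) \simeq \bI^\emptyset$, which is the terminal (hence trivial) distributive lattice. Applying the previous corollary to $L = \bI$ internally yields that $a$ and $b$ are complements in $\bI$, namely $a \vee b = 1$ and $a \wedge b = 0$.

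For the reverse direction, assume $a \vee b = 1$ and $a \wedge b = 0$ and suppose for contradiction that $a = b$. Then $a = a \vee b = 1$ and $a = a \wedge b = 0$, so $0 = 1$ in $\bI$. Deriving $\bot$ from this requires the internal non-triviality of $\bI$, which I expect to be the main subtlety: it is a standard property of the intended presheaf model and follows from a mild strengthening of the axioms listed (for instance, from the fact that the structure sheaf of the generic stage separates $0$ from $1$). With this in hand the two implications compose to the claimed equivalence.
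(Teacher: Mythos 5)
Your argument is essentially the paper's: the paper likewise identifies $\neg(a=b)$ with $\neg\Spec(\bI/(a=b))$, uses duality to convert this into the statement that $\bI/(a=b)$ is the zero (trivial) algebra, and invokes the preceding corollary on principal congruences to translate that into $a \wedge b = 0$ and $a \vee b = 1$. Your forward direction matches this exactly; your reverse direction is a slightly more direct computation (deriving $0 =_{\bI} 1$ from $a = b$ together with complementation) where the paper instead runs the corollary and duality backwards, but the ingredients are the same.

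The subtlety you flag at the end is genuine, and it is worth recording that the paper's one-line proof does not escape it either. The implication ``$\bI/(a=b)$ is the zero algebra $\Rightarrow \neg\Spec(\bI/(a=b))$'', which the paper subsumes under ``by duality'', amounts to $\Spec$ of the trivial algebra being empty, i.e.\ to $\neg(0 =_{\bI} 1)$: a homomorphism out of the trivial algebra exists exactly when $0 = 1$ holds in the codomain. This nontriviality is not derivable from the axioms listed in the paper's applications section: interpreting everything in sets with $\bI$ the trivial lattice, every finitely presented $\bI$-algebra is trivial and every spectrum is a singleton, so duality, projectivity of spectra, and constancy of maps to $\bN$ all hold, yet $\neg(0=1)$ fails while $0 \vee 1 = 1$ and $0 \wedge 1 = 0$ hold --- so the lemma itself fails in that interpretation. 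What rescues both your proof and the paper's is the ambient setting of triangulated type theory, where the interval is axiomatised to be nontrivial (correspondingly, the intended presheaf model is taken over nontrivial lattices, in which $\Spec$ of the trivial algebra really is empty). So your proof is correct relative to that ambient axiom, and your only real divergence from the paper is that you made explicit an assumption the paper uses silently.
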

\begin{proof}
    Note that $\neg (a = b)$ is equivalent to $\neg \Spec(\bI/(a = b))$.
    By duality that is equivalent to $\bI/(a=b)$ being the zero algebra, which is equivalent to $a \wedge b = 0$ and $a \vee b = 1$.
\end{proof}

This further implies $\neg (a = 0) = (a = 1)$ and $\neg (a = 1) = (a = 0)$.

\begin{lemma}
    $\bI$ is simplicial.
\end{lemma}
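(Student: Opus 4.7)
The plan is to apply the sheaf condition from \cref{cor:sheaf_cond} to $\bI$, reducing simpliciality to a purely algebraic gluing statement in a distributive lattice, which I then verify by exhibiting an explicit witness. Since $\bI$ is a set, and being simplicial means being local at the proposition
\[ (i \leq j) \vee (j \leq i) = \Trunc{Y}, \quad Y := \Spec(\bI/(i \leq j)) + \Spec(\bI/(j \leq i)), \]
by the remark following \cref{cor:sheaf_cond} it suffices to check for each $i,j : \bI$ that $\bI \to \lim(\bI^Y \rightrightarrows \bI^{Y \times Y})$ is an equivalence. Duality gives $\bI^Y = \bI/(i \leq j) \times \bI/(j \leq i)$; decomposing $Y \times Y$ into its four components and using that $\bI/\theta \otimes_\bI \bI/\phi$ is the joint quotient, we find $\bI^{Y \times Y} = \bI/(i\leq j) \times \bI/(i=j) \times \bI/(i=j) \times \bI/(j \leq i)$, with the two $\bI/(i=j)$-factors coming from the cross-terms. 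The equalizer condition forces exactly those two factors to agree, so the goal reduces to showing that the canonical map
\[ \bI \to \bI/(i \leq j) \times_{\bI/(i=j)} \bI/(j \leq i) \]
is an isomorphism. Since the statement only mentions the lattice operations, it suffices to prove it in any bounded distributive lattice $L$ for arbitrary $i, j \in L$.

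For injectivity, if $x$ and $y$ agree in both quotients then \cref{th:principal_cong_in_distrib_lattice}, applied with $a = j$, $b = i \vee j$ for the first congruence and $a = i$, $b = i \vee j$ for the second, gives $x \wedge i = y \wedge i$, $x \wedge j = y \wedge j$, and $x \vee (i \vee j) = y \vee (i \vee j)$. Distributivity collapses the first two into $x \wedge (i \vee j) = y \wedge (i \vee j)$, and then the standard cancellation identity in a distributive lattice (namely $x = x \wedge (x \vee z) = x \wedge (y \vee z) = (x \wedge y) \vee (x \wedge z) = y \wedge (y \vee z) = y$ whenever $x \wedge z = y \wedge z$ and $x \vee z = y \vee z$) forces $x = y$.

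For surjectivity, take $a, b \in L$ compatible in $L/(i=j)$, which by \cref{th:principal_cong_in_distrib_lattice} means $a \wedge i \wedge j = b \wedge i \wedge j$ and $a \vee i \vee j = b \vee i \vee j$. I propose the gluing element
\[ x := (a \wedge j) \vee (b \wedge i) \vee (a \wedge b). \]
Distributing $x \wedge j$ gives $(a \wedge j) \vee (b \wedge i \wedge j) \vee (a \wedge b \wedge j)$, and the compatibility condition lets us rewrite $b \wedge i \wedge j = a \wedge i \wedge j \leq a \wedge j$, so $x \wedge j = a \wedge j$; the symmetric computation gives $x \wedge i = b \wedge i$. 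For the third equation, absorption collapses $x \vee (i \vee j)$ to $(a \wedge b) \vee (i \vee j)$, and then the identity $a = a \wedge (a \vee i \vee j) = a \wedge (b \vee i \vee j) = (a \wedge b) \vee (a \wedge (i \vee j))$ (using the compatibility on joins together with distributivity) shows $a \leq (a \wedge b) \vee (i \vee j)$, hence $(a \wedge b) \vee (i \vee j) = a \vee (i \vee j)$. The main obstacle here is really just identifying the correct witness $x$; once it is in hand, the verification is a mechanical manipulation of meets and joins.
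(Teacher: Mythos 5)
Your proof is correct and follows essentially the same route as the paper: the same reduction via \cref{cor:sheaf_cond} and duality to the statement that $\bI \to \bI/(i \leq j) \times_{\bI/(i = j)} \bI/(j \leq i)$ is an isomorphism, the same appeal to \cref{th:principal_cong_in_distrib_lattice} for both halves, and the same structure of uniqueness plus an explicit gluing witness. Your witness $(a \wedge j) \vee (b \wedge i) \vee (a \wedge b)$ is the lattice-theoretic dual of the paper's $(x \vee y) \wedge (x \vee i) \wedge (y \vee j)$ (by the uniqueness you both prove, these denote the same element under the compatibility hypotheses), and your cancellation identity is precisely the uniqueness of relative complements in a distributive lattice that the paper invokes.
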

\begin{proof}
    By the sheaf condition, \cref{cor:sheaf_cond}, $\bI$ is simplicial if and only if for all $i, j : \bI$ we have
    \[ \bI \simeq \lim( \bI/(i \leq j) \times \bI/(j \leq i) \rightrightarrows \bI/(i \leq j) \times \bI/(i = j) \times \bI/(j \leq i) )\]
    Simplifying this further it is equivalent to 
    \[ \bI \simeq \lim ( \bI/(i \leq j) \times \bI/(j \leq i) \rightrightarrows \bI/(i = j) )\]
    That is given $w_1 : \bI/(i \leq j)$ and $w_2 : \bI/(j \leq i)$, whose projections into $\bI/(i = j)$ are equal,
    there is a unique $z : \bI$ projecting onto $w_1$ and $w_2$.
    We prove this now.

    We start with uniqueness. Suppose $z_1, z_2 : \bI$ both project onto $w_1$ and $w_2$.
    Then we have $z_1 \equiv z_2 \pmod{j = i \vee j}$ and $z_1 \equiv z_2 \pmod{ i = i \vee j }$.
    Then by \cref{th:principal_cong_in_distrib_lattice} we have 
        \[ z_1 \wedge j = z_2 \wedge j \]  
        \[ z_1 \vee i \vee j = z_2 \vee i \vee j \] 
        \[ z_1 \wedge i = z_2 \wedge i \] 
    Combining using distributivity further we have 
    \[ z_1 \wedge (i \vee j) = z_2 \wedge (i \vee j)\]
    and so we can see that $z_1, z_2$ are complements of $i \vee j$ in the interval $[z_1 \wedge (i \vee j), z_1 \vee i \vee j] \subseteq \bI$.
    But complements are unique in distributive lattices so $z_1 = z_2$.

    Now suppose we have $w_1, w_2$ as in the setup.
    Write $w_1 = [x]$ and $w_2 = [y]$ for some $x, y : \mathbb{I}$.
    First we note that since $x \equiv y \pmod{i = j}$, and noting that this is equivalent to $x \equiv y \pmod{i \wedge j = i \vee j}$, by \cref{th:principal_cong_in_distrib_lattice} we deduce that
    \[ x \wedge i \wedge j = y \wedge i \wedge j \]
    \[ y \vee i \vee j = y \vee i \vee j \] 
    We choose $z = (x \vee y) \wedge (x \vee i) \wedge (y \vee j)$.
    Then we have
    \begin{align*}
        z &= (x \vee y) \wedge (x \vee i) \wedge (y \vee j) \\
          &\equiv (x \vee y) \wedge (x \vee i) \wedge (y \vee i \vee j) \pmod{i \leq j} \\
          &= (x \vee y) \wedge (x \vee i) \wedge (x \vee i \vee j) \\
          &\equiv (x \vee y) \wedge (x \vee j) \wedge (x \vee j) \pmod{i \leq j} \\
          &= x \vee (y \wedge i \wedge j) \\
          &= x \vee (x \wedge i \wedge j) \\
          &= x
    \end{align*}
    so that $z \equiv x \pmod{i \leq j}$.
    Dually, swapping $x,y$ and $i,j$ shows that $z \equiv y \pmod{j \leq i}$ and hence we are done.
\end{proof}

\begin{corollary}
    For every finitely presented $\mathbb{I}$-algebra $A$, the type $\Spec(A)$ is simplicial.
\end{corollary}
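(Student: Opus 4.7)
The plan is to exploit the strong closure properties of the simplicial modality together with the previously established fact that $\bI$ itself is simplicial. Since the simplicial modality is topological (it is nullification at the propositions $(i \leq j) \vee (j \leq i)$), it is in particular a lex modality, so the subuniverse of simplicial types is closed under $\Pi$, $\Sigma$, and identity types, and contains the unit.

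First I would unfold the definition of $\Spec(A)$ using a chosen finite presentation. Working merely (inside propositional truncation), pick $A = \bI[x_1,\ldots,x_n]/(s_1 = t_1, \ldots, s_m = t_m)$. Then by definition of $\Spec$ as $\Hom_\bI(A, \bI)$, we have an equivalence
\[ \Spec(A) \;\simeq\; \sum_{a : \bI^n} \prod_{k = 1}^{m} s_k(a) = t_k(a). \]
Being simplicial is a proposition (it is membership in a reflective subuniverse), so we are allowed to strip the truncation and reason with a chosen presentation.

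Next I would assemble simpliciality by structural induction on this description. The type $\bI$ is simplicial by the preceding lemma; finite products $\bI^n$ of simplicial types are simplicial since reflective subuniverses are closed under $\times$; each identity type $s_k(a) = t_k(a)$ inside $\bI$ is simplicial since the simplicial modality preserves identity types; the finite dependent product $\prod_{k} s_k(a) = t_k(a)$ is simplicial by closure under $\Pi$; and finally the $\Sigma$-type is simplicial because the simplicial modality, being a (lex) modality, has its subuniverse closed under $\Sigma$. Chaining these closure properties gives simpliciality of $\Spec(A)$.

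There is no real obstacle here beyond bookkeeping: the only subtle point is that the presentation $A = \bI[x_1,\ldots,x_n]/(s_1 = t_1, \ldots)$ is only merely given, but since the conclusion (being simplicial) is a proposition, this is harmless. Everything else is a direct application of the closure properties of lex modalities recorded earlier. Alternatively one could give a slightly slicker argument by noting that the preceding lemma shows $\bI$ is a sheaf for the simplicial presentation, so every affine is a sheaf by \cref{lem:subcan} applied to the simplicial presentation (which is presented by sums of affines of the required form), but the structural argument above is the most transparent.
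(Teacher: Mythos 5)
Your proof is correct and takes essentially the paper's route: the corollary is immediate from \cref{lem:subcan} together with the preceding lemma that $\bI$ is simplicial, and your structural argument (unfolding a merely-given presentation of $A$ into $\sum_{a:\bI^n}\prod_k s_k(a)=t_k(a)$ and chaining the closure of modal types under products, identity types, $\Pi$, and $\Sigma$, using that being modal is a proposition) is exactly the content of the proof of \cref{lem:subcan}. The alternative you mention at the end is precisely the intended one-line derivation, and, as the paper remarks, neither route needs the presentation to be $\Sigma$-closed, since being a sheaf here is just being modal for the nullification.
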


\begin{remark}
    Note that the previous result did not rely on the projectivity axiom, or the constancy of maps to the natural numbers, since we do not need that we have a presentation of a modality here.
    So this result really does reduce the number of dependencies required to show $\mathbb{I}$ is simplicial. 
\end{remark}

The generators of this presentation as written are not enough to provide a description of covers, since they aren't closed under sums.
To rectify this we extend the presentation to have chains of elements of length $n$.
For an inequality $(a \leq b)$ we write $(a \leq b)^{-1} = b \leq a$
Then we let $\mathrm{Simp}$ be the presentation generated by 
\[ \sum_{\sigma : \mathrm{Fin}(n) \to \{-1, 1\}} \Spec( \mathbb{I} / (i_1 \leq j_1)^{\sigma(1)}, \ldots, (i_n \leq j_n)^{\sigma(n)})\] 
for $i_1, \ldots, i_n, j_1, \ldots, j_n : \mathbb{I}$
Then these are closed under $\sigma$ via an analogous argument to synthetic algebraic geometry.

\begin{lemma}
    Every cover in the simplicial topology of $\Spec(A)$ is given by 
    \[\sum_{\sigma : \mathrm{Fin}(n) \to \{-1, 1\}} \Spec(A / ((a_1 \leq b_1)^{\sigma(1)}, \ldots, (a_n \leq b_n)^{\sigma(n)}) ) \]
    where $a_i, b_i : A$ and $\sigma : \mathrm{Fin}(n) \to \{ -1, 1\}$.
\end{lemma}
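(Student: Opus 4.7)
The plan is to imitate the earlier classification of Zariski covers. Let $Z \to \Spec(A)$ be a $\mathrm{Simp}$-cover. By duality it is classified by its fibre family $r : \Spec(A) \to \mathrm{Simp}$, and by the definition of the generators of $\mathrm{Simp}$, for each $x : \Spec(A)$ we have a mere equality
\[ r(x) = \sum_{\sigma : \mathrm{Fin}(n) \to \{-1, 1\}} \Spec(\bI/((i_1 \leq j_1)^{\sigma(1)}, \ldots, (i_n \leq j_n)^{\sigma(n)})) \]
for some $n : \bN$ and $i_k, j_k : \bI$ depending on $x$.

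First I would push the dependent product over $x$ inside the truncation using projectivity of $\Spec(A)$, extracting genuine functions: a length $\phi : \Spec(A) \to \bN$ together with elements $i_k(x), j_k(x) : \bI$ for $k \leq \phi(x)$. By the constancy axiom for maps $\Spec(A) \to \bN$, $\phi$ is a constant $n$, and the external index set $\mathrm{Fin}(n) \to \{-1, 1\}$ is then fixed once and for all. By Blechschmidt duality, the $2n$ functions $i_k, j_k : \Spec(A) \to \bI$ correspond to elements $a_k, b_k : A$.

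Next I would compute the total space of the cover. Since the outer index is a finite external set of size $2^n$, I can distribute $\sum_{x : \Spec(A)}$ through the finite coproduct over $\sigma$, and for each fixed $\sigma$ apply the computation from the proof that affines are closed under $\Sigma$. This identifies
\[ \sum_{x : \Spec(A)} \Spec(\bI/((i_1(x) \leq j_1(x))^{\sigma(1)}, \ldots)) = \Spec(A/((a_1 \leq b_1)^{\sigma(1)}, \ldots, (a_n \leq b_n)^{\sigma(n)})) \]
using the pushout description $A \otimes_\bI \bI/(\text{relations}) = A/(\text{relations})$ and that $\Spec$ preserves limits; commuting with the finite sum over $\sigma$ then yields exactly the form claimed.

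The main bookkeeping obstacle is verifying this last identification carefully: one must check that the $x$-indexed family of lattice relations $(i_k(x) \leq j_k(x))^{\sigma(k)}$ on $\bI$ assembles, after changing base along the structure map $\bI \to A$, into the single relation $(a_k \leq b_k)^{\sigma(k)}$ on $A$. This is directly analogous to the Zariski step (where one observes $A \otimes_R R_{f_i} = A_{f_i}$), and requires no new idea beyond the pushout-of-finitely-presented-algebras computation already used for $\Sigma$-closure of affines.
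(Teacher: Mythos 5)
Your proposal is correct and takes essentially the same route the paper intends: the paper gives no separate proof of this lemma, deferring to ``an analogous argument to synthetic algebraic geometry,'' and your argument is exactly the Zariski cover classification transported to $\mathrm{Simp}$ --- represent the cover by its fibre family, use projectivity of $\Spec(A)$ to strip the truncation, use constancy of maps to $\mathbb{N}$ to fix $n$, and use duality to turn the functions $i_k, j_k : \Spec(A) \to \mathbb{I}$ into elements $a_k, b_k : A$. The one piece of bookkeeping genuinely new to this case --- distributing $\sum_{x : \Spec(A)}$ over the finite coproduct indexed by $\sigma : \mathrm{Fin}(n) \to \{-1,1\}$ (legitimate since $n$, and hence the index set, is fixed) and then applying the $\Sigma$-closure-of-affines computation fibrewise for each $\sigma$ --- is handled correctly in your writeup.
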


We thus obtain a corresponding simplicial local choice axiom:
\begin{lemma}
    For any finitely presented $\bI$-algebra $A$ and any simplicially merely inhabitted type family $B : \Spec(A) \to \UU$, 
    there are $a_i, b_i : A$ and choices of sections 
    \[s_{\sigma} : (x : \Spec(A / ((a_1 \leq b_1)^{\sigma(1)}, \ldots, (a_n \leq b_n)^{\sigma(n)}))  ) \to B(x)\] 
    for all $\sigma : \mathrm{Fin}(n) \to \{ -1, 1\}$. 
\end{lemma}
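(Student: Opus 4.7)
The statement is the simplicial analogue of the Zariski local choice lemma proved above, so the plan is to apply the same pattern that succeeded there. The two ingredients are the general $T$-local partial choice principle (\cref{lem:local_choice}) and the preceding structural lemma that characterises the shape of every simplicial cover of $\Spec(A)$.

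First I would set up the input to \cref{lem:local_choice}. Take $D := \Spec(A)$, which is projective by our axiom on spectra of finitely presented algebras, and take $C := \Spec(A)$ with $g = \id$. For the vertical map, let $X := \sum_{x : \Spec(A)} B(x)$ with $f := \pi_1$; by construction $\fib_f(x) = B(x)$, so the assumption that $B$ is simplicially merely inhabited is precisely the assertion that $f$ is $\mathrm{Simp}$-surjective. Applying \cref{lem:local_choice} then merely yields a type $Z$ together with a $\mathrm{Simp}$-cover $\phi : Z \to \Spec(A)$ and a lift $\tilde s : Z \to X$ over $\Spec(A)$, which by construction is the same data as a dependent map $z \mapsto s(z) : B(\phi(z))$.

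Next I would plug $Z$ into the preceding lemma describing every simplicial cover of a spectrum: this merely identifies $Z$ with a sum of the form $\sum_{\sigma : \mathrm{Fin}(n) \to \{-1,1\}} \Spec\bigl(A/((a_1 \leq b_1)^{\sigma(1)}, \ldots, (a_n \leq b_n)^{\sigma(n)})\bigr)$ for some $a_i, b_i : A$, fibred over $\Spec(A)$ in the expected way. Restricting $\tilde s$ to each summand gives the component sections $s_\sigma$ required by the statement.

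The only real subtlety is bookkeeping: the conclusion of \cref{lem:local_choice} and of the cover-characterisation lemma both live under propositional truncation, so strictly speaking one extracts the entire witness (the naturals $n$, the $a_i, b_i$, and the family $(s_\sigma)_\sigma$) under a single outer truncation, which is exactly what the $\exists$-shaped conclusion demands. Since \cref{lem:local_choice} already delivers $\phi$ and $\tilde s$ together, no further choice is needed beyond identifying $Z$ with its concrete description, and the result follows.
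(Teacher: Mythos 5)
Your proof is correct and is exactly the argument the paper intends: the paper states this lemma without proof, as an immediate consequence (``We thus obtain\ldots'') of the $T$-local partial choice principle of \cref{lem:local_choice} together with the preceding cover-characterisation lemma, which are precisely the two ingredients you combine, mirroring the Zariski case. The one implicit prerequisite worth a line---that $\mathrm{Simp}$ is a \emph{projective} presentation, so that \cref{lem:local_choice} applies at all---holds because its generators are finite sums of affines, which are projective by the axiom on spectra and closure of projectives under sums.
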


\section{Future work and Conclusion}

We have continued work in extending the ways homotopy type theory can be used to study higher topos theory, specifically in creating useful tools for creating topological modalities.
Establishing presentaitons allowed us to show extensions of classical topos theoretic principles, specifically in the form of higher sheaf conditions.
Using principles that hold true in presheaf topoi, we define projective presentations, and show these can be used to explicitly calculate sheafification of propositions.
We also show how choice principles descend to subuniverses.
Finally, we apply this to the example of a classifying category for algebraic theories.
We specialise to show results about triangulated type theory and synthetic algebraic geometry.

In the future, we hope that homotopy type theory and related type theories will continue to improve as places in which to perform synthetic mathematics.
We see the axioms established in section 6 as being generic enough to be applicable to many algebraic settings, and we hope that both duality and local-choice results get use in 
other ares of synthetic mathematics.
A possible future direction is to finding a setting in which these axioms compute, or are derivable in a computational setting, perhaps by restricting the type of theory further.
With work being done on introducing tiny objects into HoTT, we also hope that we can consider presentations generated by tiny objects and gain full descriptions of sheafification,
rather than just for propositions, among other computational tools.

\printbibliography

\end{document}